\documentclass{tlp} 

\usepackage{times,helvet,courier}

\usepackage[utf8]{inputenc}
\usepackage[english]{babel}
\usepackage{url}
\usepackage{ifpdf}

\usepackage{amsmath,amsfonts,amssymb}
\usepackage{tikz}
\usetikzlibrary{arrows,automata}
\usetikzlibrary{shapes}

\newif\ifrecentpgf
\def\requiredversion{2.00}
\ifx\pgfversion\requiredversion\recentpgftrue\else\recentpgffalse\fi

\newcommand{\naf}[1]{\ensuremath{\mathit{not}~{#1}}} 

\newcommand{\body}[1]{\ensuremath{\mathit{body}(#1)}} 
\newcommand{\head}[1]{\ensuremath{\mathit{head}(#1)}}
\newcommand{\poslits}[1]{\ensuremath{#1^+}}
\newcommand{\neglits}[1]{\ensuremath{#1^-}}

\newcommand\BS[1]{\text{\fontshape{n}\fontseries{bx}\selectfont#1}}
\newcommand\plus{\BS{+}}
\newcommand\minus{\BS{--}}
\newcommand\indet{\BS{?}}
\newcommand{\dom}[1]{\ensuremath{D(#1)}}
\newcommand{\vertices}[1]{\ensuremath{V[#1]}}
\newcommand{\edges}[1]{\ensuremath{E[#1]}}

\newcommand{\atomfont}[1]{\text{\textit{#1}}}
\newcommand{\vertex}[1]{\ensuremath{\atomfont{vertex}(#1)}}
\newcommand{\avertex}[1]{\ensuremath{\atomfont{vertexMIC}(#1)}}
\newcommand{\iinput}[1]{\ensuremath{\atomfont{input}(#1)}}
\newcommand{\edge}[2]{\ensuremath{\atomfont{edge}(#1,#2)}}
\newcommand{\aedge}[2]{\ensuremath{\atomfont{edgeMIC}(#1,#2)}}

\newcommand{\vlabel}[2]{\ensuremath{\atomfont{labelV}(#1,#2)}}
\newcommand{\vlabell}[3]{\ensuremath{\atomfont{labelV'}(#1,#2,#3)}}
\newcommand{\elabel}[3]{\ensuremath{\atomfont{labelE}(#1,#2,#3)}}
\newcommand{\elabell}[4]{\ensuremath{\atomfont{labelE'}(#1,#2,#3,#4)}}
\newcommand{\observed}[1]{\ensuremath{\atomfont{obs}(#1)}}
\newcommand{\obsvlabel}[2]{\ensuremath{\atomfont{observedV}(#1,#2)}}
\newcommand{\obselabel}[3]{\ensuremath{\atomfont{observedE}(#1,#2,#3)}}
\newcommand{\get}[2]{\ensuremath{\atomfont{get}(#1,#2)}}
\newcommand{\influence}[2]{\ensuremath{\atomfont{receive}(#1,#2)}}
\newcommand{\influencel}[3]{\ensuremath{\atomfont{receive'}(#1,#2,#3)}}
\newcommand{\activ}[1]{\ensuremath{\atomfont{active}(#1)}}
\newcommand{\inactiv}[1]{\ensuremath{\atomfont{inactive}(#1)}}
\newcommand{\contrary}[2]{\ensuremath{\atomfont{opposite}(#1,#2)}}
\newcommand{\bottom}[0]{\ensuremath{\atomfont{bottom}}}
\newcommand{\sedge}[2]{\ensuremath{\atomfont{edges}(#1,#2)}}
\newcommand{\reach}[2]{\ensuremath{\atomfont{reach}(#1,#2)}}
\newcommand{\cycle}[2]{\ensuremath{\atomfont{cycle}(#1,#2)}}

\newcommand{\sysfont}[1]{\textit{#1}}
\newcommand{\lparse}[0]{\sysfont{lparse}}
\newcommand{\cmodels}[0]{\sysfont{cmodels}}
\newcommand{\clasp}[0]{\sysfont{clasp}}
\newcommand{\claspD}[0]{\sysfont{claspD}}
\newcommand{\dlv}[0]{\sysfont{dlv}}
\newcommand{\gnt}[0]{\sysfont{gnt}}
\newcommand{\gringo}[0]{\sysfont{gringo}}

\newtheorem{definition}{Definition}[section]
\newtheorem{theorem}{Theorem}[section]

\newtheorem{corollary}[theorem]{Corollary}



\title[Detecting Inconsistencies in Large Biological Networks with ASP]{Detecting Inconsistencies in Large Biological Networks with Answer Set Programming}

\author[Martin~Gebser \and Torsten~Schaub \and Sven~Thiele \and Philippe~Veber]{%
  Martin~Gebser
  and
  Torsten~Schaub
  and
  Sven~Thiele
  \\
  University of Potsdam,
  Germany 
  \and
  Philippe~Veber
  \\
  Institut Cochin,
  Paris,
  France
  }

\submitted{21st June 2009}
\revised{15th October 2009}
\accepted{9th December 2009}

\begin{document}
\renewcommand\capsulename{Note}
\maketitle
\begin{abstract}
We introduce an approach to detecting inconsistencies in large biological
networks by using Answer Set Programming (ASP).
To this end,
we build upon a recently proposed notion of consistency between
biochemical/genetic reactions and high-throughput profiles of cell activity.
We then present an approach based on ASP to check
the consistency of large-scale data sets.
Moreover, we extend this methodology to provide explanations for inconsistencies
by determining minimal representations of conflicts.
In practice, this can be used 
to identify unreliable data or
to indicate missing reactions.
\end{abstract}
\begin{keywords}
  answer set programming, bio-informatics, consistency, diagnosis
\end{keywords}

\begin{capsule}
  To appear in Theory and Practice of Logic Programming (TPLP).
\end{capsule}

\section{Introduction}\label{sec:introduction}

Molecular biology has seen a technological revolution with the establishment of
high-throughput methods in the last years.
These methods allow for gathering multiple orders of magnitude more measured
data than was procurable before.
Furthermore, there is an increasing number of biological repositories on
the web, such as 
KEGG, 
Biomodels, 
Reactome, 
MetaCyc, 
and others,
incorporating thousands of biochemical reactions and genetic regulations.
However, both measurements as well as biological networks are prone to
considerable incompleteness, heterogeneity, and mutual inconsistency, which
makes it highly non-trivial to draw biologically meaningful conclusions in an
automated way.
As a consequence, appropriate representation and powerful reasoning tools are needed
to model complex biological systems, in the face of incompleteness
and inconsistency.

In this paper,
we deal with the analysis of high-throughput measurements in molecular biology,
like microarray data or metabolic profiles. 
Up to now, it is still common practice to use expression profiles
merely for detecting over- or under-expressed genes under specific conditions,
leaving the task of making biological sense out of 
a multitude of gene identifiers to
human experts.
However,
many efforts have also been made to better utilize
high-throughput data, in particular, by integrating them into
large-scale models of transcriptional regulations or metabolic
processes \cite{Friedman2000,revue-fba}.  

One possible approach consists of investigating the compatibility
between experimental measurements and knowledge
available in reaction databases.
This can be done by using formal frameworks, for instance, the ones developed
in \cite{pmid14597655} and \cite{pmid16556482}.
A crucial feature of this methodology is its ability to cope with qualitative knowledge
(for instance, reactions lacking kinetic details) and noisy data.
In what follows, we rely upon the so-called \emph{Sign Consistency Model} (SCM)
due to \cite{pmid16556482}. 
SCM imposes constraints between experimental measurements and a
graph representation of cellular interactions, called an
\emph{influence graph} \cite{SouleComplexus}.
Such a graph provides an over-approximation of the actual biological model, 
where an ``influence'' is modeled by a disjunctive causal rule.
This is particularly well-suited for dealing with incomplete (missing reactions)
or unreliable (noisy data) information.

Building on the SCM framework,
we develop declarative techniques based on \emph{Answer Set Programming}
(ASP) \cite{baral02a,gelfond08a} to detect and explain inconsistencies in large data sets.
This approach has several advantages.
First, it allows us to formulate biological problems in a declarative way,
thus easing the communication with biological experts.
Second, although we do not detail it here,
the rich modeling language facilitates integrating different
knowledge representation and reasoning techniques, like abduction, explanation,
planning, prediction, etc., in a uniform and transparent way
(cf.~\cite{geguivscsithve09a} for such extensions).
And finally, modern ASP solvers are based on advanced Boolean constraint solving
technology and thus provide us with highly efficient inference engines.
Apart from modeling the aforementioned biological problems in ASP,
our major concern lies with the scalability of the approach.
To this end, we 
apply our methods to the gene-regulatory network of yeast \cite{guelzim,snf2-ko} and,
moreover, design an artificial yet biologically meaningful benchmark suite 
indicating that an ASP-based approach scales well on the considered class
of applications.
Notably, to the best of our knowledge, 
the functionalities we provide go beyond the ones of the only comparable
approach \cite{gubomosi09a}.

To begin with,
we introduce SCM in Section~\ref{sec:influence}.
Section~\ref{sec:asp} gives the syntax and semantics of ASP used in
our application.
In Section~\ref{sec:checking}, we develop an ASP formulation of
checking the consistency between experimental profiles and influence graphs.
We further extend this approach in Section~\ref{sec:diagnosis}
to identifying minimal representations of conflicts
if the experimental data is inconsistent with an influence graph.
In Section~\ref{sec:refinements},
we describe simple yet effective techniques for input reduction
along with a connectivity property that is used to refine the
encoding presented in Section~\ref{sec:diagnosis}.
Section~\ref{sec:benchmark} is dedicated to an empirical evaluation of our
approach along with an exemplary case study on yeast.
For making our methods easily accessible,
an available web service is presented in Section~\ref{sec:web}.
Section~\ref{sec:discussion} concludes the paper with a discussion and 
outlook on future work.
Finally,
\ref{app:proof:consistency} and \ref{app:proof:diagnosis} contain proofs of
soundness and completeness for our problem formulations in ASP.


\section{Influence Graphs and Sign Consistency Constraints}\label{sec:influence}

Influence graphs \cite{SouleComplexus} are a common representation for a wide range of
dynamical systems.  
In the field of genetic networks, they have been
investigated for various classes of systems, 
ranging from ordinary differential equations \cite{SouleRevue} to synchronous
\cite{ruet} and asynchronous \cite{richard07} Boolean networks. 
Influence graphs have also been introduced in the field of qualitative reasoning
\cite{kuipers94a} to describe physical systems where a detailed quantitative
description is unavailable. 
In fact, this has been the main motivation for using influence graphs for knowledge
representation in the context of biological systems.

An \emph{influence graph} is a directed graph whose vertices are the
input and state variables of a system and whose edges express the effects of
variables on each other.
\begin{definition}[Influence Graph]
  An \emph{influence graph} is a directed graph $(V,E,\sigma)$,
  where $V$ is a set of vertices,
  $E$ a set of edges, and 
  $\sigma : E \rightarrow \{\plus,\minus\}$ a (partial) labeling of the edges.
\end{definition}
An edge $j {\,\rightarrow\,} i$ means that the variation of~$j$ in
time influences the level of~$i$.
Every edge $j {\,\rightarrow\,} i$ of an influence graph can be labeled with a sign, either~$\plus$
or~$\minus$, denoted by~$\sigma(j,i)$,
where~$\plus$ ($\minus$)
indicates that $j$ tends to increase (decrease)~$i$.
An example influence graph is given in
Figure~\ref{fig:operon}; it represents a simplified model of the operon
lactose in \emph{E.~coli}. 

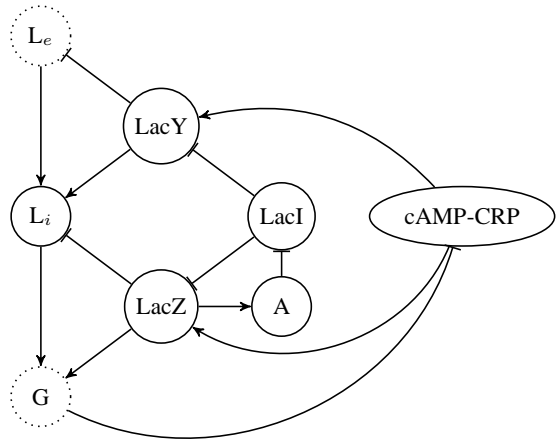
\begin{figure}
  \centering
%

\begin{tikzpicture}[->,semithick,>=stealth',scale=0.8]
  \tikzstyle{every state}=[draw, circle, fill=none,text=black]
  \node[state,dotted] (Le) at (0,6)  {L$_e$};
  \node[state] (Li) at (0,3) {L$_i$};
  \node[state,dotted] (G) at (0,0) {G};
  \node[state] (LacY) at (2,4.5) {LacY};
  \node[state] (LacZ) at (2,1.5) {LacZ};
  \node[state] (LacI) at (4,3) {LacI};
  \node[state] (A) at (LacZ -| LacI) {A};
  \node[state,ellipse] (cAMPCRP) at (7,3) {cAMP-CRP};

  \path
     (Le) edge (Li)
     (Li) edge (G)
     (LacY) edge[-|] (Le)
     (LacY) edge (Li)
     (LacZ) edge[-|] (Li)
     (LacZ) edge (G)
     (LacI) edge[-|] (LacY)
     (LacI) edge[-|] (LacZ)
     (LacZ) edge (A)
     (A) edge[-|] (LacI)
     (cAMPCRP) edge[bend right=30] (LacY)
     (cAMPCRP) edge[bend left=50] (LacZ)
     (G) edge[bend right=50,-|] (cAMPCRP)
     ;
\end{tikzpicture}
\caption{Simplified model of operon lactose in \emph{E.~coli},
  represented as an influence graph. 
  The vertices represent either genes,
  metabolites, or proteins, while the edges indicate the regulations
  among them. 
  Edges with an arrow stand for positive
  regulations (activations), while edges with a tee head stand for
  negative regulations (inhibitions). 
  Vertices G and L$_e$ are
  considered to be inputs of the system, that is, 
  their signs are not constrained via their incoming edges.}
  \label{fig:operon}
\end{figure}

In SCM, \emph{experimental profiles} are supposed to come from steady
state shift experiments where, initially, the system is at steady state, 
then perturbed using control parameters, and eventually,
it settles into another steady state.
It is assumed that the data measures the differences between 
the initial and the final state.
Thus, for genes, proteins, or metabolites,
we know whether the concentration has increased or decreased,
while quantitative values are unavailable, unessential, or unreliable.
By~$\mu(i)$, we denote the sign, again either $\plus$ or~$\minus$,
of the variation of a species~$i$ between the initial and the final condition.
One can easily enhance this setting to also considering
null (or more precisely, non-significant) variations,
by exploiting the concept of sign algebra \cite{kuipers94a}.


Given an influence graph
(as a representation of cellular interactions) and
a labeling of its vertices with signs
(as a representation of experimental profiles),
we now describe the constraints that relate both.
Informally, for every non-input vertex~$i$,  
its variation~$\mu(i)$ ought to be explained by the
influence of at least one predecessor~$j$ of~$i$ in the influence graph.
Thereby, the \emph{influence} of~$j$ on~$i$ is given by the
sign $\mu(j)\sigma(j,i)\in\{\plus,\minus\}$,
where the multiplication of signs is derived from that of numbers.
Sign consistency constraints can then be formalized as follows.
\begin{definition}[Sign Consistency Constraints]\label{def:consistency}
  Let $(V,E,\sigma)$ be an influence graph and
  $\mu : V \rightarrow \{\plus,\minus\}$ a (partial) vertex labeling.

  Then, $(V,E,\sigma)$ and $\mu$ are \emph{consistent},
  if there are some total extensions
  $\sigma' : E \rightarrow \{\plus,\minus\}$ of~$\sigma$
  and $\mu' : V \rightarrow \{\plus,\minus\}$ of~$\mu$
  such that $\mu'(i)$ is consistent for each non-input vertex $i\in V$,
  where $\mu'(i)$ is consistent,
  if there is some edge $j {\,\rightarrow\,} i$ in~$E$ such that
  $\mu'(i)=\mu'(j)\sigma'(j,i)$.
\end{definition}
%
%
Note that labelings~$\sigma$ and~$\mu$ of vertices and edges, respectively,
are admitted to be partial.
This occurs frequently in practice where the kind of an influence
may depend on environmental factors or experimental data may not
include all elements of a biological system.
In order to decide whether a partially labeled influence graph and
a partial experimental profile are mutually consistent,
we thus consider the possible totalizations of them.
If at least one total edge and one total vertex labeling
(extending the given labelings)
are such that the signs of all non-input vertices are explained, it is
sufficient for mutual consistency.

\begin{table}
  \begin{center}
  \begin{tabular}{|c|cccccccc|}
    \cline{1-9}
    Species & L$_e$ & L$_i$ & G & LacY & LacZ & LacI & A
    & cAMP-CRP\\
    \cline{1-9}
    $\mu_1$ & \minus & \minus & \minus & \minus & \minus & \plus & \minus&\plus \\ 
    $\mu_2$ & \plus &\plus & \minus&\plus &\minus &\plus &\minus &\minus \\ 
    $\mu_3$ & \plus & \indet & \minus &\indet &\indet &\plus & \indet&\indet \\ 
    $\mu_4$ &\indet&\indet &\indet &\minus &\plus &\indet &\indet &\plus \\ 
    \cline{1-9}
  \end{tabular}
  \end{center}%
  \caption{Some vertex labelings (reflecting measurements of two steady states) for the
    influence graph depicted in Figure~\ref{fig:operon};
    unobserved values indicated by question mark `\indet'.\label{tab:operondata}}  
\end{table}
Table~\ref{tab:operondata} gives four vertex labelings
for the influence graph in Figure~\ref{fig:operon}.
Total labeling $\mu_1$ is consistent with the influence graph: the variation of
each vertex (except for input vertex~L$_e$)
can be explained by the effect of one of its regulators.
For instance, in $\mu_1$, LacY receives a positive influence from
cAMP-CRP as well as a negative influence from LacI, 
the latter accounting for the decrease of LacY.
The second labeling, $\mu_2$, is not consistent:
LacY receives only negative influences from cAMP-CRP and
LacI, and its increase cannot be explained.
Partial vertex labeling $\mu_3$ is consistent with the influence graph
in Figure~\ref{fig:operon}, as setting the signs of L$_i$, LacY, LacZ, A, and cAMP-CRP
to \plus, \minus, \minus, \minus, and \plus, respectively,
extends~$\mu_3$ to a consistent total labeling. 
In contrast, $\mu_4$ cannot be extended consistently.



\section{Answer Set Programming}\label{sec:asp}

This section provides a brief introduction to ASP,
a declarative problem solving paradigm
offering a rich modeling language \cite{lparseManual,potasscoManual}
along with highly efficient inference engines
based on Boolean constraint solving technology \cite{gilima06a,gekanesc07b,drgegrkakoossc08a}. 
The basic idea of ASP is to encode a problem as a logic program such that its
answer sets represent solutions. 

In view of our application,
we take advantage of the elevated expressiveness of disjunctive programs,
capturing problems at the second level of the polynomial
hierarchy \cite{eitgot95a}.
A \emph{disjunctive logic program} $P$ 
is a finite set of \emph{rules} of the form
\begin{equation}\label{eq:r}
a_1;\dots;a_l \leftarrow a_{l+1},\dots,a_m,\naf{a_{m+1}},\ldots,\naf{a_n}\ \text{,}
\end{equation}
where $a_i$ is an \emph{atom} for $1\!\leq\! i\!\leq\! n$.
A rule $r$ as in~(\ref{eq:r}) is called
a \emph{fact} if $l\!=\!m\!=\!n\!=\!1$, 
and an \emph{integrity constraint} if $l\!=\!0$.
Let
$\head{r}=\{a_1,\dots, a_l\} $ be the \emph{head} of $r$,
\(
\body{r}=\{a_{l+1},\dots,a_m,\naf{a_{m+1}},\dots,\linebreak[1]\naf{a_n}\}
\)
be the \emph{body} of~$r$,
as well let
\(
\poslits{\body{r}}=\{a_{l+1},\dots,a_m\}
\)
and
\(
\neglits{\body{r}}=\{a_{m+1},\dots,a_n\}
\).
%

An interpretation is represented by the set of atoms that are true in it. 
A \emph{model} of a program $P$ is an interpretation in
which all rules of $P$ are true according to the standard definition of
truth in propositional logic.
Apart from letting `$;$' and '$,$' stand for disjunction and conjunction,
respectively,
this implies treating rules and default negation `$\mathit{not}$' as
implications and classical negation, respectively.
Note that the (empty) head of an integrity constraint is false in every interpretation,
while the empty body is true in every interpretation.
Answer sets of~$P$ are particular models of~$P$
satisfying an additional stability criterion.
Roughly, a set~$X$ of atoms is an answer set,
if for every rule of form~(\ref{eq:r}),
$X$ contains a minimum of atoms among $a_1,\dots,a_l$ 
whenever $a_{l+1},\dots,a_m$ belong  to~$X$ 
and no   $a_{m+1},\dots,a_n$ belongs to~$X$.
However, the disjunction in heads of rules, in general, is not exclusive.
Formally, an \emph{answer set}~$X$ of a program~$P$ is a $\subseteq$-minimal model of
\[
\{
\head{r}\leftarrow\poslits{\body{r}}
\mid 
r\in P,\neglits{\body{r}}\cap X=\emptyset
\}\ \text{.}
\]
For example, program
\(
\{a;b{\,\leftarrow}.\quad\!\!\!\! c;d{\,\leftarrow\,}a,\naf{b}.\quad\!\!\!\! {\leftarrow\,} b.\}
\)
has answer sets~$\{a,c\}$ and~$\{a,d\}$.

Although answer sets are usually defined on ground (i.e., variable-free) programs,
ASP allows for non-ground problem encodings,
where schematic rules stand for their ground instantiations.
Grounders, such as \gringo\ \cite{potasscoManual} and \lparse\ \cite{lparseManual},
are capable of combining a problem enco\-ding and an instance (typically a set of ground facts)
into an equivalent ground program, which is then processed by an ASP solver.
We follow this methodology and provide encodings for the problems considered below.


\section{Checking Consistency}\label{sec:checking}

We now come to the first main question addressed in this paper, namely,
how to check whether an experimental profile is consistent with
a given influence graph.
Note that, if the profile provides us with a sign for each
vertex of the influence graph,
the task can be accomplished 
simply by checking whether each non-input vertex receives
at least one influence matching its variation.
However, as soon as the experimental profile has missing values
(which is very likely in practice), 
the problem becomes NP-hard \cite{ECCS05}.
In fact, a Boolean satisfiability problem over clauses $  C_1,\dots,C_m  $ and variables
$  x_1,\dots,x_n  $ can be reduced as follows:
introduce unlabeled input vertices $x_1,\dots,x_n$, 
non-input vertices $C_1,\dots,C_m$ labeled~$\plus$,
and edges $x_j {\,\rightarrow\,} C_i$ labeled~$\plus$ ($\minus$) if
$x_j$ occurs positively (negatively) in $C_i$.
It is not hard to check that
the labeling of $C_1,\dots,C_m$ by~$\plus$ is consistent 
with the obtained influence graph iff the conjunction of
$  C_1,\dots,C_m  $ is satisfiable.

We next provide a logic program such that each of its answer sets matches a
consistent extension of vertex and edge labelings.
Our encodings as well as instances are available at \cite{bioasptoolchain}.
The program for consistency checking is composed of three parts, described in the following subsections.

\subsection{Problem Instance}\label{subsec:instance} 
An influence graph as well as an experimental profile are given by ground facts.
For each species~$i$,
we introduce a fact $\vertex{i}$, and for each edge $j {\,\rightarrow\,} i$,
a fact $\edge{j}{i}$. 
If $s\in\{\plus,\minus\}$ is known to be the variation
of a species~$i$ or the sign of an edge $j {\,\rightarrow\,} i$,
it is expressed by a fact $\obsvlabel{i}{s}$ or
$\obselabel{j}{i}{s}$, respectively.
Finally, a vertex~$i$ is declared to be input via a fact $\iinput{i}$.

For example, the negative regulation $\text{LacI} {\,\rightarrow\,} \text{LacY}$
in the influence graph shown in Figure~\ref{fig:operon} and observation~$\plus$
for LacI (as with $\mu_3$ in Table~\ref{tab:operondata})
give rise to the following facts:
\begin{equation}\label{eq:ex:lac:facts}
  \begin{array}{l}
    \vertex{\text{LacI}}. \\
    \vertex{\text{LacY}}. \\[1mm]
    \edge{\text{LacI}}{\text{LacY}}. \\[1mm]
    \obsvlabel{\text{LacI}}{\plus}. \\
    \obselabel{\text{LacI}}{\text{LacY}}{\minus}.
  \end{array}  
\end{equation}
Note that the absence of a fact of form $\obsvlabel{\text{LacY}}{s}$
means that the variation of LacY is unobserved (as with $\mu_3$).
In~(\ref{eq:ex:lac:facts}), we use $\text{LacI}$ and $\text{LacY}$ 
as names for constants associated with the species in Figure~\ref{fig:operon},
but not as first-order variables.
Similarly,
for uniformity of notations, $\plus$ and~$\minus$ are written in~(\ref{eq:ex:lac:facts})
for constants identifying signs.
%

\subsection{Generating Solution Candidates} 
As mentioned above,
our goal is to check whether an experimental profile is consistent with an
influence graph.
If so, it is witnessed by total labelings of the vertices and edges,
which are generated via the following rules:
\begin{equation}\label{eq:total}
  \begin{array}{r@{{}\leftarrow{}}l}
  \vlabel{V}{\plus}   ;\vlabel{V}{\minus}    & \vertex{V}.  \\
  \elabel{U}{V}{\plus};\elabel{U}{V}{\minus} & \edge{U}{V}.
  \end{array}  
\end{equation}

Moreover, the following rules ensure that known labels are
respected by total labelings:
\begin{equation}\label{eq:known}
  \begin{array}{r@{{}\leftarrow{}}l}
  \vlabel{V}{S}    & \obsvlabel{V}{S}.    \\
  \elabel{U}{V}{S} & \obselabel{U}{V}{S}.
  \end{array}  
\end{equation}

Note that the stability criterion for answer sets demands that
a known label derived via a rule in~(\ref{eq:known}) is also derived
via~(\ref{eq:total}), thus, excluding the opposite label.
In fact, the disjunctive rules used in this section could actually
be replaced with non-disjunctive rules via ``shifting'' \cite{geliprtr91},%
\footnote{Alternatively, one could also use cardinality constraints
  (cf.~\cite{lparseManual}),
  which would however preclude a comparison with \textit{dlv} in Section~\ref{sec:benchmark}.}
given that our first encoding results in a so-called \emph{head-cycle-free} (HCF) \cite{bendec94a}
ground program.
However, similar disjunctive rules are also used in Section~\ref{sec:diagnosis}
where they cannot be compiled away.
Also note that HCF programs, for which deciding answer set existence stays in NP,
are recognized as such by disjunctive ASP solvers \cite{dlv03a,drgegrkakoossc08a}.
Hence, the purely syntactic use of disjunction, as done here, is not harmful to efficiency.
%

The following ground rules are obtained by combining
the schematic rules in~(\ref{eq:total}) and~(\ref{eq:known}) with
the facts in~(\ref{eq:ex:lac:facts}):
\begin{equation}\label{eq:ex:lac:guess}
  \begin{array}{r@{{}\leftarrow{}}l}
  \vlabel{\text{LacI}}{\plus}   ;\vlabel{\text{LacI}}{\minus}    & \vertex{\text{LacI}}.  \\
  \vlabel{\text{LacY}}{\plus}   ;\vlabel{\text{LacY}}{\minus}    & \vertex{\text{LacY}}.  \\[1mm]
  \elabel{\text{LacI}}{\text{LacY}}{\plus};\elabel{\text{LacI}}{\text{LacY}}{\minus} &
    \edge{\text{LacI}}{\text{LacY}}. \\[1mm]
  \vlabel{\text{LacI}}{\plus}    & \obsvlabel{\text{LacI}}{\plus}.    \\
  \elabel{\text{LacI}}{\text{LacY}}{\minus} & \obselabel{\text{LacI}}{\text{LacY}}{\minus}.
  \end{array}  
\end{equation}
One can check that the program consisting of the facts
in~(\ref{eq:ex:lac:facts}) and the rules in~(\ref{eq:ex:lac:guess}) admits two answer sets,
the first one including $\vlabel{\text{LacY}}{\plus}$ and the second one including $\vlabel{\text{LacY}}{\minus}$.
On the remaining atoms, both answer sets coincide by containing the atoms in~(\ref{eq:ex:lac:facts})
along with $\vlabel{\text{LacI}}{\plus}$ and $\elabel{\text{LacI}}{\text{LacY}}{\minus}$.

\subsection{Testing Solution Candidates}
We now check whether generated total labelings
satisfy the sign consistency constraints stated in Definition~\ref{def:consistency},
requiring an influence of sign~$s$
for each non-input vertex~$i$ with variation~$s$.
We thus define $\influence{i}{s}$ to indicate that~$i$
receives an influence of sign~$s$:
%
\begin{equation}\label{eq:influence}
  \begin{array}{r@{{}\leftarrow{}}l}
  \influence{V}{\plus}  & \elabel{U}{V}{S}, \vlabel{U}{S}. \\
  \influence{V}{\minus} & \elabel{U}{V}{S}, \vlabel{U}{T}, S\neq T.
  \end{array}  
\end{equation}

Inconsistent labelings, where a non-input vertex does not receive
any influence matching its variation, are then ruled out by integrity
constraints of the following form:
\begin{equation}\label{eq:inconsistent}
  \begin{array}{@{{}\leftarrow{}}l}
  \vlabel{V}{S}, \naf{\influence{V}{S}}, \naf{\iinput{V}}.
  \end{array}  
\end{equation}
Note that the schematic rules in~(\ref{eq:influence}) and~(\ref{eq:inconsistent})
are given in the input language of grounder \gringo\ \cite{potasscoManual}.
This allows us to omit an explicit listing of some ``domain predicates'' in the bodies of rules,
which would be necessary when using \lparse\ \cite{lparseManual}.
At \cite{bioasptoolchain}, we provide encodings for \gringo\ and also (more verbose ones) for \lparse.

Starting from the answer sets described in the previous subsection, the included atoms
$\elabel{\text{LacI}}{\text{LacY}}{\minus}$
and $\vlabel{\text{LacI}}{\plus}$ allow us to derive $\influence{\text{LacY}}{\minus}$ via
a ground instance of the second rule in~(\ref{eq:influence}),
while $\influence{\text{LacY}}{\plus}$ is not derivable.
After adding $\influence{\text{LacY}}{\minus}$,
the solution candidate containing $\vlabel{\text{LacY}}{\minus}$ satisfies the ground
instance of the integrity constraint in~(\ref{eq:inconsistent}) obtained by 
substituting $\text{LacY}$ for~$V$ and~$\minus$ for~$S$.
Assuming LacI to be an input, as it can be declared via fact $\iinput{\text{LacI}}$,
we thus obtain an answer set containing $\vlabel{\text{LacY}}{\minus}$,
expressing a decrease of~LacY.
In contrast, since $\influence{\text{LacY}}{\plus}$ is underivable,
the solution candidate containing $\vlabel{\text{LacY}}{\plus}$ violates
the following ground instance of~(\ref{eq:inconsistent}):
\begin{equation*}
  \begin{array}{@{{}\leftarrow{}}l}
  \vlabel{\text{LacY}}{\plus}, \naf{\influence{\text{LacY}}{\plus}}, \naf{\iinput{\text{LacY}}}.
  \end{array}  
\end{equation*}
That is, the solution candidate with $\vlabel{\text{LacY}}{\plus}$ does not pass the consistency test.

\subsection{Soundness and Completeness}
\label{sec:checking:theorems}

By letting $\tau((V,E,\sigma),\mu)$ denote the set of facts representing the
problem instance induced by an influence graph $(V,E,\sigma)$ and a vertex
labeling~$\mu$,
and~$P_C$ the logic program consisting of the rules given in~(\ref{eq:total}),
(\ref{eq:known}), (\ref{eq:influence}), and~(\ref{eq:inconsistent}), respectively,
we can show the following soundness and completeness results.
\begin{theorem}[Soundness]\label{thm:cons-sound}
Let $(V,E,\sigma)$ be an influence graph and
$\mu : V \rightarrow \{\plus,\minus\}$ a (partial) vertex labeling.

If there is an answer set of $P_C\cup\tau((V,E,\sigma),\mu)$,
then $(V,E,\sigma)$ and~$\mu$ are consistent.
\end{theorem}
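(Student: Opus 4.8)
The plan is to show that, given an answer set $X$ of $P_C \cup \tau((V,E,\sigma),\mu)$, one can read off from $X$ total extensions $\sigma'$ and $\mu'$ witnessing consistency in the sense of Definition~\ref{def:consistency}. First I would use the structure of $P_C$ to establish a few structural facts about $X$. The disjunctive rules in~(\ref{eq:total}) force that, for each vertex $i$ with $\vertex{i}\in\tau((V,E,\sigma),\mu)$, at least one of $\vlabel{i}{\plus}$, $\vlabel{i}{\minus}$ lies in $X$; and since these atoms head no other rules and the program is head-cycle-free, minimality of $X$ as a model of the reduct rules out having both. Hence exactly one sign is chosen per vertex, and symmetrically exactly one per edge. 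So I would define $\mu'(i)$ to be the unique $s$ with $\vlabel{i}{s}\in X$, and $\sigma'(j,i)$ the unique $s$ with $\elabel{j}{i}{s}\in X$.

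Next I would check that $\mu'$ and $\sigma'$ are genuine extensions of $\mu$ and $\sigma$. This is exactly what the rules in~(\ref{eq:known}) enforce: if $\mu(i)=s$ then $\obsvlabel{i}{s}$ is a fact, so $\vlabel{i}{s}\in X$, whence $\mu'(i)=s$ by uniqueness; likewise for edges. Then comes the core argument, verifying the sign consistency constraint for every non-input vertex $i$. Since $i$ is non-input, there is no fact $\iinput{i}$, so $\naf{\iinput{i}}$ holds in $X$. Because $X$ satisfies the integrity constraint~(\ref{eq:inconsistent}) instantiated at $V=i$, $S=\mu'(i)$, and since $\vlabel{i}{\mu'(i)}\in X$ and $\iinput{i}\notin X$, we must have $\influence{i}{\mu'(i)}\in X$. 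Now I would trace back how $\influence{i}{\mu'(i)}$ can have entered $X$: it must be derived via a ground instance of one of the two rules in~(\ref{eq:influence}), which requires an edge $j\rightarrow i$ with $\elabel{j}{i}{s}\in X$ and $\vlabel{j}{t}\in X$ such that $\mu'(i)$ is the sign product $st$ (the first rule giving $\plus$ when $s=t$, the second giving $\minus$ when $s\neq t$). Translating back through the definitions of $\mu'$ and $\sigma'$, this says precisely that $\mu'(i)=\mu'(j)\,\sigma'(j,i)$, i.e.\ $\mu'(i)$ is consistent. Since $i$ was an arbitrary non-input vertex, Definition~\ref{def:consistency} is satisfied, and $(V,E,\sigma)$ and $\mu$ are consistent.

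The step I expect to be the main obstacle — or at least the one needing the most care — is the justification that $\influence{i}{\mu'(i)}\in X$ forces the existence of a suitable predecessor edge. This is a ``provenance'' argument: membership in an answer set is not merely model-theoretic, so one must invoke the stability/minimality condition to argue that $\influence{i}{s}$ cannot be in $X$ gratuitously but only if some ground instance of~(\ref{eq:influence}) fires. Concretely, $\influence{i}{s}$ appears in no head other than those of~(\ref{eq:influence}), and its defining rules have positive bodies only, so it survives in the reduct with the same rules; if no such body were satisfied, removing $\influence{i}{s}$ from $X$ would still yield a model of the reduct, contradicting $\subseteq$-minimality. I would also need to be mildly careful about the two cases in~(\ref{eq:influence}) and the arithmetic of signs (matching $S\neq T$ with the $\minus$ outcome), but that is routine. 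The remaining bookkeeping — uniqueness of chosen labels, extension property — is straightforward once the head-cycle-free structure is noted.
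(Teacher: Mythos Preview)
Your proposal is correct and follows essentially the same route as the paper's proof: extract $\sigma'$ and $\mu'$ from the \atomfont{labelE} and \atomfont{labelV} atoms in~$X$, use minimality of~$X$ w.r.t.\ the reduct to argue uniqueness of labels, use the rules in~(\ref{eq:known}) for the extension property, and use the integrity constraint~(\ref{eq:inconsistent}) together with a provenance argument on~(\ref{eq:influence}) for consistency at each non-input vertex. One small imprecision: you write that the label atoms ``head no other rules,'' but they also head the rules in~(\ref{eq:known}); the paper handles this by first observing that at most one of $\obsvlabel{i}{\plus}$, $\obsvlabel{i}{\minus}$ is a fact (since $\mu$ is a partial function), so~(\ref{eq:known}) can force at most one label and minimality still excludes the other --- your conclusion stands, but the justification should account for this.
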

\begin{theorem}[Completeness]\label{thm:cons-compl}
Let $(V,E,\sigma)$ be an influence graph and
$\mu : V \rightarrow \{\plus,\minus\}$ a (partial) vertex labeling.

If $(V,E,\sigma)$ and $\mu$ are consistent,
then there is an answer set of $P_C \cup \tau((V,E,\sigma),\mu)$.
\end{theorem}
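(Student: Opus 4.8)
The plan is to prove the contrapositive direction constructively: starting from a consistency witness for $(V,E,\sigma)$ and $\mu$, I will build an answer set of $P_C\cup\tau((V,E,\sigma),\mu)$ explicitly and verify it satisfies the answer set definition. By Definition~\ref{def:consistency}, consistency gives us total extensions $\sigma':E\rightarrow\{\plus,\minus\}$ of $\sigma$ and $\mu':V\rightarrow\{\plus,\minus\}$ of $\mu$ such that every non-input vertex $i$ has an incoming edge $j\rightarrow i$ with $\mu'(i)=\mu'(j)\sigma'(j,i)$. The natural candidate set $X$ is: all facts of $\tau((V,E,\sigma),\mu)$; the atoms $\vlabel{i}{\mu'(i)}$ for all $i\in V$ and $\elabel{j}{i}{\sigma'(j,i)}$ for all $j\rightarrow i\in E$; and all atoms $\influence{i}{s}$ that are forced by the rules in~(\ref{eq:influence}) given those label atoms, i.e.\ $\influence{i}{\plus}$ whenever some edge $j\rightarrow i$ has $\sigma'(j,i)=\mu'(j)$, and $\influence{i}{\minus}$ whenever some edge $j\rightarrow i$ has $\sigma'(j,i)\neq\mu'(j)$.

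First I would check that $X$ is a model of $P_C\cup\tau((V,E,\sigma),\mu)$: the facts are trivially satisfied; the disjunctive guessing rules~(\ref{eq:total}) hold since exactly one sign atom per vertex/edge is in $X$; the rules~(\ref{eq:known}) hold because $\mu'$ and $\sigma'$ extend $\mu$ and $\sigma$, so any observed label agrees with the chosen total label; the rules~(\ref{eq:influence}) hold by construction of the $\influence{}{}$ atoms; and the integrity constraints~(\ref{eq:inconsistent}) are satisfied because for each non-input $i$, the witnessing edge $j\rightarrow i$ with $\mu'(i)=\mu'(j)\sigma'(j,i)$ yields $\influence{i}{\mu'(i)}\in X$ (here I use the sign-multiplication bookkeeping: $\mu'(i)=\mu'(j)\sigma'(j,i)$ means either both $\mu'(j)$ and $\sigma'(j,i)$ equal $\mu'(i)$'s "source" consistently, matching the first rule of~(\ref{eq:influence}) when $\mu'(i)=\plus$, or they differ, matching the second when $\mu'(i)=\minus$). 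Input vertices impose no constraint because $\iinput{i}\in X$ blocks~(\ref{eq:inconsistent}).

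Next I would verify the stability criterion: $X$ must be a $\subseteq$-minimal model of the reduct $P_C^X\cup\tau$. Since no rule in $P_C$ has a non-empty negative body except the integrity constraints~(\ref{eq:inconsistent}), and integrity constraints contribute nothing to the reduct's heads, the reduct consists of the facts, the (now split, but still effectively choice-resolved) disjunctive rules, rules~(\ref{eq:known}), and rules~(\ref{eq:influence}). I would argue minimality by a standard derivation/supportedness argument: every label atom in $X$ is needed to satisfy either a rule in~(\ref{eq:known}) (for observed labels) or, for the remaining free vertices/edges, is the unique head atom chosen to satisfy the corresponding rule of~(\ref{eq:total}); removing it would falsify that rule, and no proper subset can satisfy all disjunctive rules while keeping the forced $\influence{}{}$ atoms supported. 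Each $\influence{i}{s}$ atom in $X$ is supported by a ground instance of~(\ref{eq:influence}) whose body is in $X$, and conversely $X$ contains no $\influence{}{}$ atoms beyond those forced, so it is exactly the least model over the label atoms.

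The main obstacle I anticipate is the minimality argument for the disjunctive heads: because $P_C$ is disjunctive, $\subseteq$-minimality of a model of the reduct is not automatic, and one must rule out that some strictly smaller interpretation also satisfies all reduct rules. The key observation making this go through is that the only disjunctive rules are the guesses in~(\ref{eq:total}), which are head-cycle-free (as the paper notes), so a minimal model picks exactly one disjunct per vertex and per edge; combined with the facts pinning down $\tau$ and the definite rules~(\ref{eq:known}) and~(\ref{eq:influence}) determining everything else bottom-up, the model $X$ is forced to be minimal. I would make this precise either by appealing to the shifting result \cite{geliprtr91} for HCF programs — which lets me replace~(\ref{eq:total}) by non-disjunctive choice-resolving rules and then use the ordinary least-model characterization — or by a direct induction on a level mapping. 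Everything else is routine case analysis on the two signs.
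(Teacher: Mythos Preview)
Your proposal is correct and follows essentially the same approach as the paper: construct an explicit candidate $X$ from the witnessing total extensions $\sigma',\mu'$ (facts of~$\tau$, one label atom per vertex and edge, and the $\atomfont{receive}$ atoms forced by~(\ref{eq:influence})), then verify that $X$ is a $\subseteq$-minimal model of the reduct. The paper carries out the minimality check directly by arguing, atom by atom, that dropping any element of $X$ falsifies some rule of $P^X$; your alternative of invoking the HCF/shifting result is sound but unnecessary, and note that your opening phrase ``contrapositive'' is a slip --- you are proving the direct implication constructively.
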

%

The following correspondence result is immediately obtained from
Theorem~\ref{thm:cons-sound} and~\ref{thm:cons-compl}.
\begin{corollary}[Soundness and Completeness]\label{col:cons-iff}
Let  $(V,E,\sigma)$ be an influence graph and
$\mu : V \rightarrow \{\plus,\minus\}$ a (partial) vertex labeling.

Then, 
$(V,E,\sigma)$ and~$\mu$ are consistent
iff
there is an answer set of $P_C\cup\tau((V,E,\sigma),\mu)$.
\end{corollary}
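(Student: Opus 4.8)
The plan is to prove Theorem~\ref{thm:cons-sound} and Theorem~\ref{thm:cons-compl} separately, and then obtain Corollary~\ref{col:cons-iff} by simply combining them: one direction gives ``answer set exists $\Rightarrow$ consistent'' and the other gives ``consistent $\Rightarrow$ answer set exists,'' so their conjunction is exactly the stated biconditional. No additional argument is needed for the corollary itself once both theorems are in hand; the real work is in the two theorems, so I describe how I would attack those.

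For \textbf{soundness}, I would start from a putative answer set $X$ of $P_C\cup\tau((V,E,\sigma),\mu)$ and read off from it the total labelings. Concretely, define $\sigma'(j,i)=s$ iff $\elabel{j}{i}{s}\in X$ and $\mu'(i)=s$ iff $\vlabel{i}{s}\in X$. The first thing to check is that these are well-defined total functions: the disjunctive guessing rules~(\ref{eq:total}) force at least one sign per vertex/edge, and minimality of the answer set (applied to the reduct, since the bodies of~(\ref{eq:total}) are negation-free after reduct) forbids both signs simultaneously, because dropping one disjunct still yields a model. Next, the rules in~(\ref{eq:known}) together with the facts in $\tau$ guarantee $\sigma'$ extends $\sigma$ and $\mu'$ extends $\mu$. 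Finally, for each non-input vertex $i$, the integrity constraint~(\ref{eq:inconsistent}) is satisfied by $X$, so $\influence{i}{\mu'(i)}\in X$; tracing back through~(\ref{eq:influence}), this atom is only derivable if some edge $j\to i$ has $\elabel{j}{i}{S},\vlabel{j}{T}\in X$ with the product $T\cdot S$ equal to $\mu'(i)$ — i.e., $\mu'(i)=\mu'(j)\sigma'(j,i)$, exactly the condition in Definition~\ref{def:consistency}. Hence $(V,E,\sigma)$ and $\mu$ are consistent.

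For \textbf{completeness}, I would go the other way: given consistent $(V,E,\sigma)$ and $\mu$, fix witnessing total extensions $\sigma'$ and $\mu'$ from Definition~\ref{def:consistency}, and build the candidate answer set
\[
X=\tau((V,E,\sigma),\mu)\cup\{\vlabel{i}{\mu'(i)}\mid i\in V\}\cup\{\elabel{j}{i}{\sigma'(j,i)}\mid j\to i\in E\}\cup I,
\]
where $I$ collects the $\influence{i}{s}$ atoms that are forced by~(\ref{eq:influence}) given the chosen labels. Then I would verify that $X$ is a model of $P_C\cup\tau$ — routine rule-by-rule checking, the only non-trivial rules being the integrity constraints~(\ref{eq:inconsistent}), which hold precisely because $\mu'$ is consistent at every non-input vertex and so the matching $\influence{i}{\mu'(i)}$ atom lies in $I$ — and, crucially, that $X$ is a $\subseteq$-minimal model of the reduct $X$ induces. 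For minimality I would argue that every atom of $X$ is ``supported'': facts are in every model; each chosen $\vlabel{}{}$/$\elabel{}{}$ atom is needed because its guessing rule~(\ref{eq:total}) must be satisfied and the complementary atom is absent; and each $\influence{}{}$ atom in $I$ has a satisfied-body rule from~(\ref{eq:influence}) whose body atoms are already in any smaller model, so it too must be present.

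The step I expect to be the main obstacle is the \textbf{minimality argument in the completeness proof}. Membership checking is mechanical, but showing no proper subset of $X$ is a model of the reduct requires care: one must rule out, in particular, a smaller model that drops some guessed label, which is only blocked because the disjunctive rule would then be violated — this is where head-cycle-freeness (mentioned in the text) and the absence of negation in~(\ref{eq:total}) and~(\ref{eq:known}) really matter, and I would need to state the shifting/HCF observation cleanly, or else do a direct minimality proof by a well-founded ordering on the derivations of $\influence{}{}$ atoms. The soundness direction has a milder version of the same subtlety — deriving that $\sigma',\mu'$ are genuinely total and single-valued from minimality of $X$ — but there the disjunctive structure works in our favor and the argument is short.
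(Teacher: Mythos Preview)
Your proposal is correct and matches the paper's approach: the corollary is stated as an immediate consequence of Theorems~\ref{thm:cons-sound} and~\ref{thm:cons-compl}, and your outlined proofs of those theorems---reading off $\sigma',\mu'$ from the \atomfont{labelE}/\atomfont{labelV} atoms in an answer set for soundness, and constructing the candidate answer set from witnessing total extensions for completeness---are exactly what the paper does in \ref{app:proof:consistency}. The paper handles the minimality argument directly (rule-by-rule, showing each atom of $X$ is forced), without invoking HCF or shifting, so your ``direct minimality proof'' alternative is the one actually taken.
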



\section{Identifying Minimal Inconsistent Cores}\label{sec:diagnosis}

In view of the usually large amount of data,
it is crucial to provide concise explanations
whenever an experimental profile is inconsistent with an influence graph
(i.e., if the logic program given in the previous section has no answer set).
To this end, we adopt a strategy that was successfully applied on real biological data \cite{coliRIAMS}.
The basic idea is to isolate minimal subgraphs of an influence graph
such that the vertices and edges cannot be labeled consistently.  
This task is closely related to extracting Minimal Unsatisfiable Cores
(MUCs) \cite{scalableMUC} in the context of Boolean satisfiability
(SAT).
In allusion,
we call a minimal subgraph of an influence graph
whose vertices and edges cannot be labeled consistently
a \emph{Minimal Inconsistent Core} (MIC),
whose formal definition is as follows.\footnote{
We note that verifying a MUC is D$^{\text{P}\!}$-complete \cite{scalableMUC,papyan82a},
and the same applies to MICs in view of the reduction of SAT described in Section~\ref{sec:checking}.
However, solving a decision problem is not sufficient for our application
because we also need to provide MIC candidates to verify.
As regards checking inconsistency of an (a priori unknown) MIC candidate,
we are unaware of ways to
accomplish such a co-NP test in non-disjunctive ASP without 
destroying the candidate at hand.}
\begin{definition}[Minimal Inconsistent Core]\label{def:mic}
  Let $(V,E,\sigma)$ be an influence graph and
  $\mu : V \rightarrow \{\plus,\minus\}$ a (partial) vertex labeling.

  Then, a subset~$W$ of~$V$ is a \emph{Minimal Inconsistent Core} (MIC),
  if
  \begin{enumerate}
  \item
    for all total extensions $\sigma' : E \rightarrow \{\plus,\minus\}$ of $\sigma$
    and $\mu' : V \rightarrow \{\plus,\minus\}$ of $\mu$,
    there is some non-input vertex $i\in W$ such that $\mu'(i)$ is inconsistent, and
  \item
    for every $W'\subset W$,
    there are some total extensions
    $\sigma' : E \rightarrow \{\plus,\minus\}$ of $\sigma$
    and $\mu' : V \rightarrow \{\plus,\minus\}$ of $\mu$
    such that $\mu'(i)$ is consistent for each non-input vertex $i\in W'$.
  \end{enumerate}
\end{definition}
To encode MICs, we make use of three important observations made on Definition~\ref{def:mic}.
First, the inherent inconsistency of a MIC's vertices stipulated in
the first condition must be implied by the MIC and its external regulators,
while vertices not connected to the MIC cannot contribute anything.
Moreover, the second condition on proper subsets prohibits the inclusion
of an input vertex in a MIC, as it could always be removed without
affecting inherent (in)consistency of the remaining vertices' variations.
Finally, for establishing consistency of all proper subsets of a MIC,
it is sufficient to consider subsets excluding a single vertex of the MIC,
given that their consistency carries forward to all smaller subsets.

For illustration, consider the influence graph and the MIC in
Figure~\ref{fig:mic-illustration}.
One can check that the observed simultaneous increase of~{\bf B} and~{\bf D}
is not consistent with the influence graph,
but the reason for this might not be apparent at first glance. 
However, once the MIC consisting of~{\bf A} and~{\bf D} is extracted, 
we see that the increase of~{\bf B} implies an increase of~{\bf A}, 
so that the observed increase of~{\bf D} cannot be explained.
Note that the elucidation of inherent inconsistency provided by a MIC
takes its vertices along with their regulators into account, the latter
being incapable of jointly explaining the variations of all vertices in the MIC.

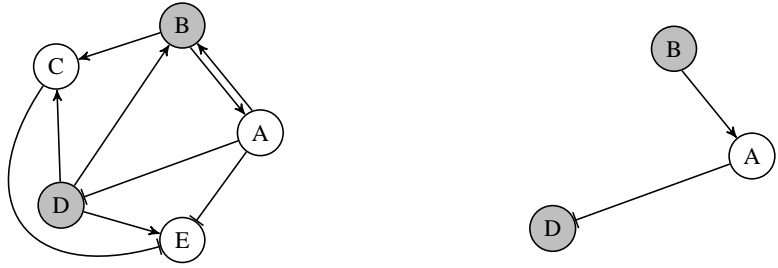
\begin{figure}[tb]
\centering
\begin{tabular}{l@{\hspace{30mm}}l}
  \begin{tikzpicture}[->,semithick,>=stealth']
    \tikzstyle{species}=[draw, circle, fill=none,text=black,minimum size=17pt]
    \tikzstyle{posspecies}=[style=species,text=black,text opacity=1,minimum size=17pt,fill=lightgray]
    \tikzstyle{negspecies}=[style=species,text=black,text opacity=1,minimum size=17pt,fill=darkgray ]
    \node[species] (A) at (0:1.5)  {A};
    \node[posspecies] (B) at (72:1.5) {B};
    \node[species] (C) at (144:1.5) {C};
    \node[posspecies] (D) at (220:1.5) {D};
    \node[species] (E) at (288:1.5) {E};
    
    \path
    (A.110) edge (B.-50)
    (B.-70) edge (A.130)
    (A) edge[-|]  (D)
    (A) edge[-|] (E)
    (D) edge (E)
    (B) edge (C)
    (D) edge (B)
    (D) edge (C)
    (C) edge[bend right=70,looseness=1.8,-|] (E);
  \end{tikzpicture}
&
  \begin{tikzpicture}[->,semithick,>=stealth']
    \tikzstyle{species}=[draw, circle, fill=none,text=black,minimum size=17pt]
    \tikzstyle{posspecies}=[style=species,text=black,text opacity=1,minimum size=17pt,fill=lightgray]
     \tikzstyle{negspecies}=[minimum size=17pt]
    \node[species] (A) at (0:1.5)  {A};
    \node[posspecies] (B) at (72:1.5) {B};
    \node[negspecies] (C) at (144:1.5) {};
    \node[posspecies] (D) at (220:1.5) {D};
    \node[negspecies] (E) at (288:1.5) {};
    
    \path
    (B.-70) edge (A.130)
    (A) edge[-|]  (D);
  \end{tikzpicture}
\end{tabular}
  \caption{A partially labeled influence graph and a MIC consisting of~{\bf A} and~{\bf D}.}
  \label{fig:mic-illustration}
\end{figure}

We next provide an encoding for identifying MICs,
where a problem instance, that is,
an influence graph along with an experimental profile,
is represented by facts as specified in Section~\ref{subsec:instance}.
The encoding then consists of three parts: 
the first generating MIC candidates,
the second asserting inconsistency,
and the third verifying minimality.

\subsection{Generating MIC Candidates} 

The generating part comprises rules 
in~(\ref{eq:known}) for deriving known vertex and edge labels.
In addition, it includes the following rules:
\begin{equation}\label{eq:active}
  \begin{array}{r@{{}\leftarrow{}}l}
  \activ{V};\inactiv{V}                      & \vertex{V}, \naf{\iinput{V}}. \\[1mm]
  \aedge{U}{V}                               & \edge{U}{V}, \activ{V} . \\
  \avertex{U}                                & \aedge{U}{V}. \\
  \avertex{V}                                & \activ{V}. \\[1mm]
  \vlabel{V}{\plus}   ;\vlabel{V}{\minus}    & \avertex{V}.  \\
  \elabel{U}{V}{\plus};\elabel{U}{V}{\minus} & \aedge{U}{V}.
  \end{array}  
\end{equation}
The first rule permits guessing non-input vertices forming a MIC candidate.
Such vertices are marked as $\atomfont{active}$.
The subgraph of the influence graph consisting of the active vertices,
their regulators, and the connecting edges provides the context of the 
MIC candidate.%
\footnote{In Definition~\ref{def:mic}, (in)consistency is checked only
for the (non-input) vertices in a MIC, while other vertices' variations do not need to
be explained.
Hence, guessing unobserved vertex (and edge) labels
can be restricted to vertices belonging to or connected to the MIC,
which reduces combinatorics.
}
The vertices and edges contributing to this subgraph are identified via
$\atomfont{vertexMIC}$ and $\atomfont{edgeMIC}$.
The guessing of (unobserved) vertex and edge labels is restricted to them
in the last two rules of~(\ref{eq:active}).
Finally, note that the rules in~(\ref{eq:known}) propagate known labels
also for vertices and edges not correlated to the MIC candidate, viz.,
to the active vertices.
This does not incur additional combinatorics; rather,
it reduces derivations depending on MIC candidates.

\subsection{Testing for Inconsistency}\label{subsec:muc:inconsistent}
By adapting the methodology used in \cite{eitgot95a},
the following subprogram
makes sure that the active vertices cannot be labeled consistently,
taking (implicitly) into account all possible labelings for them,
their regulators, and connecting edges:%
\footnote{%
In the language of \gringo\ (and \lparse),
the expression $\contrary{U}{V}:\edge{U}{V}$ used below refers
to the conjunction of all ground atoms $\contrary{j}{i}$ for which $\edge{j}{i}$ holds.}%
\begin{equation}\label{eq:inconsistency}
  \begin{array}{r@{}c@{}l}
  \contrary{U}{V}  & {}\leftarrow{} &
                     \elabel{U}{V}{\minus}, \vlabel{U}{S}, \vlabel{V}{S}.
  \\
  \contrary{U}{V}  & {}\leftarrow{} &
                     \elabel{U}{V}{\plus},  \vlabel{U}{S}, \vlabel{V}{T}, S\neq T.
  \\[1mm]
  \bottom          & {}\leftarrow{} &
                     \activ{V}, \contrary{U}{V}:\edge{U}{V}.
  \\               & {}\leftarrow{} &
                     \naf{\bottom}.
  \\[1mm]
  \multicolumn{3}{@{}l}{
  \begin{array}{r@{}c@{}l}
  \vlabel{V}{\plus}  & {}\leftarrow{} &
                       \bottom, \vertex{V}.
  \\
  \vlabel{V}{\minus} & {}\leftarrow{} &
                       \bottom, \vertex{V}.
  \\
  \elabel{U}{V}{\plus}  & {}\leftarrow{} &
                          \bottom, \edge{U}{V}.
  \\
  \elabel{U}{V}{\minus} & {}\leftarrow{} &
                          \bottom, \edge{U}{V}.
  \end{array} 
  }
  \end{array}  
\end{equation}
%
In this (part of the) encoding,
$\contrary{U}{V}$ indicates that the influence of regulator~$U$
on~$V$ is opposite to the variation of~$V$.
If all regulators of an active vertex~$V$ have such an opposite influence,
the sign consistency constraint for~$V$ is violated,
in which case atom $\bottom$ along with all labels for vertices and edges are derived.
Note that the stability criterion for an answer set~$X$ imposes
that $\bottom$ and all labels belong to~$X$ only
if the active vertices cannot be labeled consistently.
Finally, integrity constraint ${\leftarrow}\,\naf{\bottom}$
necessitates the inclusion of~$\bottom$ in any answer set,
thus, stipulating an inevitable sign consistency constraint 
violation for some active vertex.
%

Reconsidering our example in Figure~\ref{fig:mic-illustration},
the ground instances of (\ref{eq:active}) permit guessing
$\activ{\text{\bf A}}$ and $\activ{\text{\bf D}}$.
When labeling {\bf A} with~$\plus$ (or assuming $\vlabel{\text{\bf A}}{\plus}$ to be true),
we derive $\contrary{\text{\bf A}}{\text{\bf D}}$ and~$\bottom$,
producing in turn all labels for vertices and edges.
Furthermore, setting the sign of {\bf A} to~$\minus$ (or $\vlabel{\text{\bf A}}{\minus}$ to true)
makes us derive $\contrary{\text{\bf B}}{\text{\bf A}}$, which again gives
$\bottom$ and all labels for vertices and edges.
We have thus verified that the sign consistency constraints for {\bf A} and {\bf D}
cannot jointly be satisfied, given the observed increases of {\bf B} and {\bf D}.
That is, active vertices {\bf A} and {\bf D} are sufficient to explain the inconsistency
between the observations and the influence graph.

\subsection{Testing for Minimality}\label{subsec:minimal}
It remains to be verified whether the sign consistency constraints for all active vertices
are necessary to identify an inherent inconsistency.
This test is based on the idea that, excluding any single active vertex, 
the sign consistency constraints for the other active vertices should be 
satisfied by appropriate labelings,
which can be implemented as follows:
%
\begin{equation}\label{eq:minimal}
  \begin{array}{r@{}c@{}l}
  \vlabell{W}{V}{\plus}; \vlabell{W}{V}{\minus} & {} \leftarrow {} &
    \activ{W}, \avertex{V}.
  \\
  \elabell{W}{U}{V}{\plus}; \elabell{W}{U}{V}{\minus} & {} \leftarrow {} &
    \activ{W}, \aedge{U}{V}.
  \\[1mm]
  \multicolumn{3}{@{}l}{
  \begin{array}{@{}r@{}c@{}l}
  \vlabell{W}{V}{S} & {} \leftarrow {} &
    \activ{W}, \obsvlabel{V}{S}.
  \\
  \elabell{W}{U}{V}{S} & {} \leftarrow {} &
    \activ{W}, \obselabel{U}{V}{S}.
  \\[1mm]
  \influencel{W}{V}{\plus} & {} \leftarrow {} & 
    \elabell{W}{U}{V}{S}, \vlabell{W}{U}{S}, V\neq W.
  \\
  \influencel{W}{V}{\minus} & {} \leftarrow {} & 
    \elabell{W}{U}{V}{S}, \vlabell{W}{U}{T}, V\neq W, S\neq T.
  \\[1mm]
                       & {} \leftarrow {} &
    \vlabell{W}{V}{S}, \activ{V}, V\neq W, \naf{\influencel{W}{V}{S}}.\hspace*{-10mm}
  \end{array}
  }
  \end{array}  
\end{equation}
This subprogram is similar to the consistency check
encoded via the rules in~(\ref{eq:total}), (\ref{eq:known}), (\ref{eq:influence}), and (\ref{eq:inconsistent}).
However, sign consistency constraints are only checked for
active vertices, and they must be satisfiable for all but 
one arbitrary active vertex~$W$.
In fact, labelings such that the variations of all active vertices
but~$W$ are explained witness the fact that~$W$
cannot be removed from a MIC candidate without re-establishing consistency.
As~$W$ ranges over all (non-input) vertices of an influence graph,
each active vertex is taken into consideration.
Regarding computational complexity,
recall from Section~\ref{sec:checking} that checking consistency is NP-complete.
As a consequence, one cannot easily identify conditions to select a particular
witness for consistency of a MIC candidate minus some vertex~$W$, and so we do not
encode any such conditions.
This leads to the potential of multiple answer sets comprising the same MIC
but different witnesses, in particular, if many vertices and edges 
belong to the context of the MIC.

For the influence graph in Figure~\ref{fig:mic-illustration},
it is easy to see that the sign consistency constraint for~{\bf A}
is satisfied by setting the sign of~{\bf A} to~$\plus$,
expressed by atom $\vlabell{\text{\bf D}}{\text{\bf A}}{\plus}$
in the ground rules obtained from the above encoding part.
In turn, the sign consistency constraint for~{\bf D} is satisfied
by setting the sign of~{\bf A} to~$\minus$.
This is reflected by atom $\vlabell{\text{\bf A}}{\text{\bf A}}{\minus}$,
allowing us to derive $\influencel{\text{\bf A}}{\text{\bf D}}{\plus}$.
That is, the ground instance of the above integrity constraint
containing $\vlabell{\text{\bf A}}{\text{\bf D}}{\plus}$ is satisfied.
The fact that atoms $\vlabell{\text{\bf D}}{\text{\bf A}}{\plus}$ and
$\vlabell{\text{\bf A}}{\text{\bf A}}{\minus}$,
used for explaining the variation of either~{\bf A} or~{\bf D}, respectively,
disagree on the sign of~{\bf A} also shows that
jointly considering~{\bf A} and~{\bf D} yields an inconsistency.

\subsection{Soundness and Completeness}\label{sec:diagnosis:theorems}

Similar to Section~\ref{sec:checking:theorems},
we can show the soundness and completeness for our MIC extraction
encoding~$P_D$, consisting of the rules
in~(\ref{eq:known}), (\ref{eq:active}), (\ref{eq:inconsistency}), and~(\ref{eq:minimal}),
respectively.
\begin{theorem}[Soundness]\label{thm:diag-sound}
Let  $(V,E,\sigma)$ be an influence graph and
$\mu : V \rightarrow \{\plus,\minus\}$ a (partial) vertex labeling.

If $X$ is an answer set of $P_D\cup\tau((V,E,\sigma),\mu)$,
then $\{i \mid \activ{i}\in X\}$ is a MIC.
\end{theorem}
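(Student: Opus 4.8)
The plan is to fix an answer set $X$ of $P_D\cup\tau((V,E,\sigma),\mu)$ and to verify that $W := \{i \mid \activ{i}\in X\}$ satisfies both conditions of Definition~\ref{def:mic}. First I would establish the structural facts about $X$ that follow from the generating rules in~(\ref{eq:active}): since the guessing rule $\activ{V};\inactiv{V}\leftarrow\vertex{V},\naf{\iinput{V}}$ is restricted to non-input vertices, $W$ contains no input vertex; moreover the atoms $\avertex{\cdot}$ and $\aedge{\cdot}{\cdot}$ in $X$ exactly identify the ``context'' subgraph consisting of the active vertices, their direct regulators, and the connecting edges, and the rules in~(\ref{eq:active}) together with~(\ref{eq:known}) guess a total vertex labeling on $\avertex{\cdot}$ and a total edge labeling on $\aedge{\cdot}{\cdot}$ that extends $\mu$ and $\sigma$ on that context. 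It will be convenient to note that what happens outside the context is irrelevant to the conditions of Definition~\ref{def:mic}, which only refer to non-input vertices of $W$ and their incoming edges — all of which lie in the context.

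For the first condition (inherent inconsistency of $W$), I would argue via the subprogram~(\ref{eq:inconsistency}) and the stability of $X$. The integrity constraint ${\leftarrow}\,\naf{\bottom}$ forces $\bottom\in X$. By minimality of $X$ as a model of the reduct, $\bottom$ must be \emph{supported}, i.e.\ derived through the rule $\bottom\leftarrow\activ{V},\contrary{U}{V}{:}\,\edge{U}{V}$ for some active $V$; hence for that $V$ every regulator $U$ satisfies $\contrary{U}{V}\in X$, meaning (by the two $\contrary{\cdot}{\cdot}$ rules) that the guessed labeling makes the influence of each $U$ on $V$ opposite to $\mu'(V)$. The key point — and here I would lean on the ``saturation'' technique of~\cite{eitgot95a}, exactly as the text says — is that because $\bottom\in X$ the rules in the lower block of~(\ref{eq:inconsistency}) force \emph{all} labels $\vlabel{V}{\plus},\vlabel{V}{\minus},\elabel{U}{V}{\plus},\elabel{U}{V}{\minus}$ into $X$; combined with $\subseteq$-minimality of $X$, this is what guarantees that \emph{no} total extension of $\sigma,\mu$ can satisfy the sign consistency constraints for all vertices of $W$ simultaneously. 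Concretely I would show: if some extension $\sigma',\mu'$ \emph{did} make every non-input $i\in W$ consistent, then the interpretation obtained from $X$ by replacing the saturated labels with the ones coming from $\sigma',\mu'$ (and dropping $\bottom$ and the $\contrary{\cdot}{\cdot}$ atoms no longer supported) would be a model of the reduct strictly smaller than $X$, contradicting that $X$ is an answer set.

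For the second condition (minimality), I would use the subprogram~(\ref{eq:minimal}). For each active $W_0\in W$, the disjunctive rules generate a fresh total labeling $\vlabell{W_0}{\cdot}{\cdot}$, $\elabell{W_0}{\cdot}{\cdot}{\cdot}$ extending $\mu,\sigma$ on the context, and the integrity constraint ${\leftarrow}\,\vlabell{W}{V}{S},\activ{V},V\neq W,\naf{\influencel{W}{V}{S}}$ — instantiated with $W=W_0$ — being satisfied in $X$ means that every active vertex $V\neq W_0$ receives, under this labeling, an influence matching its variation. Reading off $\mu'(i)=\vlabell{W_0}{i}{S}$ and $\sigma'(j,i)=\elabell{W_0}{j}{i}{S}$ gives total extensions witnessing that $W\setminus\{W_0\}$ is consistent. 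Since no input vertex lies in $W$, and since consistency of $W\setminus\{W_0\}$ propagates to every subset of it (fewer vertices impose fewer constraints, and labels on the larger context still work), this establishes condition~2 for every $W'\subset W$. I would remark that the restriction $V\neq W$ in the $\influencel{\cdot}{\cdot}{\cdot}$ rules is exactly what keeps $W_0$ itself from being required consistent, which is essential.

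The main obstacle I anticipate is the first condition: turning ``$\bottom\in X$ and $X$ is $\subseteq$-minimal'' into ``\emph{every} total extension fails,'' since this is a co-NP statement being extracted from a single answer set. The argument must carefully exploit the saturation rules — that deriving $\bottom$ forces in all labels — so that any hypothetical good extension yields a \emph{strictly smaller} model of the reduct; getting the bookkeeping right about which atoms ($\contrary{\cdot}{\cdot}$, $\influence{\cdot}{\cdot}$-type atoms, $\bottom$, and the saturated labels) are or are not supported in that smaller interpretation is the delicate part. Everything else is a routine matching of ground rule instances against the clauses of Definition~\ref{def:mic}.
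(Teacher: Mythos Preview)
Your proposal is correct and follows essentially the same approach as the paper's own proof: you use the saturation technique on~(\ref{eq:inconsistency}) together with $\subseteq$-minimality of~$X$ to establish inconsistency of~$W$ (by constructing a strictly smaller model~$Y$ of the reduct from any hypothetical witnessing labelings), and you read off witnessing labelings for each $W\setminus\{k\}$ from the primed atoms produced by~(\ref{eq:minimal}). The only minor point to be explicit about is that the labelings read off from $\vlabell{k}{\cdot}{\cdot}$ and $\elabell{k}{\cdot}{\cdot}{\cdot}$ are partial on vertices and edges outside the context and must be totalized (the paper does this by assigning~$\plus$ as a default), but you have already noted that what happens outside the context is irrelevant to Definition~\ref{def:mic}.
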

\begin{theorem}[Completeness]\label{thm:diag-compl}
Let  $(V,E,\sigma)$ be an influence graph and
$\mu : V \rightarrow \{\plus,\minus\}$ a (partial) vertex labeling.

If $W\subseteq V$ is a MIC, then there is an answer
set~$X$ of $P_D\cup\tau((V,E,\sigma),\mu)$
such that $\{i \mid \activ{i}\in X\}=W$.
\end{theorem}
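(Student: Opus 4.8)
The plan is to construct, from a given MIC $W$, an explicit interpretation $X$, show it is a model of $P_D\cup\tau((V,E,\sigma),\mu)$, show it is $\subseteq$-minimal among the models of the corresponding reduct, and finally read off $\{i\mid\activ{i}\in X\}=W$ from the construction. First I would use the observation already made in the text that, by condition~2 of Definition~\ref{def:mic}, a MIC contains no input vertex; hence it is consistent with the first rule of~(\ref{eq:active}) to set $\activ{i}$ for every $i\in W$ and $\inactiv{i}$ for every non-input $i\in V\setminus W$, which in turn fixes all atoms $\aedge{u}{v}$, $\avertex{v}$ describing the context of the candidate. Since $W\neq\emptyset$ and satisfies condition~1, I put $\bottom\in X$, which by the saturation rules in~(\ref{eq:inconsistency}) forces \emph{all} atoms $\vlabel{v}{\plus},\vlabel{v}{\minus}$ for $v\in V$ and $\elabel{u}{v}{\plus},\elabel{u}{v}{\minus}$ for the edges, hence also all atoms $\contrary{u}{v}$. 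For the minimality layer~(\ref{eq:minimal}) I would invoke condition~2 once per active vertex: for each $w\in W$ fix total extensions $\sigma'_w,\mu'_w$ of $\sigma,\mu$ witnessing consistency of all vertices in $W\setminus\{w\}$, and put into $X$ exactly the primed atoms $\vlabell{w}{v}{\mu'_w(v)}$ for all $\avertex{v}$ and $\elabell{w}{u}{v}{\sigma'_w(u,v)}$ for all $\aedge{u}{v}$ (these respect the known-label rules of~(\ref{eq:minimal}) since $\sigma'_w,\mu'_w$ extend $\sigma,\mu$), together with all $\influencel{w}{v}{s}$ derivable from them. Adding the facts of $\tau((V,E,\sigma),\mu)$ completes $X$.

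Then I would verify that $X$ is a model. The only rules requiring attention are the saturation rules (true because $X$ contains every unprimed label), the integrity constraint ${\leftarrow}\,\naf{\bottom}$ (true since $\bottom\in X$), and the minimality constraint ${\leftarrow}\,\vlabell{W}{V}{S},\activ{V},V\neq W,\naf{\influencel{W}{V}{S}}$. For the last one: if $\vlabell{w}{v}{s}\in X$ with $v$ active and $v\neq w$, then $v\in W\setminus\{w\}$ and $s=\mu'_w(v)$, so by condition~2 there is an edge $j{\,\rightarrow\,}v$ with $\mu'_w(v)=\mu'_w(j)\,\sigma'_w(j,v)$; as $j$ regulates the active vertex $v$ it lies in the context, so the atoms $\elabell{w}{j}{v}{\sigma'_w(j,v)}$ and $\vlabell{w}{j}{\mu'_w(j)}$ are in $X$, and a short sign-multiplication case split ($s=\plus$ vs.\ $s=\minus$) shows that one of the two $\influencel$ rules of~(\ref{eq:minimal}) derives $\influencel{w}{v}{s}$, so the constraint is satisfied.

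The substantial part is $\subseteq$-minimality of $X$ among the models of the reduct of $P_D\cup\tau((V,E,\sigma),\mu)$ with respect to $X$. Let $Y\subseteq X$ be such a model. The facts, the non-disjunctive context rules of~(\ref{eq:active}), and the disjunction $\activ{V};\inactiv{V}\leftarrow\vertex{V}$ already force $Y$ to contain all facts, all $\activ{i}$ for $i\in W$, all $\inactiv{i}$ for non-input $i\notin W$ (for each such vertex only the atom that lies in $X$ is available to satisfy the disjunction), and hence all $\aedge{\cdot}{\cdot}$, $\avertex{\cdot}$ atoms of $X$. I claim $\bottom\in Y$: otherwise the disjunctive rules $\vlabel{V}{\plus};\vlabel{V}{\minus}\leftarrow\avertex{V}$, $\elabel{U}{V}{\plus};\elabel{U}{V}{\minus}\leftarrow\aedge{U}{V}$ together with the known-label rules still force $Y$ to contain at least one sign for every context vertex and edge; picking one such sign per context element, respecting observations, and extending arbitrarily elsewhere yields total extensions $\sigma',\mu'$ of $\sigma,\mu$, and by condition~1 of Definition~\ref{def:mic} some non-input $i\in W$ has $\mu'(i)$ inconsistent, i.e.\ every edge $j{\,\rightarrow\,}i$ carries an influence opposite to $\mu'(i)$; since $i$, its regulators, and their edges all lie in the context, the picked labels are in $Y$, so each $\contrary{j}{i}$ rule fires and $Y$ contains $\contrary{u}{i}$ for all $u$ with $\edge{u}{i}$, whence the rule $\bottom\leftarrow\activ{i},\contrary{U}{i}:\edge{U}{i}$ forces $\bottom\in Y$ --- a contradiction. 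Hence $\bottom\in Y$, so the saturation rules put all unprimed labels and all $\contrary{\cdot}{\cdot}$ into $Y$; finally the disjunctive rules of~(\ref{eq:minimal}) force one primed sign per active vertex and context element into $Y$, and since $X$ carries exactly one such, $Y$ must contain precisely the primed labels of $X$ and, closing under the $\influencel$ rules, $Y=X$. Therefore $X$ is an answer set, and $\{i\mid\activ{i}\in X\}=W$ holds by construction.

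I expect the main obstacle to be exactly this last, saturation-style argument --- specifically, making precise that an arbitrary model $Y$ of the reduct with $\bottom\notin Y$ still induces a genuine \emph{total} labeling of the MIC context (one must commit to a single sign when $Y$ happens to contain both signs of some vertex, and extend arbitrarily outside the context), so that condition~1 of Definition~\ref{def:mic} applies. The remaining verifications --- that $X$ is a model, and the disjunctive-ASP bookkeeping in the case $\bottom\in Y$ --- are routine, in the spirit of the saturation technique of~\cite{eitgot95a}.
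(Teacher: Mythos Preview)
Your approach is the same as the paper's---construct an explicit candidate~$X$ from witnessing labelings for each $W\setminus\{w\}$, include $\bottom$ and the full saturation, then argue $\subseteq$-minimality by showing any model $Y\subseteq X$ of the reduct with $\bottom\notin Y$ would yield witnessing labelings for~$W$---and the key saturation argument is correct.

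There is, however, a small but genuine gap in your construction of~$X$. You include primed labels $\vlabell{w}{v}{\mu'_w(v)}$ only for $v$ with $\avertex{v}$, and $\elabell{w}{u}{v}{\sigma'_w(u,v)}$ only for edges with $\aedge{u}{v}$. But the rules
\[
  \vlabell{W}{V}{S}\leftarrow\activ{W},\obsvlabel{V}{S}
  \qquad
  \elabell{W}{U}{V}{S}\leftarrow\activ{W},\obselabel{U}{V}{S}
\]
from~(\ref{eq:minimal}) range over \emph{all} observed vertices and edges, not just those in the MIC context. So if some observed vertex (or edge) lies outside the context, your~$X$ fails to satisfy the corresponding ground instance and is not a model. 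The paper's construction explicitly adds these extra atoms (and then has to track a three-way case split for the resulting $\atomfont{receive'}$ atoms). The fix is routine: enlarge~$X$ by $\{\vlabell{w}{v}{s}\mid w\in W,\mu(v)=s\}$ and $\{\elabell{w}{u}{v}{s}\mid w\in W,\sigma(u,v)=s\}$, then close under the $\atomfont{receive'}$ rules. Since $\mu'_w,\sigma'_w$ extend $\mu,\sigma$, this does not disturb uniqueness of the primed labels inside the context, and your minimality argument for the primed layer goes through unchanged there; you just need one more line checking that the extra primed atoms outside the context are also forced into any model $Y\subseteq X$ of the reduct.
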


The following correspondence result is immediately obtained from
Theorem~\ref{thm:diag-sound} and~\ref{thm:diag-compl}.
\begin{corollary}[Soundness and Completeness]\label{col:diag-iff}
Let  $(V,E,\sigma)$ be an influence graph and
$\mu : V \rightarrow \{\plus,\minus\}$ a (partial) vertex labeling.

Then, 
$W\subseteq V$ is a MIC
iff
there is an answer set~$X$ of $P_D\cup\tau((V,E,\sigma),\mu)$
such that $\{i \mid \activ{i}\in X\}=W$.
\end{corollary}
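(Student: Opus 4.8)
The plan is to read the corollary off directly from Theorems~\ref{thm:diag-sound} and~\ref{thm:diag-compl}; at this level there is nothing to prove and no obstacle. For the ``if'' direction, assume there is an answer set~$X$ of $P_D\cup\tau((V,E,\sigma),\mu)$ with $\{i\mid\activ{i}\in X\}=W$; then Theorem~\ref{thm:diag-sound}, applied to exactly this~$X$, already states that $W=\{i\mid\activ{i}\in X\}$ is a MIC. For the ``only if'' direction, assume $W\subseteq V$ is a MIC; then Theorem~\ref{thm:diag-compl} already produces an answer set~$X$ of $P_D\cup\tau((V,E,\sigma),\mu)$ with $\{i\mid\activ{i}\in X\}=W$. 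Chaining the two implications yields the stated biconditional.

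Since the corollary is a formality, the real content --- and any difficulty --- lives in the two theorems it quotes, whose proofs belong to~\ref{app:proof:diagnosis}. For \emph{soundness} I would establish the two conditions of Definition~\ref{def:mic} in turn. Condition~1 (the active vertices cannot be labeled consistently) should come out of the saturation subprogram~(\ref{eq:inconsistency}), built along the lines of~\cite{eitgot95a}: in any answer set the constraint ${\leftarrow}\,\naf{\bottom}$ forces $\bottom$, hence every $\atomfont{labelV}$ and $\atomfont{labelE}$ atom, to be true, and because an answer set is a $\subseteq$-minimal model of the reduct this saturated interpretation can be stable only if \emph{no} total extension $\sigma',\mu'$ explains the variations of all active vertices simultaneously --- precisely condition~1. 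Condition~2 should come out of~(\ref{eq:minimal}): for each active vertex~$w$, the primed atoms $\atomfont{labelV'}(w,\cdot)$ and $\atomfont{labelE'}(w,\cdot)$ describe a total extension under which every active vertex other than~$w$ is consistent, so removing~$w$ restores consistency; as the discussion after Definition~\ref{def:mic} notes, checking this only for the subsets obtained by deleting a single vertex already suffices for every proper subset. One also has to confirm the two auxiliary observations from that discussion: no input vertex can be active, since the guess in~(\ref{eq:active}) ranges only over non-input vertices; and the vertices and edges tagged $\atomfont{vertexMIC}$ and $\atomfont{edgeMIC}$ really do include all regulators of the active vertices, so that nothing outside this context influences (in)consistency.

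For \emph{completeness} I would reverse the construction: given a MIC~$W$, make $\activ{i}$ true exactly for $i\in W$, derive the induced $\atomfont{vertexMIC}$, $\atomfont{edgeMIC}$ and known-label atoms via~(\ref{eq:known}) and~(\ref{eq:active}), saturate $\bottom$ together with all vertex and edge labels (admissible by condition~1 of Definition~\ref{def:mic}), and, for each active vertex~$w$, install in the primed atoms the witnessing total extension guaranteed by condition~2; it then remains to verify that the interpretation so defined is a $\subseteq$-minimal model of the associated reduct --- hence a genuine answer set --- whose active part is exactly~$W$. The main obstacle in both theorems is the recurring one for saturation-based encodings of the kind used in~\cite{eitgot95a}: making the saturation trick cohabit with the minimality inherent in the answer-set semantics, i.e.\ checking that the saturated interpretation is stable in precisely the intended situations and that no strictly smaller model of the reduct intrudes. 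That verification is bookkeeping, but it is the one place where genuine care is needed.
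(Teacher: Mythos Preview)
Your treatment of the corollary itself is exactly what the paper does: it states that the result is ``immediately obtained from Theorem~\ref{thm:diag-sound} and~\ref{thm:diag-compl}'' and gives no further argument, and your first paragraph spells out precisely that two-line derivation. Your additional sketch of how the underlying theorems are proved also matches the paper's strategy in~\ref{app:proof:diagnosis}: soundness is shown by extracting witnessing labelings for each $W\setminus\{k\}$ from the primed atoms of~(\ref{eq:minimal}) and then arguing by contradiction (via the saturation mechanism of~(\ref{eq:inconsistency})) that no witnessing labelings exist for~$W$; completeness is shown by assembling an interpretation from the MIC and its per-vertex witnesses and verifying $\subseteq$-minimality of the reduct model.
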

As mentioned above,
several answer sets may represent the same MIC
because witnesses needed for minimality testing are
not necessarily unique.


\section{Refinements}
\label{sec:refinements}
In this section, we detail two encoding extensions aiming at
the improvement of grounding and solving efficiency.
First, input reduction checks for some simple cases to identify and
distinguish uncritical vertices.
Second, background knowledge about MICs' connectivity can be exploited
to more precisely render potential MIC candidates. 

\subsection{Input Reduction}\label{sec:reduction}

It is not unlikely in practice that biological networks include simple tractable substructures
or that parts of experimental observations are easily explained.
Dealing with such particular cases before doing complex computations
(like checking consistency or finding MICs) is therefore advisable.
Given an influence graph $(V,E,\sigma)$ and a partial vertex labeling~$\mu$
capturing experimental data, we below describe conditions to identify
vertices that can always be labeled consistently.
Such vertices can then be marked as (additional) inputs to exclude their
sign consistency constraints from consistency checking and to make
explicit that they cannot belong to any MIC.
Any of the following conditions is sufficient to identify a vertex~$i$
as effectively unconstrained:
\begin{enumerate}
\item There is a regulation $i {\,\rightarrow\,} i$ in~$E$ such that
      $\sigma(i,i)=\plus$, that is,
      $i$ supports its variation.
\item There is a regulation $j {\,\rightarrow\,} i$ in~$E$ such that
      $\sigma(j,i)$ is undefined.
      In fact, undetermined regulations are used in practice to model
      influences that vary, e.g., relative to environmental conditions.
      Any variation of the target~$i$ of such a regulation
      can be explained by assigning the appropriate label
      to $j {\,\rightarrow\,} i$ (w.r.t.\ the label of~$j$).
\item There are regulations $j {\,\rightarrow\,} i,k {\,\rightarrow\,} i$ in~$E$
      such that $\mu(j)\sigma(j,i)\!=\!\plus$ and $\mu(k)\sigma(k,i)\!=\nolinebreak\!\minus$.
      That is, any variation of~$i$ is already explained by the given observations.
\item An observed variation $\mu(i)$ of~$i$ is explained if there is
      some regulation $j {\,\rightarrow\,} i$ in~$E$ such that $\mu(j)\sigma(j,i)=\mu(i)$.
      Any further regulations targeting~$i$ can be ignored.
\item If for all regulations $i {\,\rightarrow\,} k$ in~$E$,
      we have that~$k$ is an input,
      then the variation of~$i$ is insignificant for its targets.
      In this case, if~$i$ is unobserved ($\mu(i)$ is undefined) and target
      of at least one regulation $j {\,\rightarrow\,} i$ in~$E$, 
      we can assign an appropriate label to~$i$
      (w.r.t.\ the labels of~$j$ and $j {\,\rightarrow\,} i$)
      without any further conditions.
\item There is a regulation $j {\,\rightarrow\,} i$ in~$E$ such that $j$
      is unobserved ($\mu(j)$ is undefined), an input, and
      all targets $k\neq i$ of~$j$ ($j {\,\rightarrow\,} k$ belongs to~$E$) are inputs.
      Without any further conditions, we can assign an appropriate label to~$j$
      for explaining the variation of~$i$.
\end{enumerate}

The reduction idea is to mark a vertex~$i$ as additional input, 
if it meets one of the above conditions.
Since the two last conditions inspect inputs,
they may become applicable to further vertices once inputs are added.
Hence, checking the conditions and adding inputs needs to be done
exhaustively.
As we see below, this can easily be encoded in ASP.

Reconsidering the influence graph and partial observations in
Figure~\ref{fig:mic-illustration}, we see that vertex~{\bf B}
receives an influence from~{\bf D} matching its observed increase.
Thus, the fourth condition applies to already explained vertex~{\bf B}.
Moreover, vertex~{\bf E}
is unobserved and does not regulate anything.
That is, the fifth condition applies to~{\bf E}, and its variation can
simply be picked from influences it receives from 
{\bf A}, {\bf C}, and~{\bf D}.
After establishing that~{\bf E} can be labeled consistently,
we find that~{\bf C} does not regulate any critically constrained vertex.
Applying again the fifth condition,
we notice that the variation of~{\bf C} is actually insignificant.

\begin{figure}[tb]
\centering
%
  \begin{tikzpicture}[->,semithick,>=stealth']
    \tikzstyle{species}=[draw, circle, fill=none,text=black,minimum size=17pt]
    \tikzstyle{posspecies}=[style=species,text=black,text opacity=1,minimum size=17pt,fill=lightgray]
    \tikzstyle{negspecies}=[style=species,text=black,text opacity=1,minimum size=17pt,fill=darkgray ]
    \node[species] (A) at (0:1.5)  {A};
    \node[posspecies,dotted] (B) at (72:1.5) {B};
    \node[species,dotted] (C) at (144:1.5) {C};
    \node[posspecies] (D) at (220:1.5) {D};
    \node[species,dotted] (E) at (288:1.5) {E};
    
    \path
    (A.110) edge (B.-50)
    (B.-70) edge (A.130)
    (A) edge[-|]  (D)
    (A) edge[-|] (E)
    (D) edge (E)
    (B) edge (C)
    (D) edge (B)
    (D) edge (C)
    (C) edge[bend right=70,looseness=1.8,-|] (E);
  \end{tikzpicture}
  \caption{A partially labeled influence graph with uncritical vertices
           surrounded by dots.}
  \label{fig:reduction}
\end{figure}
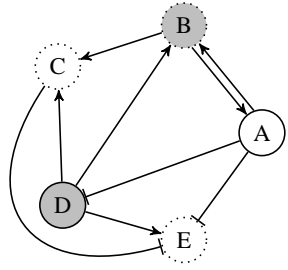

Figure~\ref{fig:reduction} shows the situation resulting from the
identification of uncritical vertices by
iteratively applying the above conditions.
The fact that only~{\bf A} and~{\bf D} are critically constrained tells us that
only they can belong to a MIC.
As a consequence, the MIC containing~{\bf A} and~{\bf D},
shown on the right-hand side of Figure~\ref{fig:mic-illustration},
is the only one in this example.

The aforementioned idea to mark uncritical vertices as $\atomfont{input}$
can be encoded as follows:
%
\begin{equation*}
  \begin{array}{r@{{}\leftarrow{}}l}
  \observed{V}    & \obsvlabel{V}{S}. \\
  \get{V}{\plus}  & \obselabel{U}{V}{S}, \obsvlabel{U}{S}. \\
  \get{V}{\minus} & \obselabel{U}{V}{S}, \obsvlabel{U}{T}, S\neq T. \\[1mm]
  \iinput{V}      & \obselabel{V}{V}{\plus}.\\
  \iinput{V}      & \edge{U}{V}, \naf{\obselabel{U}{V}{\plus}},\naf{\obselabel{U}{V}{\minus}}.\\
  \iinput{V}      & \get{V}{\plus}, \get{V}{\minus}. \\
  \iinput{V}      & \obsvlabel{V}{S}, \get{V}{S}. \\
  \iinput{V}      & \edge{U}{V}, \iinput{W}:\edge{V}{W}, \naf{\observed{V}}. \\
  \iinput{V}      & \edge{U}{V}, \iinput{W}:\edge{U}{W}:W\neq V, \iinput{U},\naf{\observed{U}}.
  \end{array}
\end{equation*}
Auxiliary predicates $\atomfont{obs}$ and $\atomfont{get}$
are used to exhibit whether either variation has been observed
for a vertex and whether a particular influence is received for certain,
respectively.
The last six rules check the described conditions (in the same order)
and mark a vertex as $\atomfont{input}$ if one of them applies.
Importantly, the above rules are stratified
and thus yield a unique set of derived input vertices.
This allows us to  perform the reduction efficiently within grounding,
without deferring to any procedural implementation
external to ASP.

The situation shown in Figure~\ref{fig:reduction} is reflected
by the reduction encoding deriving atoms
$\iinput{\text{\bf B}}$, $\iinput{\text{\bf C}}$, and $\iinput{\text{\bf E}}$
from an instance (cf.\ Section~\ref{subsec:instance})
corresponding to the depicted influence graph and observed variations.
Consistency checking and MIC identification
(cf.\ Section \ref{sec:checking} and~\ref{sec:diagnosis})
can then focus on the remaining non-input vertices~{\bf A} and~{\bf D}.


\subsection{Exploiting Strongly Connected Components for MIC Extraction}\label{sec:scc}

In what follows,
we introduce a connectivity property of MICs that can be
used to further refine the encoding presented in Section~\ref{sec:diagnosis}.
Incorporating additional background knowledge into the problem encoding
is straightforward (as soon as such knowledge is established).
In practice, ancillary (and actually redundant) conditions may significantly
narrow and thus speed up both the grounding and the solving process.

\paragraph{MIC Connectivity Property.}

For analyzing interactions within a MIC,
we make use of a graph described in the following.
Let $(V,E,\sigma)$ be an influence graph and
$\mu : V \rightarrow \{\plus,\minus\}$ be a (partial) vertex labeling,
and let $\dom{\mu}$ denote the set of vertices labeled by~$\mu$.
For a set~$W\subseteq V$ of vertices,
we define a graph~$(\vertices{W},\edges{W})$ by:
\begin{eqnarray}
  \vertices{W} 
  & \!\!=\!\! &
  W\cup\{j \mid (j {\,\rightarrow\,} i) \in E,
         i\in W\}
\nonumber
\\
  \edges{W}  
  & \!\!=\!\! &
  \{
    (j {\,\rightarrow\,} i) \mid
    (j {\,\rightarrow\,} i)\in E,
    i\in W
  \}  
  \cup
  \{
    (i {\,\rightarrow\,} j) \mid
    (j {\,\rightarrow\,} i)\in E,
    i\in W, j\notin\dom{\mu}    
  \}\text{\,.}
\nonumber
\end{eqnarray}
The construction of~$(\vertices{W},\edges{W})$ is based on the idea
that a regulator~$j$ of some~$i\in W$ is connected to~$i$ via its
sign consistency constraint,
and a connection in the opposite direction applies if~$j$
is unlabeled by~$\mu$.
In fact, given some total extensions
$\sigma' : E \rightarrow \{\plus,\minus\}$  of~$\sigma$
and $\mu' : V \rightarrow \{\plus,\minus\}$ of~$\mu$,
we can check a matching influence of~$j$ on~$i$ by
$\mu'(i)=\mu'(j)\sigma'(j,i)$ or equivalently by $\mu'(j)=\mu'(i)\sigma'(j,i)$.
That is, provided that $\mu(j)$ is undefined,
$\mu'(i)$ constrains~$\mu'(j)$ by contraposition whenever 
$i$ does not receive a matching influence from any other regulator than~$j$.
This observation motivates the inclusion of inverse edges from vertices in~$W$
to regulators unlabeled by~$\mu$ in~$\edges{W}$.

\begin{figure}[tb]
\centering
\begin{tabular}{l@{\hspace{30mm}}l}
  \begin{tikzpicture}[->,semithick,>=stealth']
    \tikzstyle{species}=[draw, circle, fill=none,text=black,minimum size=17pt]
    \tikzstyle{posspecies}=[style=species,text=black,text opacity=1,minimum size=17pt,fill=lightgray]
    \tikzstyle{negspecies}=[style=species,text=black,text opacity=1,minimum size=17pt,fill=darkgray ]
    \node[species] (A) at (0:1.5)  {A};
    \node[posspecies] (B) at (72:1.5) {B};
    \node[species] (C) at (144:1.5) {C};
    \node[posspecies] (D) at (220:1.5) {D};
    \node[species] (E) at (288:1.5) {E};
    
    \path
    (A.110) edge (B.-50)
    (B.-70) edge (A.130)
    (A) edge[-|]  (D)
    (A) edge[-|] (E)
    (D) edge (E)
    (B) edge (C)
    (D) edge (B)
    (D) edge (C)
    (C) edge[bend right=70,looseness=1.8,-|] (E);
  \end{tikzpicture}
&
  \begin{tikzpicture}[->,semithick,>=stealth']
    \tikzstyle{species}=[draw, circle, fill=none,text=black,minimum size=17pt]
    \tikzstyle{posspecies}=[style=species,text=black,text opacity=1,minimum size=17pt,fill=lightgray]
     \tikzstyle{negspecies}=[minimum size=17pt]
    \node[species] (A) at (0:1.5)  {A};
    \node[posspecies] (B) at (72:1.5) {B};
    \node[negspecies] (C) at (144:1.5) {};
    \node[posspecies] (D) at (220:1.5) {D};
    \node[negspecies] (E) at (288:1.5) {};
    
    \path
    (B.-70) edge (A.130)    
    (A) edge[<<-|]  (D);
  \end{tikzpicture}
\end{tabular}
  \caption{A partially labeled influence graph and the 
           graph $(\vertices{\{\text{\bf A},\text{\bf D}\}},\edges{\{\text{\bf A},\text{\bf D}\}})$.}
  \label{fig:mic-scc}
\end{figure}

For illustration, the right-hand side of Figure~\ref{fig:mic-scc} shows graph
$(\vertices{\{\text{\bf A},\text{\bf D}\}},\edges{\{\text{\bf A},\text{\bf D}\}})$
resulting from the partially labeled influence graph on the left-hand side.
The single regulator~{\bf B} of~{\bf A} is labeled, and thus there is no
inverse edge from~{\bf A} to~{\bf B} in \edges{\{\text{\bf A},\text{\bf D}\}}.
On the other hand, {\bf A} is an unlabeled regulator of~{\bf D},
and so \edges{\{\text{\bf A},\text{\bf D}\}} includes an inverse edge
from~{\bf D} to~{\bf A}.
The addition of this edge turns the subgraph of
$(\vertices{\{\text{\bf A},\text{\bf D}\}},\edges{\{\text{\bf A},\text{\bf D}\}})$
induced by~{\bf A} and~{\bf D} into a strongly connected component.
In view that~{\bf A} and~{\bf D} belong to a MIC
(as discussed in Section~\ref{sec:diagnosis}),
we below show that this connectivity is not by chance. 
%
%
\begin{theorem}[MIC Connectivity]\label{thm:connectivity}
Let $(V,E,\sigma)$ be an influence graph and
$\mu : V \rightarrow \{\plus,\minus\}$ a (partial) vertex labeling.

If $W\subseteq V$ is a MIC, then all vertices in~$W$
belong to the same strongly connected component in
$(\vertices{W},\edges{W})$.
\end{theorem}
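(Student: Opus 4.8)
The plan is to prove the contrapositive in a structured form: assuming $W \subseteq V$ is such that its vertices split across two or more strongly connected components of $(\vertices{W},\edges{W})$, I will show that $W$ fails to be a MIC — specifically, that the first condition of Definition~\ref{def:mic} fails, because the existence of a splitting lets us build total extensions $\sigma'$ and $\mu'$ under which every non-input vertex of $W$ is consistent. The key structural observation is that in a DAG of strongly connected components, there is always a component $C$ that is a \emph{sink} (no edges in $\edges{W}$ leave the union of vertices of $C$ toward other components, among the SCC-quotient edges). I would work with such a sink component and the vertices $W \cap C$ it contains.

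First I would set up the following dichotomy. Since $W$ is assumed \emph{not} to be a MIC by the existence of a nontrivial SCC decomposition, pick a sink SCC $C$ in $(\vertices{W},\edges{W})$ that does not contain all of $W$; let $W_C = W \cap (\text{vertices of } C)$ and note $\emptyset \neq W_C \subsetneq W$. The goal is to show the full $W$ is consistent by combining a consistent labeling for $W \setminus W_C$ (which exists if $W$ were a MIC, by the second condition applied to the proper subset $W' = W \setminus W_C$ — but since we are proving the contrapositive, I instead argue directly) with a consistent labeling for the vertices in $W_C$ obtained by exploiting the sink property. The crucial point is the definition of $\edges{W}$: an inverse edge $i {\,\rightarrow\,} j$ is added precisely when $j \notin \dom{\mu}$, i.e. when $j$'s label is free. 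So if $C$ is a sink and $i \in W_C$ has a regulator $j \notin \vertices{W_C}$ — actually $j \in \vertices{W}$ always, but $j$ lies in a different SCC — then the absence of a back-edge $i {\,\rightarrow\,} j$ forces $j \in \dom{\mu}$, meaning $j$'s sign is fixed by the profile and does \emph{not} depend on how we label $W_C$. Conversely, for $i \in W_C$ whose regulator $j$ is unlabeled, $j$ must lie in $\vertices{W_C}$ (otherwise the back-edge would connect $C$ to another component, contradicting that $C$ is a sink SCC), so $j$ is itself in the component we control.

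The heart of the argument is then: because $C$ is strongly connected and a sink, I can independently choose $\mu'$ on the unlabeled vertices of $\vertices{W_C}$ and $\sigma'$ on the edges internal to $\vertices{W_C}$ so that every $i \in W_C$ receives a matching influence — indeed, for each such $i$ just pick one incoming edge $j {\,\rightarrow\,} i$ and set $\sigma'(j,i)$ to make $\mu'(i) = \mu'(j)\sigma'(j,i)$ hold, which is possible because either $\sigma(j,i)$ was already defined and we adjust $\mu'(i)$ (if $i \notin \dom{\mu}$), or if both $i$ and $j$ are labeled we must exhibit a regulator for which the product already works — and this is exactly where I would need a small combinatorial lemma. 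Simultaneously, the rest of $W$ ($= W \setminus W_C$, lying in the ``upstream'' components) can be labeled consistently by an inductive / minimality-style argument, and these two labelings do not interfere because the only shared constraints flow from $\dom{\mu}$-fixed vertices. Patching the two partial labelings (extending arbitrarily elsewhere) yields total $\sigma'$, $\mu'$ witnessing consistency of all non-input vertices of $W$, contradicting condition~1 of Definition~\ref{def:mic}.

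The main obstacle I expect is the case analysis hidden in ``pick an incoming edge and adjust the sign'': when the target vertex $i$ is itself labeled by $\mu$, we cannot change $\mu'(i)$, so we must instead argue that \emph{some} regulator $j$ of $i$ (inside the component we control, or labeled) can be assigned a sign with $\mu'(j)\sigma'(j,i) = \mu(i)$ — and the subtlety is that if $j \in \dom{\mu}$ and the edge sign $\sigma(j,i)$ is also predetermined, the product may be wrong, forcing us to find a \emph{different} regulator. This is precisely why the sink SCC must be chosen carefully, and why the inverse-edge construction (adding $i {\,\rightarrow\,} j$ exactly when $j \notin \dom{\mu}$) is the right definition: a vertex in $W_C$ all of whose regulators are both labeled and sign-fixed with the wrong product would have been inconsistent on its own, making $\{i\}$ — or a small subset — already a MIC, which contradicts either minimality or gives a cleaner decomposition. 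Making this last step rigorous — reconciling the ``no back-edge means labeled'' bookkeeping with the fixed-product corner cases — is where the real work lies; everything else is the standard condensation-DAG-has-a-sink argument plus routine patching of partial labelings.
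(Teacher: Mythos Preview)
The paper omits the proof of this theorem (``The proof is omitted in view of space limitations''), so there is no reference argument to compare against directly. Your structural idea---take a sink SCC $C$ in the condensation of $(\vertices{W},\edges{W})$ and exploit that (i) regulators of vertices in $W\setminus W_C$ cannot lie in $C$, and (ii) \emph{unlabeled} regulators of vertices in $W_C$ must lie in $C$---is exactly the right one, and it does lead to a clean proof. But your strategic framing contains a real gap.

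You commit to showing that condition~1 of Definition~\ref{def:mic} fails (``the first condition \ldots\ fails, because the existence of a splitting lets us build total extensions $\sigma'$ and $\mu'$ under which every non-input vertex of $W$ is consistent''), and you explicitly decline to use condition~2, writing ``but since we are proving the contrapositive, I instead argue directly.'' This is a mistake: the bare implication ``$W$ spans multiple SCCs $\Rightarrow$ $W$ has witnessing labelings'' is false. A set $W$ can be genuinely inconsistent (condition~1 holds) while spanning several SCCs; it just will not be \emph{minimal}. Your attempt to build the $W_C$-labeling from scratch accordingly runs into the ``corner cases'' you flag at the end, and those are not corner cases at all---without minimality there is no reason such a labeling exists, and your closing paragraph is essentially rediscovering this obstruction.

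The fix is simple: argue by contradiction and use \emph{both} MIC conditions. Assume $W$ is a MIC that spans more than one SCC, and pick a sink SCC $C$ with $W_C=W\cap C$; then $\emptyset\neq W_C\subsetneq W$. Condition~2 now hands you witnessing labelings $(\sigma_1,\mu_1)$ for $W_C$ and $(\sigma_2,\mu_2)$ for $W\setminus W_C$, both extending $(\sigma,\mu)$. Glue them: set $\mu'=\mu_1$ on the vertices of $C$ and $\mu'=\mu_2$ elsewhere; set $\sigma'=\sigma_1$ on edges whose target lies in $W_C$ and $\sigma'=\sigma_2$ elsewhere. Your observations (i) and (ii) are precisely what makes the glued labeling witness consistency of all of $W$: for $i\in W_C$ the regulator $j$ explaining $\mu_1(i)$ is either in $C$ (so $\mu'(j)=\mu_1(j)$) or else, by (ii), lies in $\dom{\mu}$ (so $\mu_1(j)=\mu(j)=\mu_2(j)=\mu'(j)$); for $i\in W\setminus W_C$, by (i) the explaining regulator is outside $C$, hence $\mu'$ agrees with $\mu_2$ there. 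This contradicts condition~1, and the corner cases vanish entirely.

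One small point you gloss over and should justify: every sink SCC $C$ of $(\vertices{W},\edges{W})$ meets $W$. Indeed, any $j\in\vertices{W}\setminus W$ is, by definition of $\vertices{W}$, a regulator of some $i\in W$, so $(j\to i)\in\edges{W}$ is an outgoing edge that must stay in $C$, giving $i\in W\cap C$.
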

The proof is omitted in view of space limitations and can be obtained from the authors.

\paragraph{Optimized MIC Encoding.}

We now apply     Theorem~\ref{thm:connectivity} to improve the basic
MIC extraction encoding (cf.\ Section~\ref{sec:diagnosis}) in two aspects:
adding (redundant) constraints for search space pruning and
adding positive body literals for reducing grounding efforts.
The following rules pave the way by determining the (non-trivial) strongly connected
components in $(V,\edges{V})$ as an over-approximation of
the ones in $(\vertices{W},\edges{W})$ for any $W\subseteq V$:
\begin{equation}\label{eq:reach}
  \begin{array}{@{}r@{{}\leftarrow{}}l}
  \sedge{U}{V} & \edge{U}{V},\naf{\iinput{V}}.\\
  \sedge{V}{U} & \edge{U}{V},\naf{\iinput{V}},\naf{\obsvlabel{U}{\plus}},\naf{\obsvlabel{U}{\minus}}.\hspace*{-10mm}\\[1mm]
  \reach{U}{V} & \sedge{U}{V}.\\
  \reach{U}{V} & \sedge{U}{W},\reach{W}{V},\vertex{V}.\\[1mm]
  \cycle{U}{V} & \reach{U}{V},\reach{V}{U}, U \neq V.
  \end{array}
\end{equation}
The first rule simply collects edges whose targets are not input,
while the second rule adds edges in the inverse direction for
unobserved regulators.
Reachability w.r.t.\ the so obtained graph is determined via
the third and the fourth rule.
Finally, predicate $\atomfont{cycle}$ indicates whether two (distinct)
vertices reach each other in $(V,\edges{V})$ relative to an
influence graph $(V,E,\sigma)$ and a (partial) vertex labeling~$\mu$.
In fact, if two vertices belong to a MIC~$W\subseteq V$,
then mutual reachability in $(\vertices{W},\edges{W})$ implies the
same in $(V,\edges{V})$, in view that $\vertices{W}\subseteq V$ and
$\edges{W}\subseteq\edges{V}$.
Conversely, if two vertices do not reach each other in $(V,\edges{V})$,
then they cannot jointly belong to any MIC.

The over-approximation of potential MICs provides an easy means to prune
the search space by adding the following integrity constraint:
\begin{equation}\label{eq:cycle}
  \begin{array}{@{}r@{{}\leftarrow{}}l}
  & \activ{U},\activ{V},U<V,\naf{\cycle{U}{V}}.
  \end{array}
\end{equation}
The constraint makes the fact explicit that distinct vertices of a MIC
must reach each other in $(V,\edges{V})$,
and it immediately refutes MIC candidates that do not satisfy this condition.

After making use of Theorem~\ref{thm:connectivity} to narrow search,
we now shift the focus to grounding.
As a matter of fact, the quadratic space complexity of the minimality test's
ground instantiation, as encoded in~(\ref{eq:minimal}),
is a major bottleneck in scaling.
The knowledge about potential pairwisely connected vertices in MICs,
represented by integrity constraint~(\ref{eq:cycle}),
also allows us to include positive body literals in order to restrict the scope of
minimality tests:

\begin{equation}\label{eq:minopt}
  \begin{array}{@{}r@{}c@{}l}
  \vlabell{W}{V}{\plus}; \vlabell{W}{V}{\minus} & {} \leftarrow {} &
    \activ{W}, \activ{V},   \cycle{V}{W}.
  \\
  \vlabell{W}{U}{\plus}; \vlabell{W}{U}{\minus} & {} \leftarrow {} &
    \activ{W}, \aedge{U}{V}, \cycle{V}{W}.\hspace*{-10mm}
  \\
  \elabell{W}{U}{V}{\plus}; \elabell{W}{U}{V}{\minus} & {} \leftarrow {} &
    \activ{W}, \aedge{U}{V}, \cycle{V}{W}.\hspace*{-10mm}
  \\[1mm]
  \multicolumn{3}{@{}l}{
  \begin{array}{@{}r@{}c@{}l}
  \vlabell{W}{V}{S} & {} \leftarrow {} &
    \activ{W}, \obsvlabel{V}{S}, \cycle{V}{W}.
  \\
  \vlabell{W}{U}{S} & {} \leftarrow {} &
    \activ{W}, \obsvlabel{U}{S}, \edge{U}{V}, \cycle{V}{W}.
  \\
  \elabell{W}{U}{V}{S} & {} \leftarrow {} &
    \activ{W}, \obselabel{U}{V}{S}, \cycle{V}{W}.
  \\[1mm]
  \influencel{W}{V}{\plus} & {} \leftarrow {} & 
    \elabell{W}{U}{V}{S}, \vlabell{W}{U}{S}.
  \\
  \influencel{W}{V}{\minus} & {} \leftarrow {} & 
    \elabell{W}{U}{V}{S}, \vlabell{W}{U}{T}, S\neq T.
  \\[1mm]
  \multicolumn{3}{l}{
                    \qquad {} \leftarrow 
    \vlabell{W}{V}{S}, \activ{V}, \cycle{V}{W}, \naf{\influencel{W}{V}{S}}.
  }
  \end{array}
  }
  \end{array}  
\end{equation}
In comparison to~(\ref{eq:minimal}), the extra condition~\cycle{V}{W} in
the bodies of the first three rules establishes that labels used for
testing minimality are guessed only for pairs~$W$ and~$V$ of vertices
that can potentially jointly belong to a MIC.
The same restriction is used in the next three rules forwarding observed
vertex and edge labels, but now limited to vertices that can jointly belong to
a MIC and to their respective regulators.
Finally, the last two rules and the integrity constraint perform the same test
as in~(\ref{eq:minimal}) for a restricted set of pairs~$W$ and~$V$.
(The fact that~\cycle{V}{W} implies $V\neq W$ in \elabell{W}{U}{V}{S} also allows
 us to drop this condition, used in~(\ref{eq:minimal}), from the bodies of the rules
 defining \atomfont{receive'}.)

The complete optimized MIC encoding consists of the original rules
in~(\ref{eq:known}), (\ref{eq:active}), and~(\ref{eq:inconsistency}),
(\ref{eq:reach}) and~(\ref{eq:cycle}) as add-ons, and
(\ref{eq:minopt}) as a replacement for~(\ref{eq:minimal}).
As regards the computational impact,
we note that the optimized encoding needs less than two seconds for
grounding and finding all MICs 
on the case study in Section~\ref{sec:benchmark:case:study},
which took more than a minute 
with the unoptimized encoding.

A second version of the optimized encoding is obtained by tightening the consideration
of connected vertices in $(\vertices{W},\edges{W})$ relative to a MIC candidate~$W$.
This can be achieved by adding condition \activ{V} to the rules in~(\ref{eq:reach})
defining the $\atomfont{edges}$ predicate.
In this way, the static reachability information encoded in~(\ref{eq:reach}),
which is completely evaluated by grounder \gringo, is turned into a dynamic
relation computed 
during search.
As it turns out,
there is no significant performance difference between these two versions of the optimized
MIC extraction encoding on the case study in Section~\ref{sec:benchmark:case:study}.
Hence, more real examples are needed to reliably compare their 
grounding and solving efficiency.


\section{Empirical Evaluation and Application}
\label{sec:benchmark}

For assessing the scalability of our approach,
we start by conceiving a parameterizable suite of artificial
yet biologically meaningful benchmarks.
After that, we present a typical application stemming from real biological data,
illustrating the exertion in practice.
All experiments were performed using input reduction as explained in
Section~\ref{sec:reduction}. 

\subsection{Checking Consistency}
\label{sec:benchmark:consistency}

%
%
We first evaluate our approach on randomly generated instances,
aiming at structures similar to those found in biological applications.
Instances are composed of an influence graph, a complete labeling of its edges,
and a partial labeling of its vertices. 
Our random generator takes three parameters: 
(i)   the number~$\alpha$ of vertices in the influence graph, 
(ii)  the average degree~$\beta$ of the graph, and 
(iii) the proportion~$\gamma$ of observed variations for vertices. 
To generate an instance, we compute a random graph with~$\alpha$ vertices
(the value of~$\alpha$ varying from $500$ to $4000$)
under the model by Erd\H{o}s-R{\'e}nyi~\shortcite{erdos}.
Each pair of vertices has equal probability to be connected via an edge,
whose label is chosen independently with probability~$0.5$ for both signs.
We fix the average degree~$\beta$ to~$2.5$, 
which is considered to be a typical value for biological networks \cite{barabasi}. 
Finally,
$\lfloor \gamma \alpha \rfloor$ vertices are chosen with uniform probability
and assigned a label with probability~$0.5$ for both signs. 
For each number~$\alpha$ of vertices,
we generated 50 instances using five different values for~$\gamma$, viz.,
$0.01$, $0.02$, $0.033$, $0.05$, and~$0.1$.
All instances are available at \cite{bioasptoolchain}.

\begin{table}[t]
   \begin{center}
  \begin{tabular}{|r|r|r|r|r|r|r|}
    \cline{1-7}
    &\textit{claspD} 
    &\textit{claspD} 
    &\textit{claspD} 
    &\textit{cmodels} 
    &\textit{dlv} 
    &\textit{gnt} \\
    $\alpha$
    &\textit{Berkmin} 
    &\textit{VMTF} 
    &\textit{VSIDS} 
    & 
    & 
    & \\  
    \cline{1-7}
 500&0.14&0.11&0.11&0.16& 0.46& 0.71 \\
1000&0.41&0.25&0.25&0.35& 1.92& 3.34 \\
1500&0.79&0.38&0.38&0.53& 4.35& 7.50 \\
2000&1.33&0.51&0.51&0.71& 8.15&13.23 \\
2500&2.10&0.66&0.66&0.89&13.51&21.88 \\
3000&3.03&0.80&0.79&1.07&20.37&31.77 \\
3500&3.22&0.93&0.92&1.15&21.54&34.39 \\
4000&4.35&1.06&1.06&1.36&30.06&46.14 \\
    \cline{1-7}
  \end{tabular}
   \end{center}
  \caption{Run-times for 
           consistency checking with \claspD, \cmodels, \dlv, and \gnt.\label{tab:consistency_bench}}%
\end{table}%
%
%
We used \gringo\ (2.0.0) \cite{potasscoManual}
for combining the generated instances and the encoding given in
Section~\ref{sec:checking} into equivalent ground programs.
For checking consistency by computing an answer set (if it exists),
we ran disjunctive ASP solvers 
\claspD\ (1.1) \cite{drgegrkakoossc08a}
with ``Berkmin'', ``VMTF'', and ``VSIDS'' heuristics, 
\cmodels\ (3.75) \cite{gilima06a}
using \sysfont{zchaff}, 
\dlv\ (BEN/Oct~11) \cite{dlv03a}, and
\gnt\ (2.1) \cite{janisesiyo06a}.
All runs were performed on a Linux machine equipped with 
an AMD Opteron 2 GHz processor and a memory limit of~2GB RAM.

Table~\ref{tab:consistency_bench}
shows average run-times in seconds over 50 instances per number~$\alpha$ of vertices,
including grounding times of \gringo\ and solving times.
We checked that grounding times of \gringo\ increase linearly with the 
number~$\alpha$ of vertices, and 
they do not vary significantly over~$\gamma$. 
For all solvers, run-times 
also increase linearly in~$\alpha$.\footnote{%
Longer run-times of \claspD\ with ``Berkmin'' in comparison to 
the other heuristics are due to a more expensive computation
of heuristic values in the absence of conflict information.
Furthermore, the time needed for performing ``Lookahead''
slows down \dlv\ as well as \gnt.} 
For fixed~$\alpha$ values,
we found two clusters of instances:
consistent ones   where total labelings were easy to compute, and
inconsistent ones where inconsistency was detected from preassigned labels.
This tells us that the influence graphs generated as described above
are usually (too) easy to label consistently, and inconsistency only occurs
if it is explicitly introduced via fixed labels.
However, such constellations are not unlikely in practice (cf.\ Section~\ref{sec:benchmark:case:study}),
and isolating MICs from them, as done in the next subsection, turned out to be hard for most solvers.
Finally, greater values for~$\gamma$ led to an increased
proportion of inconsistent instances, without making them much harder.
%

\subsection{Identifying Minimal Inconsistent Cores}
\label{sec:benchmark:diagnosis}

%
We now investigate the problem of finding a MIC within the
same setting as in the previous subsection.
Because of the elevated size of ground instantiations and problem difficulty,
we varied the number~$\alpha$ of vertices from~$50$ to~$300$, thus,
using considerably smaller influence graphs than before.
We again use \gringo\ for grounding,
now taking the encoding given in Section~\ref{sec:diagnosis}.
As regards solving,
we restrict our attention to \claspD\ because all three of the other solvers
showed drastic performance declines.

\begin{table}[t]
  \begin{center}
  \begin{tabular}{|r|r|r|r|r|}
    \cline{1-5}
    &\textit{gringo}
    &\textit{claspD} 
    &\textit{claspD} 
    &\textit{claspD} \\
    $\alpha$
    &
    &\textit{Berkmin} 
    &\textit{VMTF} 
    &\textit{VSIDS} \\
    \cline{1-5}
 50&0.24&  1.16 (0)&  0.65 (0)&  0.97 (0) \\
 75&0.55& 39.11 (1)&  1.65 (0)&  3.99 (0) \\
100&0.87& 41.98 (1)&  3.40 (0)&  4.80 (0) \\
125&1.37& 15.47 (0)& 47.56 (1)& 10.73 (0) \\
150&2.02& 54.13 (0)& 48.05 (0)& 15.89 (0) \\
175&2.77& 30.98 (0)&116.37 (2)& 23.07 (0) \\
200&3.82& 42.81 (0)& 52.28 (1)& 24.03 (0) \\
225&4.94& 99.64 (1)& 30.71 (0)& 41.17 (0) \\
250&5.98&194.29 (3)&228.42 (5)&110.90 (1) \\
275&7.62&178.28 (2)&193.03 (4)& 51.11 (0) \\
300&9.45&241.81 (2)&307.15 (7)&124.31 (0) \\
    \cline{1-5}
  \end{tabular}
  \end{center}
  \caption{Run-times for grounding with \gringo\ and solving with \claspD.\label{tab:diagnosis_bench}}%
\end{table}%
%
Table~\ref{tab:diagnosis_bench} shows
average run-times in seconds over 50 instances per number~$\alpha$ of vertices.  
Timeouts, indicated in parentheses, are taken as maximum time of 1800 seconds.
We observe a quadratic increase in grounding times of \gringo,
which is in line with the fact that ground instantiations for our MIC encoding
grow quadratically with the size of influence graphs.
In fact, the schematic rules in Section~\ref{subsec:minimal} give
rise to~$\alpha$ copies of an influence graph.
Considering solving times spent by \claspD\ for finding one MIC (if it exists),
we observe that they are relatively stable,
in the sense that they are tightly correlated to grounding times.
This regularity again confirms that, though it is random, 
the applied generation pattern tends to produce rather uniform influence graphs.
Finally, we observed that unsatisfiable instances, i.e., consistent instances without any MIC,
were easier to solve than the ones admitting answer sets.
We conjecture that this is because consistent total labelings provide a
disproof of inconsistency as encoded in Section \ref{subsec:muc:inconsistent}.
%

As our experimental results demonstrate,
computing MICs 
is computationally harder than just checking consistency. 
This is not surprising because 
the related (yet simpler) decision problem of verifying 
a MUC is
D$^{\text{P}\!}$-complete \cite{scalableMUC,papyan82a}
and thus more complex than just deciding satisfiability.
With our declarative technique,
we spot 
the quadratic space blow-up incurred by the MIC encoding 
in Section~\ref{sec:diagnosis} 
as a bottleneck. 
However, there are approaches aiming at a reduction of grounding efforts,
and some of them have been presented in Section~\ref{sec:refinements}.

%

\subsection{Biological Case Study}
\label{sec:benchmark:case:study}

In the following, we present the results of applying our approach
to real-world data of genetic regulations in yeast.
We tested the gene-regulatory network of yeast provided in \cite{guelzim}
against genetic profile data of \emph{snf2} knock-outs \cite{snf2-ko}
from the Saccharomyces Genome Database\footnote{\url{http://www.yeastgenome.org}}.
The regulatory network of yeast contains 909 genetic or
biochemical regulations, all of which have been established experimentally,
among 491 genes.

\begin{figure}[t]
  \centering
  \begin{tikzpicture}[->,semithick,>=stealth']
    \scriptsize
    \tikzstyle{species}=[draw, ellipse, fill=none,text=black,minimum size=17pt]
    \tikzstyle{posspecies}=[style=species,text=black,text opacity=1,minimum size=17pt,fill=lightgray]
    \tikzstyle{negspecies}=[style=species,text=white,minimum size=17pt,fill=black ]
    \node[posspecies] (ume6) at (1.0, 2.0) {ume6};
    \node[posspecies] (hsf1) at (1.0, 1.0) {hsf1};
    \node[posspecies] (ume6_2) at (3.0,2.0) {ume6};
    \node[posspecies] (spo12) at (3.0,1.0) {spo12};
    \node[posspecies] (ume6_3) at (5.0,2.0) {ume6};
    \node[negspecies] (ino2) at (5.0,1.0) {ino2};
    \node[species] (reb1) at (7.0,2.0) {reb1};
    \node[negspecies] (hsc82) at (7.0,1.0) {hsc82};
    \node[posspecies] (ume6_4) at (9.0,2.0) {ume6};
    \node[posspecies] (top1) at (9.0,1.0) {top1};
\node (A) at (0.0,0.0) {}; 
    \path
    (ume6) edge[-|] (hsf1)
    (ume6_2) edge[-|] (spo12)
    (ume6_3) edge (ino2)
    (reb1) edge (hsc82)
    (reb1) edge (top1)
    (ume6_4) edge[-|] (top1);
  \end{tikzpicture}

  \begin{tikzpicture}[->,semithick,>=stealth']
    \scriptsize
    \tikzstyle{species}=[draw, ellipse, fill=none,text=black,minimum size=17pt]
    \tikzstyle{posspecies}=[style=species,text=black,text opacity=1,minimum size=17pt,fill=lightgray]
    \tikzstyle{negspecies}=[style=species,text=white,text opacity=1,minimum size=17pt,fill=black ]
    \node[species] (reb1) at (3.0,2.0) {reb1};
    \node[negspecies] (rap1) at (3.0,1.0) {rap1};
    \node[posspecies] (ume6_4) at (5.0,2.0) {ume6};
    \node[posspecies] (top1) at (5.0,1.0) {top1};
    \node[species] (reb1_2) at (8.0,4.0) {reb1};
    \node[species] (sin3) at (7.0,3.0) {sin3};
    \node[posspecies] (ume6_5) at (7.0,2.0) {ume6};
    \node[posspecies] (top1_2) at (8.0,1.0) {top1};
    \path
    (reb1)   edge (rap1)
    (reb1)   edge (top1)
    (ume6_4) edge[-|] (top1)
    (reb1_2) edge (sin3)
    (sin3)   edge[-|] (ume6_5) 
    (ume6_5) edge[-|] (top1_2)
    (reb1_2) edge[bend left=50,looseness=0.8] (top1_2);
  \end{tikzpicture}

  \caption{Some MICs obtained by comparing the regulatory
    network of yeast 
    with a genetic profile.} 
  \label{fig:snf2-mics}
\end{figure}
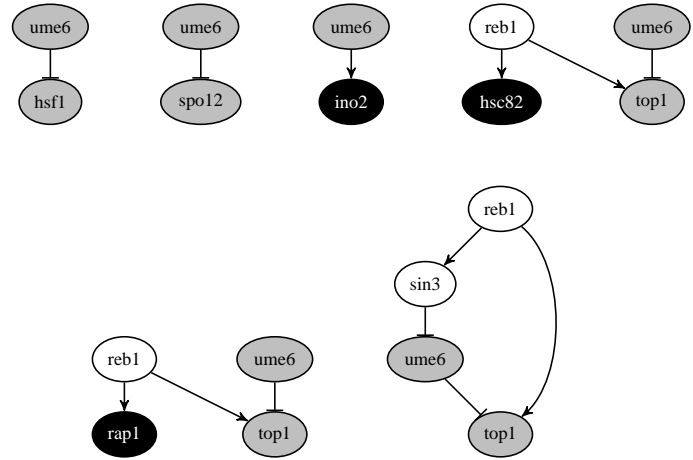

Comparing the yeast regulatory network with the genetic profile of \emph{snf2},
we found the data to be inconsistent with the network,
which was easily detected using the approach of Section~\ref{sec:checking}.
Applying our diagnosis technique from Section~\ref{sec:diagnosis}, we obtained a total of~19 MICs.
While computing the first MIC took less than a second using \gringo\ and \claspD\
(regardless of the heuristic used),
the computation of all MICs was considerably harder.
Using ``VMTF'' as search heuristic on top of the enumeration algorithm \cite{gekanesc07c}
inherited from \clasp\ \cite{gekanesc07b},
\claspD\ had found all 19 MICs in about 30 seconds, while another 40 seconds were needed
to decide that there is no further MIC.
With ``VSIDS'', finding the 19 MICs took about the same time as with ``VMTF'',
but another 80 seconds were   used to verify that all MICs had been found.
Finally, using ``Berkmin'' heuristic,
12 MICs had been found before aborting after 30 minutes.
The observation that search heuristics matter tells us that investigations into
the structure of biological problems and particular methods to solve them efficiently
can earn considerable benefits.%
\footnote{%
Notably, by exploiting additional background knowledge,
the optimized encoding presented in Section~\ref{sec:scc}
requires less than two seconds (regardless of heuristics) for grounding and finding all 19 MICs.
In fact, its ground instantiation contains only 8481 atoms and 10843 rules,
compared to 47260 atoms and 56522 rules with the basic encoding in Section~\ref{sec:diagnosis}.
In addition to problem size,
also the difficulty drops dramatically: from 23345 conflicts down to 270 conflicts,
encountered with ``VMTF'' heuristic during search for all answer sets.%
}
Furthermore, we note that the potential existence of multiple answer sets encompassing the same MIC
did not emerge on the yeast network and \emph{snf2} knock-out data.
That is, we obtained 19 answer sets, each one corresponding one-to-one to a MIC.

Six of the computed MICs are exemplarily shown in Figure~\ref{fig:snf2-mics}.
While the first three of them are pretty obvious,
we also identified more complex topologies.
However,
our example demonstrates that the MICs obtained in practice 
are still small enough to be understood easily.
For finding suitable corrections to the inconsistencies, 
it is often even more helpful to display the connections between several overlapping MICs.
Observe that all six MICs in Figure~\ref{fig:snf2-mics} are related to gene
\emph{ume6}.
Connecting them yields the subgraph of the yeast regulatory network in Figure~\ref{fig:snf2-mics-connected}.

\begin{figure}[t]
  \centering
  \begin{tikzpicture}[->,semithick,>=stealth']
    \scriptsize
    \tikzstyle{species}=[draw, ellipse, fill=none,text=black,minimum size=17pt]
    \tikzstyle{posspecies}=[style=species,text=black,text opacity=1,minimum size=17pt,fill=lightgray]
    \tikzstyle{negspecies}=[style=species,text=white,text opacity=1,minimum size=17pt,fill=black ]
    \node[species]    (reb1) at (4.0,4.0) {reb1};
    \node[negspecies] (hsc82) at (1.0,3.0) {hsc82};
    \node[negspecies] (rap1) at (3.0,3.0) {rap1};
    \node[species] (sin3) at (5.0,3.0) {sin3};
    \node[posspecies] (ume6) at (5.0,2.0) {ume6};
    \node[negspecies] (ino2) at (1.0,1.0) {ino2};
    \node[posspecies] (hsf1) at (3.0, 1.0) {hsf1};
    \node[posspecies] (spo12) at (5.0,1.0) {spo12};
    \node[posspecies] (top1) at (7.0,1.0) {top1};
    \path
    (reb1) edge (hsc82)
    (reb1) edge (rap1)
    (reb1) edge (sin3)
    (reb1) edge[bend left=50,looseness=0.9] (top1)
    (sin3) edge[-|] (ume6)
    (ume6) edge (ino2)
    (ume6) edge[-|] (hsf1)
    (ume6) edge[-|] (spo12)
    (ume6) edge[-|] (top1);
  \end{tikzpicture}

  
  \caption{Subgraph obtained by connecting the six MICs given in Figure~\ref{fig:snf2-mics}.}
  \label{fig:snf2-mics-connected}
\end{figure}
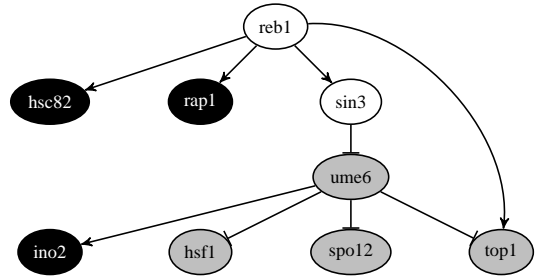
The most obvious problem in Figure~\ref{fig:snf2-mics-connected} is
that the observed increase of \emph{ume6} is incompatible with 
   its four targets. This suggests that either the observation on
\emph{ume6} is incorrect      or that some regulations are missing or
wrongly modeled. In the first hypothesis though, one should note that
the current model cannot explain a decrease of \emph{ume6}: this would
imply an increase of \emph{sin3} and in turn an increase
of \emph{reb1}, but then there would be no explanation left for the
variation of \emph{hsc82} and \emph{rap1}. So, in either case, our model
should be revised. This is not a great surprise: our literature-based
network, although very reliable, was presumably far from being complete.

Regarding the biological background, note that
\emph{ume6} is a known regulator of sporulation in yeast: in case
of nutritional stress, yeast cells stop dividing and produce spores
by meiosis.  These spores are reproductive structures better adapted
to extreme conditions. \emph{ume6}
is known as a key inhibitor of early meiotic genes: upon entry in
meiosis, this inhibitory effect is released and the target genes are
expressed. Notably, a knock-out of \emph{ume6} causes the
expression of meiotic genes during vegetative growth (hence its name,
\emph{Unscheduled Meiotic Expression}) as well as almost complete
failure of sporulation \cite{pmid11238941}. \emph{ume6} seems to have
activation capabilities as well, though in that case the effect is
believed to be indirect~\cite{chen2007}. 

In the current view, \emph{ume6} switches from inhibitor to (indirect)
activator at the beginning of meiosis: Ume6p (the protein
corresponding to the gene \emph{ume6}) has a repressive effect when it
forms a complex with Sin3p (note that \emph{sin3} is in our network)
and Rdp3p, which is degraded upon entry in
meiosis~\cite{mallory2007}. This molecular mechanism can be
interpreted in our model and one possible result is given in
Figure~\ref{fig:snf2-mics-corrected}.  At least for negative targets,
we now have a plausible explanation: the real effector of the
inhibition on \emph{hsf1}, \emph{spo12}, \emph{top1}, and \emph{ume6}
itself is the complex Ume6p-Sin3p, whose variation is unobserved but
depends on the variation of \emph{ume6} and \emph{sin3}.  The
variation of the targets can be explained if the protein complex
decreases, which is in turn possible if \emph{sin3}
decreases. Regretfully \emph{sin3} is not observed in our data, but we
note that a decrease of this gene is fully compatible with the rest
of the network, that is, if we suppose a decrease of \emph{reb1}.
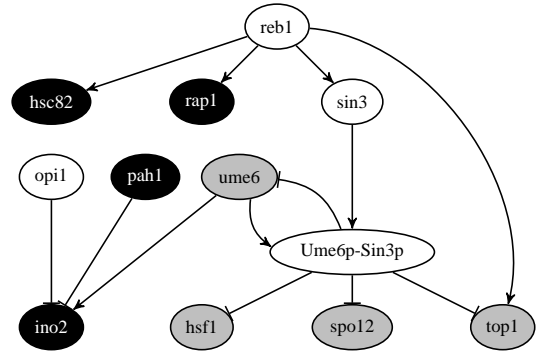
\begin{figure}[t]
  \centering
  \begin{tikzpicture}[->,semithick,>=stealth']
    \scriptsize
    \tikzstyle{species}=[draw, ellipse, fill=none,text=black,minimum size=17pt]
    \tikzstyle{posspecies}=[style=species,text=black,text opacity=1,minimum size=17pt,fill=lightgray]
    \tikzstyle{negspecies}=[style=species,text=white,text opacity=1,minimum size=17pt,fill=black ]
    \node[species]    (reb1) at (4.0,4.0) {reb1};
    \node[negspecies] (hsc82) at (1.0,3.0) {hsc82};
    \node[negspecies] (rap1) at (3.0,3.0) {rap1};
    \node[species] (sin3) at (5.0,3.0) {sin3};
    \node[posspecies] (ume6) at (3.5,2.0) {ume6};
    \node[species]    (ume6sin3) at (5.0,1.0) {Ume6p-Sin3p};
    \node[negspecies] (ino2) at (1.0,0.0) {ino2};
    \node[posspecies] (hsf1) at (3.0, 0.0) {hsf1};
    \node[posspecies] (spo12) at (5.0,0.0) {spo12};
    \node[posspecies] (top1) at (7.0,0.0) {top1};

    \node[species] (opi1) at (1.0,2.0) {opi1};
    \node[negspecies] (pah1) at (2.25,2.0) {pah1};
    \path
    (opi1) edge[-|] (ino2)
    (pah1) edge[-|] (ino2);

    \path
    (reb1) edge (hsc82)
    (reb1) edge (rap1)
    (reb1) edge (sin3)
    (reb1) edge[bend left=50,looseness=0.9] (top1)
    (ume6sin3) edge[bend right=30,looseness=1.1,-|] (ume6)
    (ume6) edge[bend right=30,looseness=1.1] (ume6sin3)
    (ume6) edge (ino2)
    (sin3) edge (ume6sin3)

    (ume6sin3) edge[-|] (hsf1)
    (ume6sin3) edge[-|] (spo12)
    (ume6sin3) edge[-|] (top1);
  \end{tikzpicture}

  
  \caption{Local correction of the network based on our diagnosis
    method and literature research.}
  \label{fig:snf2-mics-corrected}
\end{figure}
Now concerning \emph{ino2}, our network should be updated with more
recent evidence: as reviewed in~\cite{chen2007}, \emph{ino2} has
several additional regulators, such as \emph{opi1} and
\emph{pah1} (see Figure~\ref{fig:snf2-mics-corrected}). The observed
variation of \emph{pah1} is not useful to explain that of \emph{ino2},
but \emph{opi1} is definitely a plausible candidate.

Here we illustrated one main usage of our diagnosis technique:
identifying poorly modeled regions of a regulatory network that are
incompatible with a given data set. This is definitely a key asset  if
one wants to build a large-scale regulatory database  and check its
coherence with newly produced data on a regular basis. Given new
data,    our diagnosis method produces human-understandable representations
of possible incompatibilities with the current model, which serve as
the basis for a targeted literature research. With this data-driven
approach, a network can then be improved with considerably less effort
than with a random traversal of publications, for a much more
coherent result.


\section{Web Service}\label{sec:web}

To make our methods easily accessible to a biological audience,
we built a web service\footnote{\url{http://data.haiti.cs.uni-potsdam.de/wsgi/app}}
not requiring any locally installed software on the user side except for a web browser.
It provides the possibility to upload textual representations of 
biological networks as well as experimental profiles.
Also, a number of predefined examples
allows a user to instantly experience the functionalities of the web service.
These include consistency checking and diagnosis, i.e., finding MICs,
whose implementation has been detailed in Section~\ref{sec:checking} and~\ref{sec:diagnosis}.


Influence graphs representing biological networks usually contain vertices that
are not subject to any regulation.
Such entities are understood as controlled by external factors,
like environmental or particular experimental conditions.
To avoid trivial inconsistencies due to such unregulated and thus unexplainable vertices,
the web interface provides an option ``Guess input nodes'' for
automatically declaring all vertices without any predecessor as inputs.
While consistency checking simply results in a positive or negative answer,
we offer three diagnosis modes: 
``find one inconsistency'',
 ``find all inconsistencies'', and
``approximate all inconsistencies''.
The first mode aims at finding a single MIC, and the second at finding all of them.
For the latter, we currently use an encapsulating script that repeatedly calls \claspD\
while feeding already identified MICs back as integrity constraints, until no
further answer set exists.
This makes sure that each answer set corresponds to a new MIC and thus
avoids potential repetitions.
The problem of enumerating answer sets that differ on a set of ``relevant'' atoms
(in our case, on instances of predicate $\atomfont{active}$)
is addressed in \cite{gekasc09a}.
The integration of this technique into \claspD, in order to make the wrapper script obsolete,
is subject to future work.
Once MICs have been computed, they can be represented either textually or graphically,
as shown in Figure~\ref{diagnosis:results}.
If the result consists of several MICs, it is possible to view overlapping ones in a combined way,
thus highlighting regions of inconsistency.
Finally, the third diagnosis mode, ``approximate all inconsistencies'', works by marking the vertices of
a computed MIC as inputs before proceeding to look for further MICs.
This approach has been used in previous work \cite{gubomosi09a} and
has been integrated into our framework for comparison.
However, the results obtained with the third mode depend on the order in which MICs are found
and their vertices declared to be inputs in future computations.
Further functionalities, like prediction under consistency \cite{coliRIAMS}
and inconsistency \cite{geguivscsithve09a}, are also
featured by the web service but are outside the scope of this paper.



\begin{figure}[t]
 \centering
\ifpdf
 \raisebox{66px}{\includegraphics[width=180px,height=138px]{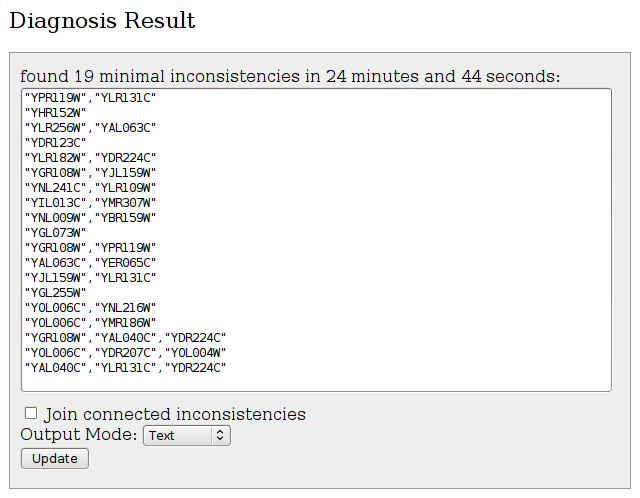}}%
 \includegraphics[width=180px,height=204px]{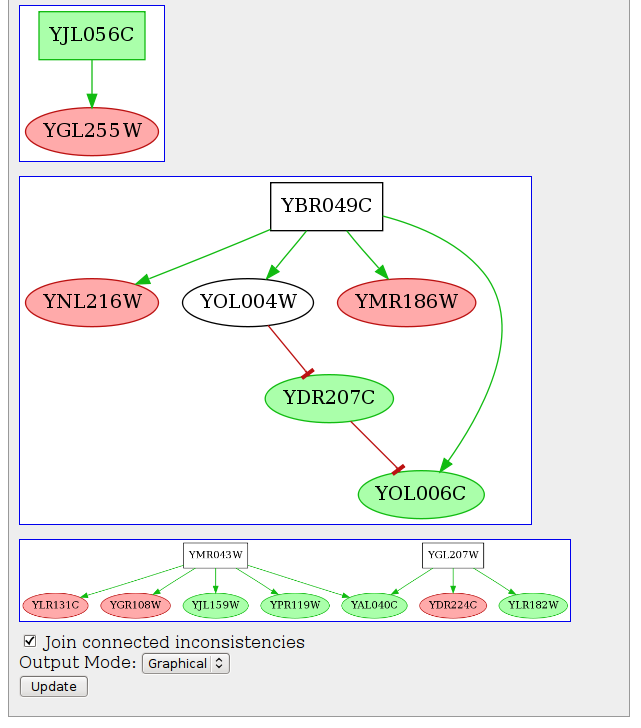}%
\else
 \raisebox{66px}{\includegraphics[width=180px,height=138px]{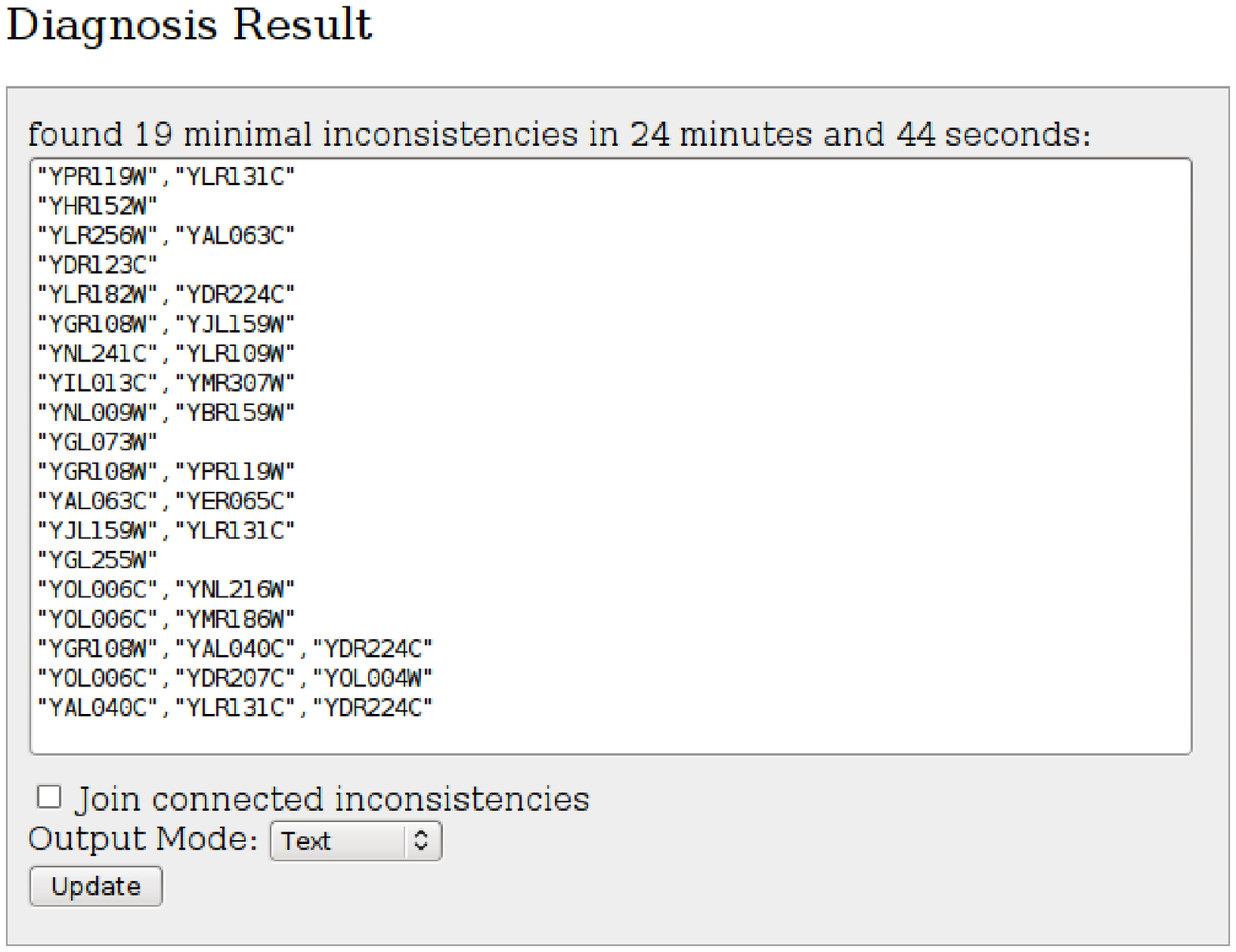}}%
 \includegraphics[width=180px,height=204px]{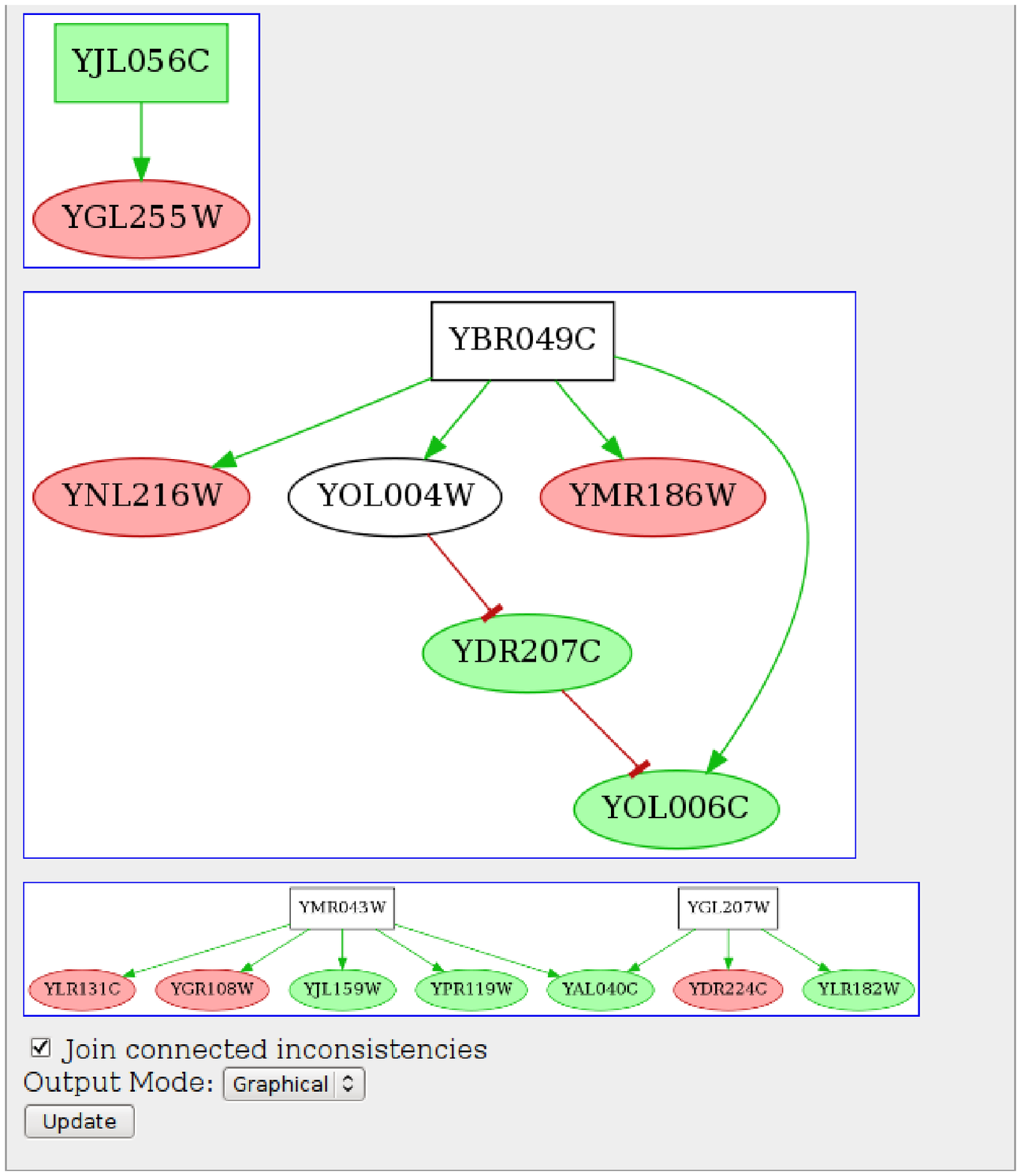}%
\fi
\caption{Representation of identified MICs in textual (left) and graphical (right) mode.\label{diagnosis:results}}
\end{figure}



\section{Discussion}\label{sec:discussion}

We have provided an approach based on ASP to investigate the 
consistency between experimental profiles and influence graphs.
In case of inconsistency,
the concept of a MIC can be exploited for identifying concise explanations,
pointing to unreliable data and/or missing reactions.
The problem of finding MICs is closely related to the extraction of
MUCs in the context of SAT.
From a knowledge representation point of view,
however, we argue for our ASP-based technique,
as it provides an easy way to model a problem
in terms of a uniform encoding and specific instances.

The BioQuali system \cite{gubomosi09a} provides functionalities
parallel to our approach.
It also works on influence graphs and applies the same consistency notion.
In preprocessing,
BioQuali reduces an influence graph by iteratively marking unobserved vertices
that have no successors as uncritical.
This technique is also realized by input reduction,
described in Section~\ref{sec:reduction}.
After that,
BioQuali transforms
the reduced subgraph into a Binary Decision Diagram,
used for further computations.
While consistency checking with BioQuali yields the same results as our technique,
its diagnosis functionality works like the
``approximate all inconsistencies'' mode, described in the previous section.
In contrast to our method, this does in general not admit finding all MICs.

By now, a variety of efficient ASP tools are available,
both for grounding and for solving logic programs.
Our empirical assessment of them (on random as well as real data)
has in principle demonstrated the scalability of the approach.
The web service implementation of finding all MICs,
which is genuine to our method and not available in any other existing tool,
is still based on some workarounds for avoiding redundant answer sets.
It is a subject of future work to address this with answer set projection~\cite{gekasc09a}.

As elegance and flexibility in modeling are major advantages of ASP,
our current application makes it attractive also for related biological questions,
beyond the ones addressed in this paper.
For instance, 
ongoing work deals with repair and prediction under consistency as well as inconsistency
\cite{geguivscsithve09a}.
In future,
it will also be interesting to explore how far the performance of ASP tools
can be tuned by varying and optimizing encodings for particular tasks.
In turn, challenging applications like the one presented here might contribute to the further
improvement of ASP tools, as they might be geared towards efficiency in such domains.




\appendix
\section{Proof of Theorem~\ref{thm:cons-sound} and~\ref{thm:cons-compl}}
\label{app:proof:consistency}

We formalize the representation of instances,
 as described in Section~\ref{subsec:instance},
 by defining a mapping~$\tau$
 of an influence graph~$(V,E,\sigma)$
 and a (partial) vertex labeling~$\mu: V \rightarrow \{\plus,\minus\}$:
\begin{equation}\label{proof:eq:instance}
\begin{array}[b]{l@{}c@{}l@{}l}
  \tau((V,E,\sigma),\mu)
  & {} = {} &
  \{\vertex{i}. & {} \mid i\in V\} 
  \\ & \cup {} &
  \{\edge{j}{i}. & {} \mid (j {\,\rightarrow\,} i)\in E\} 
  \\ & \cup {} &
  \{\obselabel{j}{i}{s}. & {} \mid (j {\,\rightarrow\,} i)\in E, \sigma(j,i)=s\}
  \\ & \cup {} &
  \{\obsvlabel{i}{s}. & {} \mid i\in V, \mu(i)=s\}
  \\ & \cup {} &
  \{\iinput{i}. & {} \mid i\in V \text{ is an input}\}\ \text{.}
\end{array}
\end{equation}
%
By~$P_C$,
we denote the 
encoding containing the schematic
rules in~(\ref{eq:total}),
(\ref{eq:known}), (\ref{eq:influence}), and~(\ref{eq:inconsistent}).

%
%


\begin{proof}[Proof of Theorem~\ref{thm:cons-sound}]
Assume that~$X$ is an answer set of $P_C\cup\tau((V,E,\sigma),\mu)$.
Furthermore, let
\begin{equation*}
\begin{array}{r@{}c@{}l}
P^X
& {} = {} &
\{
  (\head{r}\leftarrow\poslits{\body{r}})\theta
  \mid 
{} \\
& & \qquad 
  r\in P_C\cup\tau((V,E,\sigma),\mu),
  (\neglits{\body{r}}\theta)\cap X=\emptyset,
  \theta:\mathit{var}(r)\rightarrow\mathcal{U}
\}
\end{array}
\end{equation*}
where $\mathit{var}(r)$ is the set of all variables that occur in a rule~$r$,
$\mathcal{U}$ is the set of all constants appearing in $P_C\cup\tau((V,E,\sigma),\mu)$,
and~$\theta$ is a ground substitution for the variables in~$r$.
Then, by the definition of an answer set,
we know that~$X$ is a $\subseteq$-minimal model of~$P^X$.

Given~$X$, we define~$\sigma'$ and~$\mu'$ as follows:
\begin{equation*}
\begin{array}{r@{}r@{}r@{}c@{}l}
\sigma'
& {} = {} & \{
  (j {\,\rightarrow\,} i)\mapsto s 
& {} \mid {} & (j {\,\rightarrow\,} i)\in E, \elabel{j}{i}{s}\in X\}
\\[1mm]
\mu'
& {} = {} & \{
  i\mapsto s
& {} \mid {} & i\in V, \vlabel{i}{s}\in X\}\ \text{.}
\end{array}
\end{equation*}
We show that~$\sigma'$ and~$\mu'$ are total labelings of
edges and vertices, respectively,
such that $\mu'(i)=\mu'(j)\sigma'(j,i)$ holds
for every non-input vertex $i\in V$ and some edge $j {\,\rightarrow\,} i$ in~$E$.

Regarding the uniqueness of labels assigned by~$\sigma'$ and~$\mu'$,
 consider the following rules from~(\ref{eq:total}) and~(\ref{eq:known})
 including predicates \atomfont{labelE} and \atomfont{labelV} in their heads:
\begin{equation}\label{eq:label:def}
  \begin{array}{r@{}c@{}l}
  \vlabel{V}{\plus}; \vlabel{V}{\minus} & {} \leftarrow {} &
    \vertex{V}.
  \\
  \elabel{U}{V}{\plus}; \elabel{U}{V}{\minus} & {} \leftarrow {} &
    \edge{U}{V}.
  \\[1mm]
  \vlabel{V}{S} & {} \leftarrow {} &
    \obsvlabel{V}{S}.
  \\
  \elabel{U}{V}{S} & {} \leftarrow {} &
    \obselabel{U}{V}{S}.
  \end{array}  
\end{equation}
Since the given (partial) labelings~$\sigma$ and~$\mu$ assign unique labels
 to the elements of their domains,
 facts defining \atomfont{observedE} and \atomfont{observedV}
 are of the form 
 $\obselabel{j}{i}{\plus}.$ or $\obselabel{j}{i}{\minus}.$ and
 $\obsvlabel{i}{\plus}.$ or $\obsvlabel{i}{\minus}.$, respectively,
 and at most one of these facts is contained in $\tau((V,E,\sigma),\mu)$
 for an edge $(j {\,\rightarrow\,} i)\in E$ or a vertex $i\in V$.
Because~$X$ is a $\subseteq$-minimal model of~$P^X$, the atoms in the heads of facts
are in~$X$, and all atoms in~$X$ over predicates \atomfont{observedE} and 
\atomfont{observedV} are derived from facts in $\tau((V,E,\sigma),\mu)$,
 in view that these predicates do not occur in the head of any rule in~$P_C$.
Hence, at most one of the atoms
$\elabel{j}{i}{\plus}$ and $\elabel{j}{i}{\minus}$ or
$\vlabel{i}{\plus}$ and $\vlabel{i}{\minus}$, respectively,
 is derivable for an edge $(j {\,\rightarrow\,} i)\in E$
 or vertex $i\in\nolinebreak V$
from a ground instance of the fourth or third rule in~(\ref{eq:label:def})
 and then included in~$X$.
Furthermore, the second and first rule in~(\ref{eq:label:def})
 impose that at least one of
 $\elabel{j}{i}{\plus}$ or $\elabel{j}{i}{\minus}$ and
 $\vlabel{i}{\plus}$ or $\vlabel{i}{\minus}$
 belongs to~$X$ 
 for every edge $(j {\,\rightarrow\,} i)\in E$ and vertex $i\in V$, respectively,
 while the atom containing the opposite label cannot belong to
 a $\subseteq$-minimal model of~$P^X$.
Hence, there is at most one term~$s$ 
 such that~$\elabel{j}{i}{s}\in X$ or~$\vlabel{i}{s}\in X$
 for an edge $(j {\,\rightarrow\,} i)\in E$ or vertex $i\in V$, respectively,
 and it holds that $s\in\{\plus,\minus\}$, which
 allows us to conclude that~$\sigma'$ and~$\mu'$ are total labelings.

As regards extending~$\sigma$ and~$\mu$,
we have that fact $\obselabel{j}{i}{s}.$ or $\obsvlabel{i}{s}.$
belongs to $\tau((V,E,\sigma),\mu)$ if $\sigma(j,i)=s$ or $\mu(i)=s$, respectively,
is given.
This implies that
$\elabel{j}{i}{s}\in X$ or $\vlabel{i}{s}\in X$, respectively,
as the fourth or third rule in~(\ref{eq:label:def}) would be unsatisfied otherwise.
Thus, $\sigma'(j,i)=s$ if $\sigma(j,i)=s$,
and $\mu'(i)=s$ if $\mu(i)=s$.

It remains to be shown that $\mu'(i)$ is consistent
for each non-input vertex $i\in V$.
To this end,
we note that the integrity constraint
\begin{equation*}
\begin{array}{@{{}\leftarrow{}}l}
   \vlabel{V}{S}, \naf{\influence{V}{S}}, \naf{\iinput{V}}.
\end{array}
\end{equation*}
from~(\ref{eq:inconsistent}) necessitates
$\influence{i}{r}\in X$ if $\mu'(i)=r$ (that is, if $\vlabel{i}{r}\in X$)
for a non-input vertex $i\in V$.
Otherwise, $P^X$ would contain an unsatisfied ground instance
in view that $\iinput{i}\in X$ exactly if fact $\iinput{i}.$
is included in $\tau((V,E,\sigma),\mu)$.
However, any ground instances of the integrity constraint contributing 
 to~$P^X$ do not contain atoms over predicate \atomfont{receive}.
 Such atoms 
 can only be derived using the following rules from~(\ref{eq:influence}):
\begin{equation*}
  \begin{array}{r@{{}\leftarrow{}}l}
  \influence{V}{\plus}  & \elabel{U}{V}{S}, \vlabel{U}{S}. \\
  \influence{V}{\minus} & \elabel{U}{V}{S}, \vlabel{U}{T}, S\neq T.
  \end{array}  
\end{equation*}
Since $X$ is a $\subseteq$-minimal model of~$P^X$,
 $\influence{i}{\plus}\in X$ or $\influence{i}{\minus}\in X$
 is possible only if
 $\elabel{j}{i}{s}\in X$ and $\vlabel{j}{t}\in X$
 such that $s=t$ or $s\neq t$, that is,
 if $\sigma'(j,i)=s$ and $\mu'(j)=t$ such that
 $\mu'(j)\sigma'(j,i)=\plus$ or $\mu'(j)\sigma'(j,i)=\minus$,
respectively.
As $\vlabel{i}{r}$ is accompanied by $\influence{i}{r}$ in~$X$
for each non-input vertex $i\in V$, 
this allows us to conclude that
$\mu'(i)=r$
implies $\mu'(j)\sigma'(j,i)=r$ for some regulator~$j$ of~$i$.
Hence, we have that $\mu'(i)$ is consistent
for each non-input vertex $i\in V$.
\hfill
\end{proof}



\begin{proof}[Proof of Theorem~\ref{thm:cons-compl}]
Assume that $(V,E,\sigma)$ and $\mu$ are consistent.
Then, there are total extensions
$\sigma' : E \rightarrow \{\plus,\minus\}$ of~$\sigma$
and $\mu' : V \rightarrow \{\plus,\minus\}$ of~$\mu$
such that, for each non-input vertex $i\in V$,
we have $\mu'(i)=\mu'(j)\sigma'(j,i)$ for some edge $j {\,\rightarrow\,} i$ in~$E$.

We consider the following set~$X$ of atoms:
\begin{equation*}
\begin{array}{l@{}c@{}l@{}l}
  X
  & {} = {} &
  \{\vertex{i},\vlabel{i}{s} & {} \mid i\in V, \mu'(i)=s\} 
  \\ & \cup {} &
  \{\edge{j}{i},\elabel{j}{i}{s} & {} \mid (j {\,\rightarrow\,} i)\in E, \sigma'(j,i)=s\} 
  \\ & \cup {} &
  \{\influence{i}{ts} & {} \mid (j {\,\rightarrow\,} i)\in E, \sigma'(j,i)=s,\mu'(j)=t\}
  \\ & \cup {} &
  \{\obselabel{j}{i}{s} & {} \mid (j {\,\rightarrow\,} i)\in E, \sigma(j,i)=s\}
  \\ & \cup {} &
  \{\obsvlabel{i}{s} & {} \mid i\in V, \mu(i)=s\}
  \\ & \cup {} &
  \{\iinput{i} & {} \mid i\in V \text{ is an input}\}\ \text{.}
\end{array}
\end{equation*}
For showing that~$X$ is an answer set of $P_C\cup\tau((V,E,\sigma),\mu)$, 
we need to verify that~$X$ is a $\subseteq$-minimal model of
\begin{equation*}
\begin{array}{r@{}c@{}l}
P^X
& {} = {} &
\{
  (\head{r}\leftarrow\poslits{\body{r}})\theta
  \mid 
{} \\
& & \qquad 
  r\in P_C\cup\tau((V,E,\sigma),\mu),
  (\neglits{\body{r}}\theta)\cap X=\emptyset,
  \theta:\mathit{var}(r)\rightarrow\mathcal{U}
\}
\end{array}
\end{equation*}
 where $\mathit{var}(r)$ is the set of all variables that occur in a rule~$r$,
 $\mathcal{U}$ is the set of all constants appearing in $P_C\cup\tau((V,E,\sigma),\mu)$,
 and~$\theta$ is a ground substitution for the variables in~$r$.

To start with,
 we note that~$X$ includes
 an atom $\vertex{i}$, $\edge{j}{i}$, $\obselabel{j}{i}{s}$,
 $\obsvlabel{i}{s}$, and $\iinput{i}$, respectively,
 exactly if there is a fact with the atom in the head in $\tau((V,E,\sigma),\mu)$.
Each of these facts belongs also to~$P^X$,
 is satisfied by $X$,
 but not by any set~$Y$ of atoms excluding at least one of the head atoms.
Furthermore, since~$\sigma'$ and~$\mu'$ are total mappings, 
we have that
$|\{\elabel{j}{i}{\plus},\elabel{j}{i}{\minus}\}\cap X|=1$ and
$|\{\vlabel{i}{\plus},\vlabel{i}{\minus}\}\cap X|=1$
for every $(j {\,\rightarrow\,} i)\in E$ and $i\in V$, respectively.
Hence, $X$, but no subset~$Y$ of~$X$ excluding at least one atom
over predicates \atomfont{labelE} and \atomfont{labelV}, satisfies
all ground instances of the following rules from~(\ref{eq:total}) in~$P^X$:
\begin{equation*}
  \begin{array}{r@{}c@{}l}
  \vlabel{V}{\plus}; \vlabel{V}{\minus} & {} \leftarrow {} &
    \vertex{V}.
  \\
  \elabel{U}{V}{\plus}; \elabel{U}{V}{\minus} & {} \leftarrow {} &
    \edge{U}{V}.
  \end{array}  
\end{equation*}
In addition, since~$\sigma'$ and~$\mu'$ extend~$\sigma$ and~$\mu$, respectively,
all ground instances of the following rules from~(\ref{eq:known}) in~$P^X$
are satisfied by~$X$:
\begin{equation*}
  \begin{array}{r@{}c@{}l}
  \vlabel{V}{S} & {} \leftarrow {} &
    \obsvlabel{V}{S}.
  \\
  \elabel{U}{V}{S} & {} \leftarrow {} &
    \obselabel{U}{V}{S}.
  \end{array}  
\end{equation*}
Since $\elabel{j}{i}{s}\in X$ and $\vlabel{j}{t}\in X$
if $\sigma'(j,i)=s$ and $\mu'(j)=t$, respectively,
we have that $\influence{i}{ts}\in X$ exactly if
there is a ground instance of the rules
\begin{equation*}
  \begin{array}{r@{}c@{}l}
  \influence{V}{\plus} & {} \leftarrow {} & 
    \elabel{U}{V}{S}, \vlabel{U}{S}.
  \\
  \influence{V}{\minus} & {} \leftarrow {} & 
    \elabel{U}{V}{S}, \vlabel{U}{T}, S\neq T.
  \end{array}  
\end{equation*}
from~(\ref{eq:influence}) in $P^X$ such that
$\elabel{j}{i}{s},\vlabel{j}{t}\in X$
occur in the body and $\influence{i}{ts}$ in the head.
Hence, no subset~$Y$ of~$X$ excluding any atom over predicate
\atomfont{receive} is a model of~$P^X$.
Finally,
since $\mu'(i)=\mu'(j)\sigma'(j,i)$ 
for each non-input vertex $i\in V$ and
some $j {\,\rightarrow\,} i$ in~$E$,
$\vlabel{i}{r}\in\nolinebreak X$ implies that $\influence{i}{r}\in X$.
That is, the ground instances of the integrity constraint
\begin{equation*}
\begin{array}{@{{}\leftarrow{}}l}
   \vlabel{V}{S}, \naf{\influence{V}{S}}, \naf{\iinput{V}}.
\end{array}
\end{equation*}
from~(\ref{eq:inconsistent}) that contribute to $P^X$ 
are satisfied by~$X$.

We have now investigated all rules in $P_C\cup\tau((V,E,\sigma),\mu)$
and shown that their ground instances in~$P^X$ are satisfied by~$X$.
Furthermore, we have checked for all atoms in~$X$ that they cannot be
excluded in any model $Y\subset X$ of~$P^X$.
That is, $X$ is indeed a $\subseteq$-minimal model of~$P^X$
and thus an answer set of $P_C\cup\tau((V,E,\sigma),\mu)$.
\hfill
\end{proof}


\section{Proof of Theorem~\ref{thm:diag-sound} and~\ref{thm:diag-compl}}
\label{app:proof:diagnosis}

This appendix provides proofs for soundness and completeness of the MIC extraction
encoding in Section~\ref{sec:diagnosis}.
We use $\tau((V,E,\sigma),\mu)$ as defined in~(\ref{proof:eq:instance})
to refer to the facts representing an influence graph $(V,E,\sigma)$
and a (partial) vertex labeling $\mu: V \rightarrow \{\plus,\minus\}$.
By~$P_D$,
we denote the           encoding consisting of the schematic
rules in~(\ref{eq:known}), (\ref{eq:active}), (\ref{eq:inconsistency}), and~(\ref{eq:minimal}).
%

As an auxiliary concept,
for any subset~$W\subseteq V$,
we say that
$\sigma': E \rightarrow \{\plus,\minus\}$ and
$\mu': V \rightarrow \{\plus,\minus\}$ are
\emph{witnessing labelings} for~$W$
if the following conditions hold:
\begin{enumerate}
\item $\sigma'$ and $\mu'$ are total,
\item if $\sigma(j,i)$ is defined, then $\sigma'(j,i)=\sigma(j,i)$,
\item if $\mu(i)$ is defined, then $\mu'(i)=\mu(i)$, and
\item $\mu'(i)$ is consistent (relative to~$\sigma'$) for each non-input vertex $i\in W$.
\end{enumerate}
The above conditions make sure that $\sigma'$ and $\mu'$ are
total extensions of~$\sigma$ and~$\mu$, respectively,
such that the variations of vertices in~$W$ are explained.
Comparing Definition~\ref{def:mic}, the first condition
requires the absence of witnessing labelings for a MIC~$W$,
while the second condition stipulates the existence of
witnessing labelings for each $W'\subset W$.



\begin{proof}[Proof of Theorem~\ref{thm:diag-sound}]
Assume that~$X$ is an answer set of $P_D\cup\tau((V,E,\sigma),\mu)$.
Furthermore, let
\begin{equation*}
\begin{array}{r@{}c@{}l}
P^X
& {} = {} &
\{
  (\head{r}\leftarrow\poslits{\body{r}})\theta
  \mid 
{} \\
& & \qquad 
  r\in P_D\cup\tau((V,E,\sigma),\mu),
  (\neglits{\body{r}}\theta)\cap X=\emptyset,
  \theta:\mathit{var}(r)\rightarrow\mathcal{U}
\}
\end{array}
\end{equation*}
where $\mathit{var}(r)$ is the set of all variables that occur in a rule~$r$,
$\mathcal{U}$ is the set of all constants appearing in $P_D\cup\tau((V,E,\sigma),\mu)$,
and~$\theta$ is a ground substitution for the variables in~$r$.
Then, by the definition of an answer set,
we know that~$X$ is a $\subseteq$-minimal model of~$P^X$.

Let $W=\{i \mid \activ{i}\in X\}$.
We have to show that the following conditions hold:
\begin{enumerate}
\item There are witnessing labelings for each $W'\subset W$.
\item There are no witnessing labelings for~$W$.
\end{enumerate}
We below consider these conditions one after the other.

\paragraph{Condition 1.}
Let $W'=W\setminus\{k\}$ for any $k\in W$.
Furthermore,
define $\sigma'$ and $\mu'$ as follows:
\begin{equation*}
\begin{array}{r@{}r@{}r@{}c@{}l}
\sigma'
& {} = {} & \{
  (j {\,\rightarrow\,} i)\mapsto s 
& {} \mid {} & (j {\,\rightarrow\,} i)\in E, \elabell{k}{j}{i}{s}\in X\}
\\
& {} \cup {} & \{ 
  (j {\,\rightarrow\,} i)\mapsto \plus 
& {} \mid {} & (j {\,\rightarrow\,} i)\in E, \elabell{k}{j}{i}{\plus}\notin X, \elabell{k}{j}{i}{\minus}\notin X\}
\\[1mm]
\mu'
& {} = {} & \{
  i\mapsto s
& {} \mid {} & i\in V, \vlabell{k}{i}{s}\in X\}
\\
& {} \cup {} & \{
  i\mapsto \plus
& {} \mid {} & i\in V, \vlabell{k}{i}{\plus}\notin X, \vlabell{k}{i}{\minus}\notin X\}\ \text{.}
\end{array}
\end{equation*}
We  show that~$\sigma'$ and~$\mu'$ are witnessing labelings for~$W'$.

Regarding the uniqueness of labels assigned by~$\sigma'$ and~$\mu'$,
consider the following rules from~(\ref{eq:minimal}) including
predicates \atomfont{labelE'} and \atomfont{labelV'} in their heads:
\begin{equation}\label{eq:Wsub-label:def}
  \begin{array}{r@{}c@{}l}
  \vlabell{W}{V}{\plus}; \vlabell{W}{V}{\minus} & {} \leftarrow {} &
    \activ{W}, \avertex{V}.
  \\
  \elabell{W}{U}{V}{\plus}; \elabell{W}{U}{V}{\minus} & {} \leftarrow {} &
    \activ{W}, \aedge{U}{V}.
  \\[1mm]
  \vlabell{W}{V}{S} & {} \leftarrow {} &
    \activ{W}, \obsvlabel{V}{S}.
  \\
  \elabell{W}{U}{V}{S} & {} \leftarrow {} &
    \activ{W}, \obselabel{U}{V}{S}.
  \end{array}  
\end{equation}
Since the given (partial) labelings~$\sigma$ and~$\mu$ assign unique labels
to the elements of their domains,
facts defining \atomfont{observedE} and \atomfont{observedV}
are of the form 
$\obselabel{j}{i}{\plus}.$ or $\obselabel{j}{i}{\minus}.$ and
$\obsvlabel{i}{\plus}.$ or $\obsvlabel{i}{\minus}.$, respectively,
and at most one of these facts is contained in $\tau((V,E,\sigma),\mu)$
for an edge $(j {\,\rightarrow\,} i)\in E$ or vertex $i\in V$.
Because~$X$ is a $\subseteq$-minimal model of~$P^X$, the atoms in the heads of facts
are in~$X$, and all atoms in~$X$ over predicates 
\atomfont{observedE} and \atomfont{observedV} are derived from facts in $\tau((V,E,\sigma),\mu)$,
in view that these predicates do not occur in the head of any rule in~$P_D$.
Hence, at most one of the atoms
$\elabell{k}{j}{i}{\plus}$ and $\elabell{k}{j}{i}{\minus}$ or
$\vlabell{k}{i}{\plus}$ and $\vlabell{k}{i}{\minus}$, respectively,
is derivable for an edge $(j {\,\rightarrow\,} i)\in E$ or vertex $i\in\nolinebreak V$
from a ground instance of the fourth or third rule in~(\ref{eq:Wsub-label:def})
and then included in~$X$.
If either of $\elabell{k}{j}{i}{\plus}$ and $\elabell{k}{j}{i}{\minus}$
or $\vlabell{k}{i}{\plus}$ and $\vlabell{k}{i}{\minus}$, respectively,
is included in~$X$,
then the ground instance of the second or first rule in~(\ref{eq:Wsub-label:def})
for~$k$ and an edge $(j {\,\rightarrow\,} i)\in E$ or vertex $i\in V$
is satisfied, 
so that the atom containing the opposite label cannot belong to
a $\subseteq$-minimal model of~$P^X$.
Hence, there is at most one term~$s$ 
such that~$\sigma'(j,i)=s$ or~$\mu'(i)=s$ 
for an edge $(j {\,\rightarrow\,} i)\in E$ or vertex $i\in\nolinebreak V$, respectively,
and it holds that $s\in\{\plus,\minus\}$.
Furthermore, looking at the definitions of~$\sigma'$ and~$\mu'$,
it is obvious that both are total, which allows us to conclude
that~$\sigma'$ and~$\mu'$ are total labelings.

As regards extending~$\sigma$ and~$\mu$,
we have that fact $\obselabel{j}{i}{s}.$ or $\obsvlabel{i}{s}.$
belongs to $\tau((V,E,\sigma),\mu)$ if $\sigma(j,i)=s$ or $\mu(i)=s$, respectively,
is given.
Along with the premise that $\activ{k}\in X$, this implies that
$\elabell{k}{j}{i}{s}\in X$ or $\vlabell{k}{i}{s}\in\nolinebreak X$, respectively,
as the fourth or third rule in~(\ref{eq:Wsub-label:def}) would be unsatisfied otherwise.
Hence, we have $\sigma'(j,i)=s$ if $\sigma(j,i)=s$,
and $\mu'(i)=s$ if $\mu(i)=s$.

It remains to be shown that $\mu'(i)$ is consistent
for each non-input vertex $i\in W'$.
To establish this, we first consider the following rules from~(\ref{eq:active}):
\begin{equation}\label{eq:micedge}
\begin{array}{r@{{}\leftarrow{}}l}
  \aedge{U}{V}                               & \edge{U}{V}, \activ{V} . \\
  \avertex{U}                                & \aedge{U}{V}. \\
  \avertex{V}                                & \activ{V}.
\end{array}
\end{equation}
In view that fact $\edge{j}{i}.$ belongs to $\tau((V,E,\sigma),\mu)$
for every $(j {\,\rightarrow\,} i)\in E$,
we conclude that $\edge{j}{i}\in X$.
Along with $\activ{i}\in X$ for every $i\in W$,
it follows that
$\aedge{j}{i}\in X$ for every $(j {\,\rightarrow\,} i)\in E$ such that $i\in W$,
and $\avertex{i}\in X$ for every $i\in W$. 
The last observation and the first rule in~(\ref{eq:Wsub-label:def})
imply that $\vlabell{k}{i}{\plus}\in X$ or $\vlabell{k}{i}{\minus}\in X$
for every $i\in W$.
For $i\in W'$, i.e., $i\neq k$,
the integrity constraint
\begin{equation*}
\begin{array}{@{{}\leftarrow{}}l}
   \vlabell{W}{V}{S}, \activ{V}, V\neq W, \naf{\influencel{W}{V}{S}}.
\end{array}
\end{equation*}
from~(\ref{eq:minimal})
imposes $\influencel{k}{i}{\plus}\in X$ if $\vlabell{k}{i}{\plus}\in X$,
and $\influencel{k}{i}{\minus}\in\nolinebreak X$ if $\vlabell{k}{i}{\minus}\in X$,
while any ground instances of the integrity constraint contributing 
to~$P^X$ do not contain atoms over predicate \atomfont{receive'}.
Such atoms 
can only be derived using the following rules from~(\ref{eq:minimal}):
\begin{equation*}
\begin{array}{r@{{}\leftarrow{}}l}
  \influencel{W}{V}{\plus} &  
    \elabell{W}{U}{V}{S}, \vlabell{W}{U}{S}, V\neq W.
  \\
  \influencel{W}{V}{\minus} &  
    \elabell{W}{U}{V}{S}, \vlabell{W}{U}{T}, V\neq W, S\neq T.
\end{array}
\end{equation*}
Since $X$ is a $\subseteq$-minimal model of~$P^X$,
$\influencel{k}{i}{\plus}\in X$ or $\influencel{k}{i}{\minus}\in X$
is possible only if
$\elabell{k}{j}{i}{s}\in X$ and $\vlabell{k}{j}{t}\in X$
such that $s=t$ or $s\neq t$, respectively.
Comparing $\tau((V,E,\sigma),\mu)$ and the rules in~(\ref{eq:Wsub-label:def}), (\ref{eq:micedge}),
as well as~(\ref{eq:actrepeat}) reveals that
$(j {\,\rightarrow\,} i)\in E$ is a necessary condition for $\elabell{k}{j}{i}{s}\in X$,
and the same applies to $j\in V$ and $\vlabell{k}{j}{t}\in X$.
By the construction of~$\sigma'$ and~$\mu'$,
$\elabell{k}{j}{i}{s}\in\nolinebreak X$ implies that $\sigma'(j,i)=s$ and
$\vlabell{k}{j}{t}\in X$ that $\mu'(j)=t$.
We conclude that
$\influencel{k}{i}{\plus}\in X$ or $\influencel{k}{i}{\minus}\in X$
necessitates $\mu'(j)\sigma'(j,i)=\plus$ or $\mu'(j)\sigma'(j,i)=\minus$, respectively,
for some regulator~$j$ of~$i$.
Finally, we have $\mu'(i)=\plus$ if $\vlabell{k}{i}{\plus}\in X$ (and $\influencel{k}{i}{\plus}\in X$),
and $\mu'(i)=\minus$ if $\vlabell{k}{i}{\minus}\in X$ (and $\influencel{k}{i}{\minus}\in X$).
This shows that~$i$ receives some influence matching $\mu'(i)$,
so that $\mu'(i)$ is consistent.
Since $i\in W'$ is arbitrary,
$\sigma'$ and~$\mu'$ are witnessing labelings for~$W'$.

To conclude the proof of the first condition to verify,
we note that witnessing labelings for $W'$ are also
witnessing labelings for all subsets of $W'$.
Hence, it is sufficient to check the existence of
witnessing labelings for sets $W'=W\setminus\{k\}$ for any $k\in W$.
As shown above, 
an answer set~$X$ of $P_D\cup\tau((V,E,\sigma),\mu)$ yields
witnessing labelings for them.
Hence, the second condition in Definition~\ref{def:mic} holds
for $W=\{i \mid \activ{i}\in X\}$.

\paragraph{Condition 2.}
We now show by contradiction that there cannot be witnessing
labelings for~$W$.
To establish this,
we first note that vertices in $W$ cannot be input because,
if fact $\iinput{i}.$ belongs to $\tau((V,E,\sigma),\mu)$,
then $\iinput{i}$ must be included in $X$, so that the rule
\begin{equation}\label{eq:actrepeat}
\begin{array}{r@{{}\leftarrow{}}l}
  \activ{V};\inactiv{V}                      & \vertex{V}, \naf{\iinput{V}}.
\end{array}
\end{equation}
from (\ref{eq:active}) does not contribute a ground instance for~$i$ to~$P^X$.
Since $\activ{i}$ cannot be derived from any other ground rule in~$P^X$,
the fact that~$X$ is a $\subseteq$-minimal model of~$P^X$ implies that
$\activ{i}\notin X$ for any input vertex~$i$.
Furthermore,
the integrity constraint
\begin{equation}\label{eq:botget}
\begin{array}{@{{}\leftarrow{}}l}
   \naf{\bottom}.
\end{array}
\end{equation}
from~(\ref{eq:inconsistency}) necessitates $\bottom\in X$
because $X$ cannot be a model of~$P^X$ otherwise.
Then, we get
$\vlabel{i}{\plus},\vlabel{i}{\minus}\in X$ and
$\elabel{j}{i}{\plus},\elabel{j}{i}{\minus}\in X$
for all vertices $i\in V$ and edges $(j {\,\rightarrow\,} i)\in E$, respectively,
due to the following rules from~(\ref{eq:inconsistency}):

\begin{equation}\label{eq:botfollow}
\begin{array}{r@{{}\leftarrow{}}l}
  \vlabel{V}{\plus}  & 
                       \bottom, \vertex{V}.
  \\
  \vlabel{V}{\minus} & 
                       \bottom, \vertex{V}.
  \\
  \elabel{U}{V}{\plus}  & 
                          \bottom, \edge{U}{V}.
  \\
  \elabel{U}{V}{\minus} & 
                          \bottom, \edge{U}{V}.
\end{array}
\end{equation}

We  now show that the existence of witnessing labelings
for~$W$ yields a contradiction to the fact that~$X$ is a
$\subseteq$-minimal model of~$P^X$.
To this end, assume that $\sigma'$ and $\mu'$ are witnessing labelings for~$W$.
Then, let
\begin{equation*}
\begin{array}{l@{}c@{}l@{}l}
Y
& {} = {} &
\multicolumn{2}{@{}l}{
\begin{array}[t]{@{}r@{}l@{}l}
(
  X\setminus
  (
&
     \{\bottom\}
\\ 
{} \cup {} & \{\vlabel{i}{s} & {} \mid \vlabel{i}{s}\in X\}
\\ 
{} \cup {} & \{\elabel{j}{i}{s} & {} \mid \elabel{j}{i}{s}\in X\}
\\ 
{} \cup {} & \{\contrary{j}{i} & {} \mid \contrary{j}{i}\in X\}
  )
)
\end{array}
}
\\
& {} \cup {} &
  \{\vlabel{i}{s} & {} \mid i\in V,\mu'(i)=s\}
\\
& {} \cup {} &
  \{\elabel{j}{i}{s} & {} \mid (j {\,\rightarrow\,} i)\in E,\sigma'(j,i)=s\}
\\
& {} \cup {} &
  \{\contrary{j}{i} & {} \mid (j {\,\rightarrow\,} i)\in E, \mu'(i)\neq\mu'(j)\sigma'(j,i)\}\ \text{.}
\end{array}
\end{equation*}
Since $\bottom\in X\setminus Y$ and~$X$
contains a maximum amount of atoms over
predicates \atomfont{labelV}, \atomfont{labelE}, and \atomfont{opposite}
(the atoms over \atomfont{opposite} are consequences of the inclusion
 of atoms over \atomfont{labelV} and \atomfont{labelE}),
we have that $Y\subset X$, and we  show that~$Y$ is a model of~$P^X$.

Considering the contributions of the
facts in $\tau((V,E,\sigma),\mu)$ and the rules in~(\ref{eq:minimal}) to~$P^X$,
we observe that the atoms over predicates occurring in them are interpreted the same in~$X$ and~$Y$.
Hence, such facts and rules stay satisfied by~$Y$ because they were already satisfied by~$X$.
The same applies to the rules from~(\ref{eq:active}) repeated in~(\ref{eq:micedge})
and~(\ref{eq:actrepeat}).
Furthermore,
since~$\sigma'$ and~$\mu'$ are total and extend~$\sigma$ and~$\mu$, respectively,
the contributions of the following rules from~(\ref{eq:known})
and~(\ref{eq:active}) to $P^X$ are satisfied by~$Y$:
\begin{equation*}
\begin{array}{r@{{}\leftarrow{}}l}
  \vlabel{V}{S}    & \obsvlabel{V}{S}.    \\
  \elabel{U}{V}{S} & \obselabel{U}{V}{S}. \\[1mm]
  \vlabel{V}{\plus}   ;\vlabel{V}{\minus}    & \avertex{V}.  \\
  \elabel{U}{V}{\plus};\elabel{U}{V}{\minus} & \aedge{U}{V}.
\end{array}
\end{equation*}
Since the integrity constraint in~(\ref{eq:botget}) does not belong to~$P^X$
and the rules in~(\ref{eq:botfollow}) are satisfied by~$Y$ in view of
$\bottom\notin Y$,
it remains to consider the following rules from~(\ref{eq:inconsistency}):
\begin{equation*}
\begin{array}{r@{{}\leftarrow{}}l}
  \contrary{U}{V}  & 
                     \elabel{U}{V}{\minus}, \vlabel{U}{S}, \vlabel{V}{S}.
  \\
  \contrary{U}{V}  & 
                     \elabel{U}{V}{\plus},  \vlabel{U}{S}, \vlabel{V}{T}, S\neq T.
  \\[1mm]
  \bottom          & 
                     \activ{V}, \contrary{U}{V}:\edge{U}{V}.
\end{array}
\end{equation*}
The rules defining predicate \atomfont{opposite} are such that,
in order to satisfy their ground instances in~$P^X$,
$Y$ must contain $\contrary{j}{i}$ if
$\elabel{j}{i}{r}$, $\vlabel{j}{s}$, and $\vlabel{i}{t}$ belong to~$Y$ such that
$t\neq sr$.
This matches the definition of~$Y$,
including $\elabel{j}{i}{r}$ if $\sigma'(j,i)\!=\nolinebreak\! r$,
$\vlabel{j}{s}$ if $\mu'(j)\!=\!s$, 
$\vlabel{i}{t}$ if $\mu'(i)\!=\!t$, and
$\contrary{j}{i}$ if $\mu'(i)\!\neq\! \mu'(j)\sigma'(j,i)$.
Hence, rules defining \atomfont{opposite} in~$P^X$
are satisfied by~$Y$.
It remains to be shown that $\bottom$ is not derivable from any ground instance of the last rule.
In this regard, recall that $W=\{i \mid \activ{i}\in X\}=\{i \mid \activ{i}\in Y\}$,
and we have seen above that $\activ{i}$ can only belong to~$X$ if $i$ is not an input.
As $\sigma'$ and~$\mu'$ are witnessing labelings for~$W$,
for every $i\in W$, there is an edge $(j {\,\rightarrow\,} i)\in E$
such that $\mu'(i)=\mu'(j)\sigma'(j,i)$.
By the definition of~$Y$, this implies
$\contrary{j}{i}\notin Y$,
while $\edge{j}{i}$ belongs to~$X$ and~$Y$ because~$X$ and~$Y$
are models of $\tau((V,E,\sigma),\mu)$.
As a consequence, for every $i\in W$,
we have $\{\contrary{j}{i} \mid \edge{j}{i}\in Y\}\not\subseteq Y$,
so that the ground instance for~$i$ in~$P^X$
of the rule with $\bottom$ in the head
is satisfied by~$Y$.
We have thus established that~$Y\subset X$ is indeed a model of~$P^X$,
a contradiction to the assumption that~$X$ is a $\subseteq$-minimal model of~$P^X$
and an answer set of $P_D\cup\tau((V,E,\sigma),\mu)$.

The above contradiction shows that the second condition to verify,
which is the first condition in Definition~\ref{def:mic},
holds for $W=\{i \mid \activ{i}\in X\}$.
The fact that the second condition in Definition~\ref{def:mic} holds for~$W$ has 
been shown before.
Hence, $W$ is a MIC.
\hfill%
\end{proof}

%
%
\begin{proof}[Proof of Theorem~\ref{thm:diag-compl}]
Assume that $W=\{k_1,\dots,k_n\}$ is a MIC.
Then, the following conditions hold:
\begin{enumerate}
\item There are witnessing labelings $\sigma_1,\mu_1,\dots,\sigma_n,\mu_n$
      for $W\setminus\{k_1\},\dots,W\setminus\{k_n\}$.
\item There are no witnessing labelings for~$W$.
\end{enumerate}
We  consider the following set~$X$ of atoms:
\begin{equation*}
\begin{array}{l@{}c@{}l@{}l}
  X
  & {} = {} &
  \{\vertex{i} & {} \mid i\in V\} 
  \\ & \cup {} &
  \{\edge{j}{i} & {} \mid (j {\,\rightarrow\,} i)\in E\} 
  \\ & \cup {} &
  \{\obselabel{j}{i}{s} & {} \mid (j {\,\rightarrow\,} i)\in E, \sigma(j,i)=s\}
  \\ & \cup {} &
  \{\obsvlabel{i}{s} & {} \mid i\in V, \mu(i)=s\}
  \\ & \cup {} &
  \{\iinput{i} & {} \mid i\in V \text{ is an input}\}
  \\ & \cup {} &
  \{\activ{i} & {} \mid i\in W\}
  \\ & \cup {} &
  \{\inactiv{i} & {} \mid i\in V\setminus W \text{ is not an input}\}
  \\ & \cup {} &
  \{\aedge{j}{i} & {} \mid (j {\,\rightarrow\,} i)\in E, i\in W\}
  \\ & \cup {} &
  \{\avertex{j} & {} \mid (j {\,\rightarrow\,} i)\in E, i\in W\}
  \\ & \cup {} &
  \{\avertex{i} & {} \mid i\in W\}
  \\ & \cup {} &
  \{\elabell{k_m}{j}{i}{r} & {} \mid (j {\,\rightarrow\,} i)\in E,i\in W,\sigma_m(j,i)=r,1\leq m\leq n\}
  \\ & \cup {} &
  \{\elabell{k_m}{j}{i}{r} & {} \mid (j {\,\rightarrow\,} i)\in E,\sigma(j,i)=r,1\leq m\leq n\}
  \\ & \cup {} &
  \{\vlabell{k_m}{j}{s} & {} \mid (j {\,\rightarrow\,} i)\in E,i\in W,\mu_m(j)=s,1\leq m\leq n\}
  \\ & \cup {} &
  \{\vlabell{k_m}{i}{s} & {} \mid i\in W,\mu_m(i)=s,1\leq m\leq n\}
  \\ & \cup {} &
  \{\vlabell{k_m}{i}{s} & {} \mid i\in V,\mu(i)=s,1\leq m\leq n\}
  \\ & \cup {} &
  \{\influencel{k_m}{i}{sr} & {} \mid (j {\,\rightarrow\,} i)\in E,i\in W,
  {} \\ & & & \quad
                                      \sigma_m(j,i)=r,\mu_m(j)=s,i\neq k_m,1\leq m\leq n\}
  \\ & \cup {} &
  \{\influencel{k_m}{i}{sr} & {} \mid (j {\,\rightarrow\,} i)\in E,j\in W \text{ or } (j {\,\rightarrow\,} k)\in E \text{ for } k\in W, 
  {} \\ & & & \quad
                                      \sigma(j,i)=r,\mu_m(j)=s,i\neq k_m,1\leq m\leq n\}
  \\ & \cup {} &
  \{\influencel{k_m}{i}{sr} & {} \mid (j {\,\rightarrow\,} i)\in E,
  {} \\ & & & \quad
                                      \sigma(j,i)=r,\mu(j)=s,i\neq k_m,1\leq m\leq n\}
  \\ & \cup {} &
  \{\vlabel{i}{\plus},\vlabel{i}{\minus} & {} \mid i\in V\}
  \\ & \cup {} &
  \{\elabel{j}{i}{\plus},\elabel{j}{i}{\minus} & {} \mid (j {\,\rightarrow\,} i)\in E\}
  \\ & \cup {} &
  \{\contrary{j}{i} & {} \mid (j {\,\rightarrow\,} i)\in E\}
  \\ & \cup {} &
  \{\bottom\}\ \text{.}
\end{array}
\end{equation*}
For showing that~$X$ is an answer set of $P_D\cup\tau((V,E,\sigma),\mu)$
(such that $\{i \mid \activ{i}\in X\}=W$),
we need to verify that~$X$ is a $\subseteq$-minimal model of
\begin{equation*}
\begin{array}{r@{}c@{}l}
P^X
& {} = {} &
\{
  (\head{r}\leftarrow\poslits{\body{r}})\theta
  \mid 
{} \\
& & \qquad 
  r\in P_D\cup\tau((V,E,\sigma),\mu),
  (\neglits{\body{r}}\theta)\cap X=\emptyset,
  \theta:\mathit{var}(r)\rightarrow\mathcal{U}
\}
\end{array}
\end{equation*}
where $\mathit{var}(r)$ is the set of all variables that occur in a rule~$r$,
$\mathcal{U}$ is the set of all constants appearing in $P_D\cup\tau((V,E,\sigma),\mu)$,
and~$\theta$ is a ground substitution for the variables in~$r$.

To start with,
we note that~$X$ includes
an atom $\vertex{i}$, $\edge{j}{i}$, $\obselabel{j}{i}{s}$,
$\obsvlabel{i}{s}$, and $\iinput{i}$, respectively,
exactly if there is a fact with the atom in the head in $\tau((V,E,\sigma),\mu)$.
Each of these facts belongs also to~$P^X$,
is satisfied by $X$,
but not by any set~$Y$ of atoms excluding at least one of the head atoms.

In view that $W$ cannot contain any input
(otherwise, satisfaction of the second condition in Definition~\ref{def:mic}
 would immediately imply violation of the first one),
we have that either $\activ{i}$ or $\inactiv{i}$ belongs to~$X$
for every non-input vertex $i\in V$.
Hence, $X$ satisfies 
all ground instances of the rule
\begin{equation*}
\begin{array}{r@{{}\leftarrow{}}l}
  \activ{V};\inactiv{V}                      & \vertex{V}, \naf{\iinput{V}}.
\end{array}
\end{equation*}
from (\ref{eq:active}) belonging to $P^X$,
while no set~$Y$ of atoms excluding both
$\activ{i}$ and $\inactiv{i}$ for any non-input vertex $i\in V$
satisfies all of these ground instances.

Considering ground instances of the rules
\begin{equation*}
\begin{array}{r@{{}\leftarrow{}}l}
  \aedge{U}{V}                               & \edge{U}{V}, \activ{V} . \\
  \avertex{U}                                & \aedge{U}{V}. \\
  \avertex{V}                                & \activ{V}.
\end{array}
\end{equation*}
from (\ref{eq:active}),
all of them belong to~$P^X$,
are satisfied by~$X$,
but not by any set~$Y$ of atoms such that
$\{\aedge{j}{i}\mid\aedge{j}{i}\in X\}\cup\{\avertex{i}\mid\avertex{i}\in X\}
 \not\subseteq\nolinebreak Y$ and
$\{\activ{i} \mid \activ{i}\in X\}\subseteq\{\activ{i}\mid \activ{i}\in Y\}$,
while it has been shown above that 
$\{\activ{i} \mid \activ{i}\in X\}\not\subseteq\{\activ{i}\mid \activ{i}\in Y\}$
necessitates
$\{\inactiv{i}\mid \inactiv{i}\in Y\}\not\subseteq\{\inactiv{i} \mid \inactiv{i}\in X\}$
for $Y$ being a model of~$P^X$.
Hence, there cannot be any model~$Y\subset X$ of~$P^X$
excluding some atom $\aedge{j}{i}$ or $\avertex{i}$
that belongs to~$X$.

Now turning our attention to atoms of form
$\elabell{k_m}{j}{i}{r}$ and $\vlabell{k_m}{j}{s}$,
we note that they are included in~$X$
if $\aedge{j}{i}\in X$ and $\avertex{j}\in X$, respectively,
and $\sigma_m(j,i)=r,\mu_m(j)=s$ in witnessing 
labelings~$\sigma_m$ and~$\mu_m$ for $W\setminus\{k_m\}$,
where $1\leq m\leq n$,
or if $\sigma(j,i)=r,\mu(j)=s$.
Then, the fact that $\activ{k_m}\in X$ and
labels assigned by~$\sigma_m$ and~$\mu_m$ are unique
and respect those assigned by~$\sigma$ and~$\mu$
implies that none of the atoms can be removed from~$X$
without violating some ground instance
of the rules
\begin{equation*}
  \begin{array}{r@{}c@{}l}
  \vlabell{W}{V}{\plus}; \vlabell{W}{V}{\minus} & {} \leftarrow {} &
    \activ{W}, \avertex{V}.
  \\
  \elabell{W}{U}{V}{\plus}; \elabell{W}{U}{V}{\minus} & {} \leftarrow {} &
    \activ{W}, \aedge{U}{V}.
  \\[1mm]
  \vlabell{W}{V}{S} & {} \leftarrow {} &
    \activ{W}, \obsvlabel{V}{S}.
  \\
  \elabell{W}{U}{V}{S} & {} \leftarrow {} &
    \activ{W}, \obselabel{U}{V}{S}.
  \end{array}  
\end{equation*}
from (\ref{eq:minimal}) that belongs to~$P^X$.
However, $X$ satisfies all of these ground instances by its construction.
We further consider the following rules from~(\ref{eq:minimal}):
\begin{equation*}
  \begin{array}{r@{}c@{}l}
  \influencel{W}{V}{\plus} & {} \leftarrow {} & 
    \elabell{W}{U}{V}{S}, \vlabell{W}{U}{S}, V\neq W.
  \\
  \influencel{W}{V}{\minus} & {} \leftarrow {} & 
    \elabell{W}{U}{V}{S}, \vlabell{W}{U}{T}, V\neq W, S\neq T.
  \end{array}  
\end{equation*}
As shown above,
$\elabell{k_m}{j}{i}{r}$ belongs to~$X$ if
$i\in W$ and $\sigma_m(j,i)=r$, or if
$\sigma(j,i)=\sigma_m(j,i)=r$.
Furthermore,
$\vlabell{k_m}{j}{s}$ is included in~$X$ if
$j\in W$ or $(j {\,\rightarrow\,} k)\in\nolinebreak E,\linebreak[1]k\in W$ and $\mu_m(j)=s$,
or if
$\mu(j)=\mu_m(j)=\nolinebreak s$.
Comparing the cross product of these conditions to the definition of~$X$
yields that an atom $\influencel{k_m}{i}{sr}$ belongs to~$X$ exactly
if $\elabell{k_m}{j}{i}{r}$ and $\vlabell{k_m}{j}{s}$ are in~$X$ and $i\neq k_m$.
Hence,
when excluding any of the atoms $\influencel{k_m}{i}{sr}$ from~$X$,
some ground instance of the above two rules belonging to~$P^X$
becomes unsatisfied, and so we have
that such atoms cannot be removed from~$X$
in order to construct a model $Y\subset X$ of~$P^X$.
Moreover, the fact that~$\sigma_m$ and~$\mu_m$ are witnessing labelings
for $W'=W\setminus\{k_m\}$ implies that all ground instances of the
integrity constraint
\begin{equation*}
\begin{array}{@{{}\leftarrow{}}l}
   \vlabell{W}{V}{S}, \activ{V}, V\neq W, \naf{\influencel{W}{V}{S}}.
\end{array}
\end{equation*}
from~(\ref{eq:minimal}) that belong to~$P^X$ are satisfied by~$X$.
In fact, for every $i\in W'$, there is some $(j {\,\rightarrow\,} i)\in E$
such that $\mu_m(i)=\mu_m(j)\sigma_m(j,i)$.
Since $\elabell{k_m}{j}{i}{\sigma_m(j,i)}$ and $\vlabell{k_m}{j}{\mu_m(j)}$
belong to~$X$, 
this implies that each
atom $\vlabell{k_m}{i}{\mu_m(i)}$ for $i\in W'$ is accompanied by
$\influencel{k_m}{i}{\mu_m(i)}=\influencel{k_m}{i}{\mu_m(j)\sigma_m(j,i)}$ in~$X$,
so that the ground instance for~$k_m$, $i$, and~$\mu_m(i)$ of the integrity constraint
is not in~$P^X$.

Finally, we consider atoms of the form
$\vlabel{i}{s}$, $\elabel{j}{i}{s}$, and $\contrary{j}{i}$ that belong to~$X$
for all $i\in V$ and $(j {\,\rightarrow\,} i)\in E$, respectively, and $s\in\{\plus,\minus\}$.
Since $\bottom$ is also in~$X$,
it is clear that the ground instances of the following rules from~(\ref{eq:known}),
(\ref{eq:active}), and~(\ref{eq:inconsistency}),
all of which belong to~$P^X$, are satisfied by~$X$:
\begin{equation*}
  \begin{array}{r@{}c@{}l}
  \vlabel{V}{S}    & {}\leftarrow{} & \obsvlabel{V}{S}.    \\
  \elabel{U}{V}{S} & {}\leftarrow{} & \obselabel{U}{V}{S}. \\
  \vlabel{V}{\plus}   ;\vlabel{V}{\minus}    & {}\leftarrow{} & \avertex{V}.  \\
  \elabel{U}{V}{\plus};\elabel{U}{V}{\minus} & {}\leftarrow{} & \aedge{U}{V}. \\[1mm]
  \contrary{U}{V}  & {}\leftarrow{} &
                     \elabel{U}{V}{\minus}, \vlabel{U}{S}, \vlabel{V}{S}.
  \\
  \contrary{U}{V}  & {}\leftarrow{} &
                     \elabel{U}{V}{\plus},  \vlabel{U}{S}, \vlabel{V}{T}, S\neq T.
  \\[1mm]
  \bottom          & {}\leftarrow{} &
                     \activ{V}, \contrary{U}{V}:\edge{U}{V}. \\[1mm]
  \vlabel{V}{\plus}  & {}\leftarrow{} &
                       \bottom, \vertex{V}.
  \\
  \vlabel{V}{\minus} & {}\leftarrow{} &
                       \bottom, \vertex{V}.
  \\
  \elabel{U}{V}{\plus}  & {}\leftarrow{} &
                          \bottom, \edge{U}{V}.
  \\
  \elabel{U}{V}{\minus} & {}\leftarrow{} &
                          \bottom, \edge{U}{V}.
  \end{array}  
\end{equation*}
As shown above,
any model~$Y\subseteq X$ of~$P^X$ must necessarily include
$\obsvlabel{i}{s}$ if $\mu(i)=\nolinebreak s$,
$\obselabel{j}{i}{s}$ if $\sigma(j,i)=s$,
$\avertex{i}$ if $i\in W$ or $(i {\,\rightarrow\,} k)\in E$ for some $k\in W$,
$\aedge{j}{i}$ if $(j {\,\rightarrow\,} i)\in E$ for some $i\in W$, and
$\activ{i}$ if $i\in W$.
Proceeding by proof by contradiction,
assume that there is a model~$Y\subset X$ of~$P^X$ such that
$\vlabel{i}{s}$, $\elabel{j}{i}{s}$, or $\contrary{j}{i}$ is not in~$Y$
for some $i\in V$ or $(j {\,\rightarrow\,} i)\in E$, respectively, and $s\in\{\plus,\minus\}$.
From the previous considerations and the first two rules repeated above,
we know that $\vlabel{i}{s}$ and $\elabel{j}{i}{s}$ must belong to~$Y$
if $\mu(i)=s$ or $\sigma(j,i)=s$, respectively.
Furthermore, the third rule necessitates
$\{\vlabel{i}{\plus},\vlabel{i}{\minus}\}\cap Y\neq\emptyset$
for every $i\in W$ or $i\in V$ such that $(i {\,\rightarrow\,} k)\in E$ for some $k\in W$,
and the fourth rule implies
$\{\elabel{j}{i}{\plus},\elabel{j}{i}{\minus}\}\cap Y\neq\emptyset$
for every $(j {\,\rightarrow\,} i)\in E$ such that $i\in W$.
In view of the last four rules, we immediately conclude that $\bottom\notin Y$,
which in turn implies that, for every~$i\in W$,
there is some $(j {\,\rightarrow\,} i)\in E$ such that
$\contrary{j}{i}$ does not belong to~$Y$.
Comparing the rules defining \atomfont{opposite},
the exclusion of $\contrary{j}{i}$
is possible only if~$Y$ does not include
$\elabel{j}{i}{r}$, $\vlabel{j}{s}$, and $\vlabel{i}{t}$
such that $t\neq sr$.
As we have shown above that some atoms
$\elabel{j}{i}{r}$, $\vlabel{j}{s}$, and $\vlabel{i}{t}$
for $r,s,t\in\{\plus,\minus\}$ must belong to~$Y$,
we can now conclude that~$t=sr$ holds and that the atoms over
predicates \atomfont{labelE} and \atomfont{labelV} in~$Y$
define (partial) labelings~$\sigma'$ and~$\mu'$ by:
\begin{itemize}
\item For every $i\in W$, pick some edge $(j {\,\rightarrow\,} i)\in E$
      such that $\contrary{j}{i}$ does not belong to~$Y$, and
      let $\sigma'(j,i)=r$ if $\elabel{j}{i}{r}\in Y$,
      $\mu'(j)=s$ if $\vlabel{j}{s}\in Y$, and
      $\mu'(i)=t$ if $\vlabel{i}{t}\in Y$.
\end{itemize}
As we have seen above, such an edge $(j {\,\rightarrow\,} i)\in E$ exists
for every $i\in W$, and the fact that $t\neq sr$ is not obtained for atoms
$\elabel{j}{i}{r}$, $\vlabel{j}{s}$, and $\vlabel{i}{t}$ in~$Y$ implies
that $\sigma'$ and $\mu'$ assign
unique labels to $(j {\,\rightarrow\,} i)$, $j$, and~$i$, respectively.
When we totalize $\sigma'$ and $\mu'$ by setting
$\sigma'(j,i)=\sigma(j,i)$ and $\mu'(i)=\mu(i)$
if $\sigma(j,i)$ or $\mu(i)$, respectively, is defined,
and $\sigma'(j,i)=\plus$ as well as $\mu'(i)=\plus$ 
for all remaining
edges in~$E$ and vertices in~$V$,
we obtain witnessing labelings for~$W$.
But this is a contradiction to the fact that~$W$ is a MIC,
which allows us to conclude that
there cannot be any model~$Y\subset X$ of~$P^X$ that omits
$\vlabel{i}{s}$, $\elabel{j}{i}{s}$, or $\contrary{j}{i}$ 
for some $i\in V$ or $(j {\,\rightarrow\,} i)\in E$, respectively, and $s\in\{\plus,\minus\}$.

To conclude the proof that~$X$ is a $\subseteq$-minimal model of~$P^X$,
note that the integrity constraint
\begin{equation*}
\begin{array}{@{{}\leftarrow{}}l}
  \naf{\bottom}.
\end{array}
\end{equation*}
from~(\ref{eq:inconsistency}) does not contribute any rule to~$P^X$
because $\bottom\in X$.
We have now investigated all rules in $P_D\cup\tau((V,E,\sigma),\mu)$
and shown that their ground instances in~$P^X$ are satisfied by~$X$.
Furthermore, we have checked for all atoms in~$X$ that they cannot be
excluded in any model $Y\subset X$ of~$P^X$.
That is, $X$ is indeed a $\subseteq$-minimal model of~$P^X$
and thus an answer set of $P_D\cup\tau((V,E,\sigma),\mu)$.
\hfill%
\end{proof}


\paragraph{Acknowledgments.}

Philippe Veber was supported by a grant from DAAD. 
This work was partially funded by the GoFORSYS project\footnote{\url{http://www.goforsys.org}} (Grant~0313924).
The authors are grateful to Carito Guziolowski for profitable discussions and
for providing them with the data on yeast,
and to Roland Kaminski for helpful comments on encodings.




\end{document}
